\definecolor{darkgreen}{rgb}{0, .5, 0}
\definecolor{darkred}{rgb}{.5, 0, 0}
\definecolor{amethyst}{rgb}{0.6, 0.4, 0.8}
\definecolor{blue-violet}{rgb}{0.54, 0.17, 0.89}
\definecolor{orange(colorwheel)}{rgb}{1.0, 0.5, 0.0}
\theoremstyle{plain}
\newtheorem{theorem}{Theorem}[section]
\theoremstyle{definition} %%
\newcommand{\E}{{\mathbb{E}}}
\newcommand{\Var}{{\mathbb{V}ar}}
\newcommand{\Cov}{{\mathbb{C}ov}}
\newcommand{\classltwoone}{\ensuremath{\mathbb{L}_{T}^{2, 1}}}
\newcommand{\classloneone}{\ensuremath{\mathbb{L}_{T}^{1, 1}}}
\newcommand{\classlfourone}{\ensuremath{\mathbb{L}_{T}^{4, 1}}}
\newcommand{\condE}[1]{\ensuremath{\mathbb{E}[#1 | \mathcal{F}_{k}]}}
\newcommand{\condEplus}[1]{\ensuremath{\mathbb{E}[#1 | \mathcal{F}_{k+1}]}}
\newcommand{\condVar}[1]{\ensuremath{\mathbb{V}ar(#1 | \mathcal{F}_{k})}}
\newcommand{\condCov}[1]{\ensuremath{\mathbb{C}ov(#1 | \mathcal{F}_{k})}}
\newcommand{\Pas}{\ensuremath{\mathbb{P}\text{ -- a.s.}}}
\newcommand{\e}{\ensuremath{e}}
\newtheorem{prop}{Proposition}[section]
\newtheorem{theo}[prop]{Theorem}
\newtheorem{lem}[prop]{Lemma}
\newtheorem{cor}[prop]{Corollary}
\newtheorem{defi}[prop]{Definition}
\newtheorem{remark}[theorem]{Remark}
\def\P{{\mathbb P}}
\providecommand{\abs}[1]{\ensuremath{\left\lvert#1\right\rvert}}
\providecommand{\norm}[1]{\ensuremath{\lVert#1\rVert}}
\providecommand{\diag}{\ensuremath{{\rm diag}}}
\begin{document}
\singlespacing %
\title{\LARGE\bf {Local risk-minimization with multiple assets under illiquidity with applications in energy markets}}%
\author{Panagiotis Christodoulou\thanks{Department of Mathematics, University of Munich, Theresienstra\ss{}e 39, 80333 Munich, Germany. Email: christo@math.lmu.de} \and Nils Detering\thanks{Department of Statistics \& Applied Probability, University of California, Santa Barbara, CA 93106-3110, USA. Email: detering@pstat.ucsb.edu}\and Thilo Meyer-Brandis\thanks{Department of Mathematics, University of Munich, Theresienstra\ss{}e 39, 80333 Munich, Germany. Email: meyerbr@math.lmu.de}%
 }
 \date{February 28, 2018}
\maketitle

\begin{abstract}
We propose a hedging approach for general contingent claims when liquidity is
a concern and trading is subject to transaction cost. Multiple assets
with different liquidity levels are available for hedging. Our
risk criterion targets a tradeoff between minimizing the
risk against fluctuations in the stock price and incurring low
liquidity costs. Following \cite{cetin.jarrow.protter:2004} we work in an
arbitrage-free setting assuming a supply curve for each asset. In
discrete time, following the ideas in Schweizer et. al. \citep{schweizer:1988,lamberton.pham.schweizer:1998} we prove the
existence of a locally risk-minimizing strategy under mild conditions
on the price process. Under stochastic and time-dependent liquidity risk we give a closed-form solution for an optimal strategy in the case of a linear supply curve model. 
%Our results extend the work of Schweizer \cite{schweizer:1988}\nils{da haben wir doch keinen Zugriff drauf oder? Gibt es nicht ein paper dazu?} \panos{Leider, da haben wir keinen Zugriff und darein Schweizer macht das sowohl in diskreter Zeit als auch in stetiger Zeit (steht in seiner Abstrakt). Ein paper dazu von Schweizer habe ich nicht gefunden. Nur das paper Schweizer.1993.Option Hedging for semimartingales. Das ist in stetiger Zeit und basiert auf \cite{schweizer:1988} welches er erweitert.} in two
%directions. First by considering a multidimensional asset price.
%Second by accounting for liquidity costs.
Finally we show how our hedging method can be applied in energy
markets where futures with different maturities are available for
trading. The futures closest to their delivery period are usually the
most liquid but depending on the contingent claim not necessary
optimal in terms of hedging. In a simulation study we investigate
this tradeoff and compare the resulting hedge strategies with the
classical ones. 
\end{abstract}

\medskip
\noindent\textit{JEL Classification:} G11, G13, C61

%\medskip
%\noindent\textit{2010 Mathematics Subject Classification:} XXXXX

\medskip
\noindent\textit{Keywords:} local risk-minimization, quadratic hedging, liquidity cost, energy markets, multiple assets

%Introduction
\section{Introduction}\label{sec:intro}
%
%\panos{Panos comments}
%\\
%\nils{Nils comments}
%\\
%\thilo{Thilo comments}
%\\ \\

In this paper we deal with the problem of hedging general contingent claims under illiquidity. Stochastic liquidity cost is incurred by hedging with multiple assets with possibly different levels of liquidity. Our main motivation comes from energy markets. Consider for example an agent hedging an Asian-style call option written on the average spot price $S = (S_{u})_{0 \leq u \leq T_2}$ of energy. Such an option has the payoff
\begin{equation}\label{asian:payoff}
 \left( \frac{1}{T_{2} - T_{1}} \int_{T_{1}}^{T_2} S_{u} du - K \right)^{+}
\end{equation}
for a strike $K$, with a so-called \emph{delivery period} $[T_{1}, T_2]$. The instruments available for hedging such options are futures delivering over the same or a different time period. Hedging is a challenge though since these futures are either not trading in their delivery period at all \citep[such a setting was considered in][]{BenthDetering} or are very illiquid such that hedging incurs large transaction costs. In addition, futures are usually very illiquid for $t\ll T_1$, so their liquidity has a delicate time-structure. In the market, multiple futures with different delivery periods (week, month, quarter, year) and different levels of liquidity are available as hedging instruments. The results of our paper can be applied to hedging options in energy markets with multiple futures by accounting for their different levels of liquidity. The Asian-style option is a particular example but other payoffs as for example Quanto options \cite[see][]{benthQuanto} can be handle
 d equally.
\par
The effect of illiquidity on hedging and optimal trading is a very active research topic in mathematical finance. Still, there is neither an agreed notion of liquidity risk, which is roughly speaking the additional risk due to timing and size of a trade, nor a standard approach for hedging under liquidity costs. A good overview on existing liquidity models in continuous and discrete time can be found in \cite{goekay.roch.soner:2010}.
\par
There are basically two different approaches how to model liquidity risk. The first one is a class of models incorporating feedback effects, that is when the trade volume has a lasting impact on the asset price \citep[see e.g.][]{bank.baum:2004}, also known as permanent price impact or lasting impact. The second approach considers smaller agents whose transactions have no lasting impact on the price of the underlying \cite[see e.g.][and the references therein]{cetin.jarrow.protter:2004}. 
\par
In this paper we stay within the small agents approach and understand liquidity costs as the transaction costs incurred from the hedging strategy by trading through a fast recovering limit order book. In particular we follow the arbitrage-free model suggested by \cite{cetin.jarrow.protter:2004}, who introduced the so called \textit{supply curve} model. There, the asset price is a function of the trade size and the authors developed an extended arbitrage pricing theory.
\par
In addition to the vast majority of papers on illiquid markets dealing with optimal execution, there are also many papers investigating hedging under illiquidity, most of them consider super-replication \cite[see for example][]{bank.baum:2004, cetin.soner.touzi:2010, goekay.soner:2012}. As super-replication is often too expensive we use a quadratic risk criterion. In the classical frictionless theory without transaction costs, there are two main approaches for quadratic hedging \cite[see][for a survey]{schweizer:2001}. First, the \emph{mean-variance} approach, which was introduced in \cite{foellmer.sondermann:1986}, relies on self-financing strategies which produce as final outcome the portfolio $V_{T} := c + \int_{0}^{T} X_{u} dS_{u}$ for some initial capital $c \in \mathbb{R}$ and a trading strategy $X$ in the risky asset $S = (S_{u})_{0 \leq u \leq T}$. The goal of this method is to look for the best approximation of a contingent claim $H$ by the terminal portfolio value $
 V_{T}$, that is minimizing the quadratic hedge error
\begin{equation}
  \E \left[ \left( H - \left(  c + \int_{0}^{T} X_{u} dS_{u} \right) \right)^{2} \right]
\end{equation}
under the real world probability measure and over an appropriately constrained set of strategies. This is also called global risk-minimization. In discrete time, this problem was solved in \cite{schweizer:1995} in a general setting and relaxing the assumptions imposed earlier in \cite{schael:1994}. Later on, this was extended to the multidimensional case with proportional transaction cost in \cite{motoczynski:2000} and \cite{beutner:2007} where the authors show existence of an optimal strategy. The papers \cite{rogers.singh:2010}, \cite{agliardi.gencay:2014} and \cite{bank.soner.voss:2017} can be seen as an extension under illiquidity of the mean-variance hedging criterion, which is based on minimizing the global risk against random fluctuations of the stock price incurring low liquidity costs.
\par
A second quadratic method for hedging in an incomplete market is \emph{local risk-minimization} first introduced in \cite{schweizer:1988} and later extended in \cite{lamberton.pham.schweizer:1998} by accounting for proportional transaction costs in the discrete time case. For discrete time $k=0, \dots, T$ this method does not insist on the self-financing condition but instead the goal is to find a strategy $(X,Y)=(X_{k+1}, Y_{k})_{k=0, \dots, T}$ with book value $V_{k} = X_{k+1}S_{k} + Y_{k}$ (the risk-free asset is assumed constantly equal to $1$) such that  $H = V_{T}$, the cost process $C_{k} = V_{k} - \sum_{m=1}^{k} X_{m} (S_{m} - S_{m-1})$ is a martingale and the variance of the incremental cost is minimized. Here, the strategy $X_{k+1}$ represents the number of shares held in the risky asset and $Y_{k}$ the units held in the bank account in the time interval $(k, k+1]$. In our paper, we extend the work of \cite{schweizer:1988} by considering a multidimensional asset pri
 ce process in discrete time. Secondly we extend the local risk-minimizing quadratic criterion to an illiquid market in the spirit of \cite{rogers.singh:2010} and \cite{agliardi.gencay:2014}. 
\par
In contrast to the existing literature our approach and setting is designed to address the above mentioned problem in energy markets. For this we need a multi-dimensional setup to allow for hedging with multiple futures. Second, the assets price dynamics has to be general enough to capture the characteristics of energy markets and we need a time dependent liquidity structure. Our risk criterion is chosen such that it allows for more explicit formulas for the optimal strategy than existing approaches. Furthermore, as shown in a case study, they are also  computationally tractable. Our main result is the existence of a locally risk-minimizing strategy under illiquidity requiring only mild conditions on the asset price. These conditions are quite technical but they can be reduced to conditions on the covariance matrix of the price process, which can be checked easily for most processes relevant in practice. Furthermore, the strategies can be calculated backwards in time by using
  a \emph{least-squares Monte Carlo} algorithm. %These conditions arise due to the liquidity costs in our setting which can be thought as quadratic transaction costs. 
\par
%Our setup allows us to explore the tradeoff between liquidity and hedging performance in an illiquid market. Hedging for example an Asian-style option of the form (\ref{asian:payoff}) using as hedging instruments a long future and a short one, where the long is more correlated with the option but less liquid than the short one, which is less correlated with the option than the long future, then it is not clear if the hedger needs to use more the shorter or the longer future for hedging the option. 
\par
Our setup allows us to explore the tradeoff between liquidity and hedge quality of available hedge instruments. For example, consider the Asian-style option (\ref{asian:payoff}) in a market where different futures with different delivery periods and different liquidity levels are available for hedging. In such a situation, there are futures with delivery period well matching the delivery period of the option payout resulting in a strong correlation between the future and the option to hedge. However, in certain time periods these {\em hedge-optimal} futures are very illiquid and futures which are less correlated but more liquid might be better for hedging. Our framework allows us to explore this tradeoff and provide market-makers with a more profound tool for risk management.
\par
The paper is structured as follows. Section \ref{sec:model} explains the model framework and describes the basic problem. In Section \ref{sec:linear:supply} we focus on a linear supply curve and impose necessary assumptions on the price process to prove our main existence Theorem~\ref{theo:existence}. Sufficient conditions to check the assumptions are also provided. Section \ref{sec:applications} considers an application to energy markets. Optimal strategies under illiquidity are simulated by means of a least-squares Monte Carlo method. This allows us to explore the tradeoff between liquidity and hedging performance of futures available for hedging. 
%Finally, Section \ref{sec:conclusions} concludes.

%The Model
\section{The Model}
\label{sec:model}

Given a filtered probability space $(\Omega, \mathcal{F}, \mathbb{F}, \mathbb{P})$ consider a financial market consisting of $d+1$ assets.  
  We denote by $\mathbb{P}$ the {\em objective} probability measure and by $\mathbb{F}$ the filtration $\mathbb{F} = (\mathcal{F}_k)_{k = 0,1, \dots, T} $, which describes the flow of information. We shall use the indices $k = 0,1, \dots, T$ to refer to a discrete time grid with time points $t_{0} < t_{1} < \dots < t_{T}$ and sometimes use both interchangeably. An $\mathbb{F}$-adapted, nonnegative $d$-dimensional stochastic process $S = (S_{k})_{k=0,1, \dots, T}$ describes the discounted price of $d$ risky assets (typically futures or stocks). We use $S_{k}^{j}$ to refer to the price of asset $j$ at time $t_{k}$.  Furthermore, a riskless asset (typically a bond) exists whose discounted price is constantly $1$.
  \par
  Similar as in  \cite{cetin.jarrow.protter:2004} we assume that a hedger observes an exogenously given nonnegative $d$-dimensional \textit{supply curve} $S_{k}(x)$ where $S_{k}(x)^{j} := S_{k}^{j}(x^{j})$ represents the $j$-th stock price per share at time $k$ for the purchase (if $x^{j} > 0$) or sale (if $x^{j} < 0$) of $|x^{j}|$ shares. We call $S(0) = S$ the \textit{marginal price}. 
  The supply curve determines the actual price that market participants pay or receive respectively for a transaction of size $x$ at time $k$. This curve is also assumed to be independent of the participants past actions which implies no lasting impact of the trading strategy on the supply curve. The only assumption that we need for the moment is measurability of the supply curve w.r.t. the filtration $\mathbb{F}$ and that it is non-decreasing in the number of shares $x$, i.e. for each $k$ and $j$, $S_{k}(x)^{j} \leq S_{k}(y)^{j}, \, \mathbb{P}-a.s.$ for $x^{j} \leq y^{j}$. This will ensure that the liquidity costs are non-negative.
\par
In \cite{cetin.jarrow.protter:2004} the authors develop a continuous time version of such a supply curve model and an extended arbitrage pricing theory. They show that the existence of an equivalent local martingale measure $\mathbb{Q}$ for the marginal price process $S$ rules out arbitrage. A similar results can easily be seen to hold in our setting as liquidity cost is always positive. 
\par
 However, even a unique martingale measure (and state space restrictions in a discrete setting) do not necessarily ensure completeness if one incorporates illiquidity. Since we cannot hedge perfectly, we want to minimize locally the risk of hedging under illiquidity according to an optimality criterion introduced in Definition \ref{defi:local risk minimizing strategy under illiquidity}.
\par
In the following we shall consider an investor who aims at hedging an $\mathcal{F}_T$-measurable claim $H$. For $x\in \mathbb{R}^d$, let $\abs{x}$ denote the Euclidian norm and $x^{*}$ the transpose of $x$. Further, $\langle x , y \rangle$ denotes the inner product of $x,y\in \mathbb{R}^d$. Adapting \cite{schweizer:1988} we define the investor's possible trading strategies. For this we denote by $\mathbb{L}_{T}^{p} (\mathbb{R}^d)$ (in short $\mathbb{L}_{T}^{p, d}$), 
    the space of all $\mathcal{F}_T$-measurable
    random variables $Z : \Omega \to \mathbb{R}^d$ satisfying 
    $\norm{Z}^{p} = \mathbb{E}( \abs{Z}^{p} ) < \infty$. We abbreviate $\Delta S_{k} := S_{k} - S_{k-1}$. Furthermore, we denote by $\Theta_{d}(S)$ the space of all $\mathbb{R}^{d}$-valued predictable strategies $X = (X_{k})_{k=1,2, \dots, T+1}$ so that $ X_{k}^{*} \Delta S_{k} \in \mathbb{L}_{T}^{2, 1}$ and $\Delta X_{k+1}^{*} [ S_{k}(\Delta X_{k+1}) - S_{k}(0) ] \in \mathbb{L}_{T}^{1, 1}$ for $k=1, 2, \dots, T$.% and by $\hat \Theta_{d}(S)$ the space of all $\mathbb{R}^{d}$-valued predictable strategies $X = (X_{k})_{k=1,2, \dots, T+1}$ so that $ X_{k}^{*} \Delta S_{k} \in \mathbb{L}_{T}^{2, 1}$ for $k=1, 2, \dots, T$. 
  
  \begin{defi}
    \label{defi:trading strategy}
    A pair $\varphi = (X,Y) $ is called a \textit{trading strategy} if:
    \begin{enumerate}[label=(\roman*)]
%    \item $X = (X_{k})_{k=1,2, \dots, T+1}$ is a $d$-dimensional real-valued $\mathbb{F}$-predictable process. 
    \item $Y = (Y_{k})_{k=0,1, \dots, T}$ is a real-valued $\mathbb{F}$-adapted process. 
    \item \label{item:trading strategy L2int}
    $X\in \Theta_{d}(S)$.
    %$ X_{k}^{*} \Delta S_{k} \in \mathbb{L}_{T}^{2, d}$ for $k=1, 2, \dots, T$.
    \item $V_{k}(\varphi) := X_{k+1}^{*} S_{k} + Y_{k} \in \mathbb{L}_{T}^{2, 1}$ for $k=0, 1, \dots, T$.
    \end{enumerate}
  \end{defi}
%\nils{deleted (Definition \ref{defi:trading strategy} sagt nur was ueber Paare $\varphi = (X,Y) $ aus: Definition \ref{defi:trading strategy} \ref{item:trading strategy L2int} holds.}
    \par
    We call $V_{k}(\varphi)$ the \textit{marked-to-market value} or the \textit{book value} of the portfolio $(X_{k+1},Y_{k})$ at time $k$. We interpret $X_{k+1}^{j}$ as the number of shares held in the risky asset $S_{k}^{j}$ and $Y_{k}$ the units held in the riskless asset (bank account) in the time interval $( k, k+1 ]$. Note that with a non-flat supply curve, there is no unique value for a portfolio, as the value that can be realized depends on the liquidation strategy.

    % Cost and Risk process
    \subsection{Cost And Risk process}
    \label{sec:cost and risk process}

   Consider an $\mathbb{L}_{T}^{2, 1}$-\textit{contingent claim} of the form $H = \bar X_{T+1}^{*} S_{T} + \bar Y_{T}$, where $ \bar X_{T+1}^{*}S_{T} \in \mathbb{L}_{T}^{2, 1}$, $ \bar X_{T+1} \in \mathbb{L}_{T}^{2, d}$ and the components of the pair $(\bar X_{T+1}, \bar Y_{T})$ are $\mathcal{F}_{T}$-measurable random variables describing the quantity in risky assets and bonds respectively that the option seller is committed to provide to the buyer at the expiration date $T$ of the financial contract $H$.\footnote{For example, in the $1$-dimensional case one could set $\bar X_{T+1} = 0$ and $Y_{T}=(S_{T} - K)^{+}$ for a call option with strike $K$ without physical delivery.}
   \par
   Assuming that at time $k \in \{ 1, 2, \dots, T \}$ an order of $\Delta Y_{k}:= Y_{k} - Y_{k-1}$ bonds and $\Delta X_{k+1}:=X_{k+1} - X_{k}$ shares is made, then the \textit{total outlay} (under liquidity costs) is
   \begin{equation}
     \label{eq:total outlay}
       \Delta Y_{k} + \Delta X_{k+1}^{*} S_{k}(\Delta X_{k+1})
         = \Delta Y_{k} + \Delta X_{k+1}^{*} S_{k} + \Delta X_{k+1}^{*} [ S_{k}(\Delta X_{k+1}) - S_{k}(0) ].
   \end{equation}
Note that $S_{k}(0)= S_{k}$ is the marginal price, such that the last term can be seen as the transaction cost resulting from market illiquidity. Furthermore, using the definition of the book value the previous equation can be written as
   \begin{equation}
     \label{eq:total outlay 2}
       \Delta Y_{k} + \Delta X_{k+1}^{*} S_{k}(\Delta X_{k+1})
         = \Delta V_{k} (\varphi) - X_{k}^{*} \Delta S_{k} + \Delta X_{k+1}^{*} [ S_{k}(\Delta X_{k+1}) - S_{k}(0) ] 
   \end{equation}
   For a self-financing trading strategy the total outlay at time $k$ would be zero.
   \\ 
   
   Now, by defining $\hat C_{0}(\varphi) := V_{0}(\varphi)$, the \textit{initial cost}\footnote{For simplicity  we do not account for any liquidity costs paid to set up the initial portfolio.}, we can define the \textit{cost process under illiquidity} $\hat C(\varphi) = (\hat C_{k}(\varphi))_{k=0,1, \dots, T}$ as 
   \begin{equation}
       \hat C_{k}(\varphi) 
         := \sum_{m=1}^{k} \Delta Y_{m} + \sum_{m=1}^{k} \Delta X_{m+1}^{*} S_{m}(\Delta X_{m+1}) + V_{0}(\varphi).
   \end{equation}
   It quantifies the cumulative costs of the strategy $\varphi = (X, Y)$. A simple calculation using the definition of $V_{k} (\varphi)$ shows that 
\begin{equation} 
\label{eq:cost process} 
\hat C_{k}(\varphi) =V_{k} (\varphi) - \sum_{m=1}^{k} X_{m}^{*} \Delta S_{m} + \sum_{m=1}^{k} \Delta X_{m+1}^{*} [ S_{m}(\Delta X_{m+1}) - S_{m}(0) ],
\end{equation}
which will be needed later. If we can ensure that the cost process is square integrable, then we can define the \textit{quadratic risk process under illiquidity} $\hat R (\varphi)= (\hat R_{k}(\varphi))_{k=0,1, \dots, T}$ by
   \begin{equation}
     \label{eq:quadratic risk process}
       \hat R_{k}(\varphi)
         := \condE{ (\hat C_{T}(\varphi) - \hat C_{k}(\varphi) )^{2} }   \,.
   \end{equation}
   At this point we would like to mention that the classical local risk-minimization approach aims at finding a locally risk-minimizing strategy $\varphi = (X, Y)$ such that $V_{T}(\varphi) = H$ with $X_{T+1} = \bar X_{T+1}$ and $Y_{T} = \bar Y_{T}$ (see Section \ref{sec:description of the basic problem} for more details).
   \par
   We denote by $C(\varphi) = (C_{k}(\varphi))_{k=0,1, \dots, T}$ the classical cost process without liquidity costs (i.e., $S(x)=S(0)$), that is
      \begin{equation}
 C_{k}(\varphi) 
         := V_{k} (\varphi) - \sum_{m=1}^{k} X_{m}^{*} \Delta S_{m},
      \end{equation}
   and obtain the relation 
   \begin{equation}
     \hat C_{T}(\varphi) - \hat C_{k}(\varphi)
       = C_{T}(\varphi) - C_{k}(\varphi) + \sum_{m=k+1}^{T} \Delta X_{m+1}^{*} [ S_{m}(\Delta X_{m+1}) - S_{m}(0) ].
   \end{equation}
   Furthermore, we denote by $ R (\varphi)= (R_{k}(\varphi))_{k=0,1, \dots, T}$ the classical risk process, defined as in (\ref{eq:quadratic risk process}) but with $\hat C$ replaced by $C$.
   
   One could also define the \textit{linear risk process under illiquidity} 
     \begin{equation}
     \label{eq:linear risk process}
       \bar R_{k}(\varphi)
         := \condE{ | \hat C_{T}(\varphi) - \hat C_{k}(\varphi) | }
   \end{equation}
   which is motivated in \cite{coleman.li.patron:2003}. 
    \begin{remark}
   A linear local risk-minimization criterion might be more natural than a quadratic one from a financial perspective. The $L^2$-norm overemphasizes large values even if these values occur with small probability. Nevertheless, by minimizing over the $L^2$-norm, it is possible to get explicit results.
   \end{remark}
   A combination of the two, that means measuring the quadratic difference of the classical cost process and linearly the liquidity costs, yields the \textit{quadratic-linear risk process under illiquidity}.
   \begin{equation}
     \label{eq:quadratic linear risk process}
       T_{k}(\varphi)
         := \condE{ ( C_{T}(\varphi) - C_{k}(\varphi) )^{2} }
             + \condE{ \sum_{m=k+1}^{T} \Delta X_{m+1}^{*} [ S_{m}(\Delta X_{m+1}) - S_{m}(0) ] }.
   \end{equation}
   As we will see later on, by minimizing the expression in (\ref{eq:quadratic linear risk process}) we will be able to construct an explicit representation of the LRM-strategy under illiquidity where large values of liquidity costs are not overemphasized by the $L^2$-norm.
   %Description of the Basic problem
   \subsection{Description of the basic problem}
   \label{sec:description of the basic problem}

   The aim of the classical local risk-minimization is to minimize locally the conditional mean square incremental cost of a strategy. Our criterion is targeting on minimizing locally the risk against random fluctuations of the stock price but at the same time reducing liquidity costs. It balances low liquidity costs against poor replication. Such an approach is similar to \cite{agliardi.gencay:2014} or \cite{rogers.singh:2010} and yields a tractable problem. 
   
In minimizing locally the risk process at time $k$, we only minimize in $Y_{k}$ and $X_{k+1}$ in order to make the current optimal choice of the strategy by fixing the holdings at past or future times. Definition \ref{defi:local perturbation} and Definition \ref{defi:local risk minimizing strategy under illiquidity} give us the optimality criterion that the minimization problem is based on.
   \begin{defi}
     \label{defi:local perturbation}
       A \textit{local perturbation} $ \varphi ' = ( X', Y')$ of a strategy $\varphi = (X, Y)$ at time $k \in \{ 0,1, \dots, T-1 \}$ is a trading strategy such that $X_{m+1} = X'_{m+1}$ and $Y_{m} = Y'_{m}$ for all $m \neq k$. 
   \end{defi}
   By a slight abuse of notation let 
   \begin{equation}
     \label{eq:quadratic linear risk process new}
       T_{k}^{\alpha}(\varphi)
         := \condE{ ( C_{T}(\varphi) - C_{k}(\varphi) )^{2} }
             + \alpha \condE{ \Delta X_{k+2}^{*} [ S_{k+1}(\Delta X_{k+2}) - S_{k+1}(0) ] }.
   \end{equation}
   We specify in Definition \ref{defi:local risk minimizing strategy under illiquidity} what we call local risk minimizing (LRM) strategy under illiquidity for some $\alpha \in \mathbb{R^{+}}$.
   \begin{defi}
     \label{defi:local risk minimizing strategy under illiquidity}
       A trading strategy $\varphi = (X, Y)$ is called \textit{locally risk-minimizing under illiquidity} if
       \begin{equation}
         T_{k}^{\alpha} (\varphi)
           \leq T_{k}^{\alpha} (\varphi ' )
           \quad \quad \Pas
       \end{equation}
       for any time $k \in \{ 0,1, \dots, T-1 \}$ and any local perturbation $\varphi '$ of $\varphi$ at time $k$.
   \end{defi}
     Definition \ref{defi:local risk minimizing strategy under illiquidity} assumes that for any strategy the classical cost process $C$ is square-integrable and the liquidity costs are integrable. By Definition \ref{defi:trading strategy} this is ensured. Note also that in Definition \ref{defi:local risk minimizing strategy under illiquidity} we have only taken into account the liquidity costs at the current time. This is equivalent to minimizing over $T_{k}$ in equation (\ref{eq:quadratic linear risk process}) since we minimize only locally.
   \begin{remark}
   The choice $\alpha = 1$ represents an equal concern about  the risk to be hedged as incurred by market price fluctuations and the cost of hedging incurred by liquidity costs. Otherwise, $\alpha < 1$ means a major risk aversion to the risk of miss-hedging and $\alpha > 1$ a major risk aversion to the cost of illiquidity. One could also generalize by having a deterministic $\mathbb{R}$-valued process $\alpha = (\alpha_{k})_{k=0, 1, \dots, T}$ and trivially our results will still hold.
   \end{remark}     
So in the following we assume $\alpha$  is given and we aim at finding a locally risk-minimizing strategy $\varphi = (X, Y)$ under illiquidity such that $V_{T}(\varphi) = H$ with $X_{T+1} = \bar X_{T+1}$ and $Y_{T} = \bar Y_{T}$.
   Some useful Lemmas follow, which even in the multi-dimensional case, can be shown by means very similar to those used in \cite{lamberton.pham.schweizer:1998}. For completeness we provide their proofs in Appendix~\ref{appendix}.
   
The first Lemma shows that a main property of a local risk-minimizing strategy, namely that its cost process is a martingale, generalizes to our setting. The reason is that a strategy $\varphi$ can be perturbed to $\varphi '$ such that $C(\varphi ')$ is a martingale by changing only the $\mathcal{F}_{k}$-measurable risk free investment. This in turn reduces the first term in (\ref{eq:quadratic linear risk process new}) but leaves the second term unchanged.
   
   \begin{lem}
     \label{lem:cost process martingale}
     For a LRM-strategy $\varphi$ under illiquidity, the cost process $C(\varphi)$ is a martingale. Furthermore, from the martingale property of the cost process we get the representation, 
     \begin{equation}
     \label{eq:cost process martingale}
       R_{k}(\varphi)
         = \condE{ R_{k+1}(\varphi) }
           + \condVar{ \Delta C_{k+1}(\varphi) }
           \quad \quad \Pas
     \end{equation}
     for $k = 0,1, \dots, T-1$.
   \end{lem}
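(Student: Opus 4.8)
The plan is to obtain the martingale property from optimality by using the simplest possible local perturbations --- those that move only the risk-free position $Y_k$ --- and then to read off \eqref{eq:cost process martingale} by a standard orthogonal decomposition of the square. The key observation is that in $T_k^{\alpha}$, see \eqref{eq:quadratic linear risk process new}, the liquidity term involves only $X_{k+1}$ and $X_{k+2}$, so it is insensitive to a perturbation of $Y_k$ alone.

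\textbf{Step 1 (martingale property).} Fix $k\in\{0,\dots,T-1\}$ and let $\varphi=(X,Y)$ be a LRM-strategy under illiquidity. I would first note that $C_k(\varphi)=V_k(\varphi)-\sum_{m=1}^k X_m^*\Delta S_m$ is $\mathcal{F}_k$-measurable and, by Definition~\ref{defi:trading strategy}, lies in $\mathbb{L}_T^{2,1}$, so $Z:=C_T(\varphi)-C_k(\varphi)\in\mathbb{L}_T^{2,1}$. For an $\mathcal{F}_k$-measurable $\vartheta\in\mathbb{L}_T^{2,1}$, consider $\varphi'=(X,Y')$ with $Y'_m=Y_m$ for $m\ne k$ and $Y'_k=Y_k+\vartheta$; this is again a trading strategy (only the square-integrable time-$k$ bank holding changes) and a local perturbation of $\varphi$ at $k$ in the sense of Definition~\ref{defi:local perturbation}. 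Since $X$ is untouched, $C_T(\varphi')=C_T(\varphi)$, $C_k(\varphi')=C_k(\varphi)+\vartheta$, and the liquidity term of $T_k^{\alpha}$ coincides for $\varphi$ and $\varphi'$, so
$$ T_k^{\alpha}(\varphi')-T_k^{\alpha}(\varphi)=\condE{(Z-\vartheta)^2}-\condE{Z^2}=\vartheta^2-2\vartheta\,\condE{Z}\,. $$
Definition~\ref{defi:local risk minimizing strategy under illiquidity} forces this to be $\ge 0$ $\Pas$ for every such $\vartheta$; choosing $\vartheta=\condE{Z}$ gives $-\condE{Z}^2\ge 0$, hence $\condE{Z}=0$, i.e. $C_k(\varphi)=\condE{C_T(\varphi)}$. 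Since $k$ is arbitrary and $C_T(\varphi)\in\mathbb{L}_T^{2,1}$, the tower property then gives $\condE{C_{k+1}(\varphi)}=C_k(\varphi)$, so $C(\varphi)$ is an $\mathbb{F}$-martingale.

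\textbf{Step 2 (recursion).} With the martingale property in hand, I would decompose $C_T(\varphi)-C_k(\varphi)=\big(C_T(\varphi)-C_{k+1}(\varphi)\big)+\Delta C_{k+1}(\varphi)$, square, and apply $\condE{\cdot}$. The cross term $\condE{(C_T(\varphi)-C_{k+1}(\varphi))\,\Delta C_{k+1}(\varphi)}$ vanishes by conditioning first on $\mathcal{F}_{k+1}$: $\Delta C_{k+1}(\varphi)$ is $\mathcal{F}_{k+1}$-measurable while $\mathbb{E}[C_T(\varphi)-C_{k+1}(\varphi)\mid\mathcal{F}_{k+1}]=0$. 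The term $\condE{(C_T(\varphi)-C_{k+1}(\varphi))^2}$ equals $\condE{R_{k+1}(\varphi)}$ by the tower property, and since $\mathbb{E}[\Delta C_{k+1}(\varphi)\mid\mathcal{F}_k]=0$ the remaining term $\condE{\Delta C_{k+1}(\varphi)^2}$ equals $\condVar{\Delta C_{k+1}(\varphi)}$. Adding these gives \eqref{eq:cost process martingale}.

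\textbf{Expected main obstacle.} The only genuinely delicate point is Step~1: one has to verify that the $Y$-only perturbation is admissible in the sense of Definition~\ref{defi:trading strategy} --- in particular that $V_k(\varphi')=V_k(\varphi)+\vartheta\in\mathbb{L}_T^{2,1}$, which follows from $\vartheta\in\mathbb{L}_T^{2,1}$ --- and, above all, that the liquidity term of $T_k^{\alpha}$ is truly unaffected by changing $Y_k$; this rests on its dependence in \eqref{eq:quadratic linear risk process new} on $X_{k+1},X_{k+2}$ only, and the argument would break for a perturbation that also moved $X_{k+1}$. Once the martingale property is established, Step~2 is routine conditional-expectation bookkeeping.
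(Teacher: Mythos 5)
Your proof is correct and follows essentially the same route as the paper: the martingale property is obtained from the $Y_k$-only perturbation that recenters the conditional expectation of the incremental cost (the paper directly takes $\vartheta=\condE{C_T(\varphi)-C_k(\varphi)}$, which is exactly your optimal choice), using that such a perturbation leaves the liquidity term untouched. Your Step 2 merely writes out the standard orthogonality computation behind the representation \eqref{eq:cost process martingale}, which the paper states without detail.
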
 
   So, for $\varphi$ a LRM-strategy under illiquidity, the quadratic-linear risk process (QLRP) under illiquidity has the representation
   \begin{equation}
   \label{eq:quadratic linear risk process modified}
     T_{k}^{\alpha}(\varphi)
         = \condE{ R_{k+1}(\varphi) }
             + \condVar{ \Delta C_{k+1}(\varphi) }
             + \alpha \condE{ \Delta X_{k+2}^{*} [ S_{k+1}(\Delta X_{k+2}) - S_{k+1}(0) ] }.
   \end{equation}
   \par
   The next lemma provides a representation for the QLRP process of a perturbed strategy.
   
   \begin{lem}
   \label{lem:quadratic linear risk process modified}
     If $C(\varphi)$ is a martingale and $\varphi '$ a local perturbation of $\varphi$ at time $k$ then 
     \begin{align}
     \label{eq:quadratic linear risk process modified}
       T_{k}^{\alpha}(\varphi ')
         = & \condE{ R_{k+1}(\varphi) }
             + \condE{ (\Delta C_{k+1}(\varphi '))^{2} }       \nonumber 
            \\ & + \alpha \condE{ (X_{k+2} - X'_{k+1})^{*} [ S_{k+1}(X_{k+2} - X'_{k+1}) - S_{k+1}(0) ] }   \,.
     \end{align}
   \end{lem}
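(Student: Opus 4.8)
The plan is to isolate the single cost increment that the perturbation actually touches. First I would unwind Definition~\ref{defi:local perturbation}: the conditions $X_{m+1}=X'_{m+1}$ and $Y_m=Y'_m$ for all $m\neq k$ say precisely that $X'_j=X_j$ for every index $j\neq k+1$ and $Y'_j=Y_j$ for every $j\neq k$; since $k\le T-1$ this in particular gives $X'_{T+1}=X_{T+1}$, $Y'_T=Y_T$, $X'_{k+2}=X_{k+2}$ and $Y'_{k+1}=Y_{k+1}$. Feeding this into $C_m(\varphi)=V_m(\varphi)-\sum_{j=1}^m X_j^{*}\Delta S_j$ with $V_m(\varphi)=X_{m+1}^{*}S_m+Y_m$, I would check by direct substitution that $V_T(\varphi')=V_T(\varphi)$, $V_{k+1}(\varphi')=V_{k+1}(\varphi)$ and $\sum_{j=k+2}^{T}X'_j{}^{*}\Delta S_j=\sum_{j=k+2}^{T}X_j^{*}\Delta S_j$, and hence
\[
C_T(\varphi')-C_{k+1}(\varphi')=C_T(\varphi)-C_{k+1}(\varphi),
\qquad\text{so}\qquad
C_T(\varphi')-C_k(\varphi')=\bigl(C_T(\varphi)-C_{k+1}(\varphi)\bigr)+\Delta C_{k+1}(\varphi').
\]

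Next I would square the last identity and apply $\mathbb{E}[\,\cdot\mid\mathcal{F}_k]$, obtaining three terms. For the pure square, the tower property together with the definition of $R_{k+1}$ gives $\mathbb{E}[(C_T(\varphi)-C_{k+1}(\varphi))^2\mid\mathcal{F}_k]=\mathbb{E}[R_{k+1}(\varphi)\mid\mathcal{F}_k]$. The cross term vanishes: $\Delta C_{k+1}(\varphi')$ is $\mathcal{F}_{k+1}$-measurable (it is assembled from $V_{k+1}(\varphi')$, $V_k(\varphi')$, the predictable $X'_{k+1}$ and $\Delta S_{k+1}$), so conditioning on $\mathcal{F}_{k+1}$ first and using that $C(\varphi)$ is a martingale yields $\mathbb{E}[C_T(\varphi)-C_{k+1}(\varphi)\mid\mathcal{F}_{k+1}]=0$; integrability of the product follows by Cauchy--Schwarz from the square-integrability of the cost processes built into Definition~\ref{defi:trading strategy}. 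What survives is $\mathbb{E}[R_{k+1}(\varphi)\mid\mathcal{F}_k]+\mathbb{E}[(\Delta C_{k+1}(\varphi'))^2\mid\mathcal{F}_k]$.

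Finally, in the liquidity term of $T_k^{\alpha}(\varphi')$ one has $\Delta X'_{k+2}=X'_{k+2}-X'_{k+1}=X_{k+2}-X'_{k+1}$, so that term equals $\alpha\,\mathbb{E}[(X_{k+2}-X'_{k+1})^{*}[S_{k+1}(X_{k+2}-X'_{k+1})-S_{k+1}(0)]\mid\mathcal{F}_k]$; adding it to the previous display yields the claimed representation. The only genuinely fiddly step is the index bookkeeping in the first paragraph — matching ``the strategy value decided at time $m$'' with ``the holding carried over $(m,m+1]$'', so that a perturbation at $k$ disturbs only $X_{k+1}$ and $Y_k$, hence only the single increment $\Delta C_{k+1}$ and the single liquidity charge at time $k+1$. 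Once that is set up, the rest is just orthogonality of martingale increments, exactly as in \cite{lamberton.pham.schweizer:1998}, now carried through the multidimensional supply-curve setting.
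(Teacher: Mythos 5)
Your proof is correct and follows essentially the same route as the paper, which itself defers to the proof of Proposition 2 in Lamberton--Pham--Schweizer: split $C_T(\varphi')-C_k(\varphi')$ at time $k+1$, use that the perturbation leaves everything from $k+1$ onward unchanged so the tail increment equals $C_T(\varphi)-C_{k+1}(\varphi)$, kill the cross term by conditioning on $\mathcal{F}_{k+1}$ and invoking the martingale property, and rewrite the liquidity term via $\Delta X'_{k+2}=X_{k+2}-X'_{k+1}$. Your write-up is in fact slightly more careful than the paper's one-line version (which states the key identity as an $\mathcal{F}_k$-conditional expectation where it should be $\mathcal{F}_{k+1}$), but the substance is identical.
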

   
   \begin{remark}
     Since $R_{k+1}(\varphi)=R_{k+1}(\varphi ')$ for any local perturbation $\varphi '$ of $\varphi$ at time $k$, it follows from equation (\ref{eq:quadratic linear risk process modified}) that one needs to minimize over
     \begin{equation}
       \condVar{ \Delta C_{k+1}(\varphi) }
             + \alpha \condE{ \Delta X_{k+2}^{*} [ S_{k+1}(\Delta X_{k+2}) - S_{k+1}(0) ] }
     \end{equation}
     at time $k$ (see also Proposition \ref{prop:minimizing variance}).
   \end{remark}
   \begin{prop}
     \label{prop:minimizing variance}
     A trading strategy $\varphi = (X, Y)$ is LRM under illiquidity if and only if the two following properties are satisfied:
     \begin{enumerate}[label=(\roman*)]
     \item \label{prop:1}$C(\varphi)$ is a martingale.
     \item \label{prop:2}For each $k \in \{ 0,1, \dots, T-1 \}, X_{k+1}$ minimizes
     \begin{align}
       & \condVar{ V_{k+1}(\varphi) - (X'_{k+1})^{*} \Delta S_{k+1} }     \nonumber 
       \\& \quad \quad + \alpha \condE{ (X_{k+2} - X'_{k+1})^{*} [ S_{k+1}(X_{k+2} - X'_{k+1}) - S_{k+1}(0) ] }
     \end{align}
     over all $\mathcal{F}_{k}$-measurable random variables $X'_{k+1}$ so that $(X'_{k+1})^{*} \Delta S_{k+1} \in \mathbb{L}_{T}^{2,1}$ and $(X_{k+2} - X'_{k+1})^{*} [ S_{k+1}(X_{k+2} - X'_{k+1}) - S_{k+1}(0) ]  \in \mathbb{L}_{T}^{1,1}$.
     \end{enumerate}
   \end{prop}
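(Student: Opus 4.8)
The plan is to prove the equivalence by establishing the two implications separately, following the classical pattern of \cite{schweizer:1988} and \cite{lamberton.pham.schweizer:1998} and carrying the extra liquidity term along. The computation I would use repeatedly is that for a local perturbation $\varphi'=(X',Y')$ of $\varphi$ at time $k$ one has $V_{k+1}(\varphi')=V_{k+1}(\varphi)$, hence
\begin{equation*}
\Delta C_{k+1}(\varphi')=V_{k+1}(\varphi)-(X'_{k+1})^{*}S_{k+1}-Y'_{k}=V_{k+1}(\varphi)-(X'_{k+1})^{*}\Delta S_{k+1}-V_{k}(\varphi'),
\end{equation*}
so the $\mathcal{F}_{k}$-measurable position $Y'_{k}$ (equivalently $V_{k}(\varphi')$) shifts only the conditional mean of $\Delta C_{k+1}(\varphi')$, leaves its conditional variance unchanged, and does not appear in the liquidity term at all.

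For necessity I would assume $\varphi$ is LRM under illiquidity. Lemma~\ref{lem:cost process martingale} gives property~\ref{prop:1} at once and, combined with it, the representation
\begin{equation*}
T_{k}^{\alpha}(\varphi)=\condE{R_{k+1}(\varphi)}+\condVar{\Delta C_{k+1}(\varphi)}+\alpha\condE{\Delta X_{k+2}^{*}[S_{k+1}(\Delta X_{k+2})-S_{k+1}(0)]}.
\end{equation*}
Writing $J_{k}$ for the functional minimized in \ref{prop:2} and using $\condE{\Delta C_{k+1}(\varphi)}=0$ together with the $\mathcal{F}_{k}$-measurability of $V_{k}(\varphi)$, this becomes $T_{k}^{\alpha}(\varphi)=\condE{R_{k+1}(\varphi)}+J_{k}(X_{k+1})$. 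Given any $\mathcal{F}_{k}$-measurable $X'_{k+1}$ admissible in the sense of \ref{prop:2}, I would then form the local perturbation $\varphi'$ with $X'_{m+1}=X_{m+1}$, $Y'_{m}=Y_{m}$ for $m\neq k$ and $Y'_{k}:=\condE{V_{k+1}(\varphi)-(X'_{k+1})^{*}S_{k+1}}$, which forces $\condE{\Delta C_{k+1}(\varphi')}=0$ and hence $\condE{(\Delta C_{k+1}(\varphi'))^{2}}=\condVar{V_{k+1}(\varphi)-(X'_{k+1})^{*}\Delta S_{k+1}}$; substituting into Lemma~\ref{lem:quadratic linear risk process modified} and using $R_{k+1}(\varphi')=R_{k+1}(\varphi)$ gives $T_{k}^{\alpha}(\varphi')=\condE{R_{k+1}(\varphi)}+J_{k}(X'_{k+1})$. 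The LRM inequality $T_{k}^{\alpha}(\varphi)\le T_{k}^{\alpha}(\varphi')$ then reduces to $J_{k}(X_{k+1})\le J_{k}(X'_{k+1})$ \Pas, which is exactly \ref{prop:2}.

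For sufficiency I would assume \ref{prop:1} and \ref{prop:2}, fix $k$ and take an arbitrary local perturbation $\varphi'$ of $\varphi$ at $k$. As above, the martingale property of $C(\varphi)$ gives $T_{k}^{\alpha}(\varphi)=\condE{R_{k+1}(\varphi)}+J_{k}(X_{k+1})$, while Lemma~\ref{lem:quadratic linear risk process modified} together with $R_{k+1}(\varphi')=R_{k+1}(\varphi)$ gives $T_{k}^{\alpha}(\varphi')=\condE{R_{k+1}(\varphi)}+\condE{(\Delta C_{k+1}(\varphi'))^{2}}+\alpha\condE{(X_{k+2}-X'_{k+1})^{*}[S_{k+1}(X_{k+2}-X'_{k+1})-S_{k+1}(0)]}$. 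By the first paragraph, $\condE{(\Delta C_{k+1}(\varphi'))^{2}}\ge\condVar{V_{k+1}(\varphi)-(X'_{k+1})^{*}\Delta S_{k+1}}$, and since $X'\in\Theta_{d}(S)$ the random variable $X'_{k+1}$ is admissible in \ref{prop:2}; hence $T_{k}^{\alpha}(\varphi')\ge\condE{R_{k+1}(\varphi)}+J_{k}(X'_{k+1})\ge\condE{R_{k+1}(\varphi)}+J_{k}(X_{k+1})=T_{k}^{\alpha}(\varphi)$. As $k$ and $\varphi'$ are arbitrary, $\varphi$ is LRM under illiquidity.

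I expect the main obstacle to lie in the necessity step, namely in checking that the perturbation $\varphi'$ constructed there is genuinely a trading strategy in the sense of Definition~\ref{defi:trading strategy}: one needs $V_{k}(\varphi')\in\mathbb{L}_{T}^{2,1}$, which I would deduce from $V_{k+1}(\varphi)\in\mathbb{L}_{T}^{2,1}$, $(X'_{k+1})^{*}\Delta S_{k+1}\in\mathbb{L}_{T}^{2,1}$ and the $\mathbb{L}^{2}$-contractivity of conditional expectation, together with integrability of the time-$k$ liquidity cost $(X'_{k+1}-X_{k})^{*}[S_{k}(X'_{k+1}-X_{k})-S_{k}(0)]$, which must be reconciled with the admissibility class used in \ref{prop:2}. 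Once admissibility is settled, the rest is routine algebra with the martingale property and $\mathcal{F}_{k}$-measurability.
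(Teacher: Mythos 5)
Your proposal is correct and follows essentially the same route as the paper: both directions rest on Lemma~\ref{lem:cost process martingale} and Lemma~\ref{lem:quadratic linear risk process modified}, the decomposition $\condE{(\Delta C_{k+1}(\varphi'))^{2}}=\condVar{\Delta C_{k+1}(\varphi')}+(\condE{\Delta C_{k+1}(\varphi')})^{2}$, and the choice of the $\mathcal{F}_{k}$-measurable cash holding $Y'_{k}$ that centers the incremental cost while leaving the conditional variance and the liquidity term untouched. The admissibility subtlety you flag at the end (that the perturbation built in the necessity step must itself be a trading strategy, which requires integrability of the time-$k$ liquidity cost not guaranteed by the class in \ref{prop:2}) is real but is equally left implicit in the paper's own proof.
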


Proposition \ref{prop:minimizing variance} is quite general since it holds for any supply curve. For the existence and recursive construction of a LRM-strategy under illiquidity we will consider in the next section a special case of the supply curve that is motivated from a multiplicative limit order book. For this model we can construct explicitly the optimal strategy and we are able to state conditions that ensure that the optimal strategy belongs to the space $\Theta_{d}(S)$.
   
  \section{Linear supply curve}\label{sec:linear:supply}
  %   \subsection{Existence and recursive construction of an optimal strategy under illiquidity.}
%   \label{sec:existence and recursive construction of an optimal strategy under illiquidity}
   %Existence and recursive construction of an optimal strategy under illiquidity   
   
   When trading through a limit order book (LOB) in an illiquid environment, liquidity costs are related to the depth of the order book. We do not take into account any feedback effects from hedging strategies, so we assume that the speed of resilience, i.e., the ability of the order book to recover itself after a trade, is infinite. 
    We choose the form of the supply curve $S_{k}(x) = (S_{k}^{1}(x^{1}), \dots, S_{k}^{d}(x^{d}))^{*}$ to be
   \begin{equation}
     S_{k}^{j}(x^{j}) = S_{k}^{j} + x^{j} \varepsilon_{k}^{j} S_{k}^{j}
   \end{equation}
   and assume that the price process $S$ is a non-negative semimartingale and $\varepsilon = ( \varepsilon_{k} )_{k = 0,1, \dots, T} $ is a positive deterministic $\mathbb{R}^{d}$-valued process. Note that it is possible for $S_{k}(x)$ to take negative values for some $x$, but in practice this is unlikely to happen for small values of $\varepsilon_k$ and reasonable values of $x$.
   \par
   Now let us describe a (multiplicative) limit order book for the specific form of the supply curve. A symmetric, $1$-dimensional, time independent (for simplicity) LOB is represented by a density function $q$, where $q(x)dx$ is the bid or ask offers at price level $xS_{k}$. Denote by $F(\rho) = \int_{1}^{\rho} q(x) dx$ the quantity to buy up through the LOB at price $\rho S_{k}$. If an investor makes an order of $x = F(\rho)$ shares through the LOB at time $k$ then some limit orders are eaten up and the quoted price is shifted up to $S_{k}(x)^{+} := g(x)S_{k}$ where $g(x)$ solves the equation $x = \int_{1}^{g(x)} q(y) dy$, that is $g(x) = F^{-1}(x)$.\footnote{Note the multiplicative way of shifting up the price. In an additive LOB this would be of the form $S_{k}(x)^{+} := S_{k} + g(x)$ as for example in \cite{roch:2011}. For a description of multiplicative and additive limit order books see for example \cite{lokka:2012}.} Since we do not account for any price impact, then 
 after the trade, the price returns to $S_{k}$.\footnote{In the literature, this is the so-called resilience effect, measuring the proportion of new bid or ask orders filling up the LOB after a trade. In our case we have infinite resilience.} The cost of an order of $x$ shares will be $S_{k} \int_{1}^{g(x)} \rho dF(\rho)$ which for an appropriate choice of the function $q$, should be equal to $xS_{k}(x) = xS_{k} + \varepsilon x^{2} S_{k}$. Choosing an, independent from price, density
   \begin{equation}
     q(x) = \frac{1}{2 \varepsilon}
   \end{equation}
does the job. The process $\varepsilon$ can be thought as a measure of illiquidity. For $\varepsilon$ tending to zero the market becomes more liquid and the liquidity cost vanishes. 
   \begin{remark}
     Recall that the supply curve $S_{k}(x)$ is increasing in the transaction size $x$ which ensures non-negative liquidity cost, that is $x[S_{k}(x) - S_{k}(0)] \geq 0$. The specific choice of a linear supply curve implies $\varepsilon S_{k} | x |^{2}$ liquidity costs for a transaction of size $x$ at time $k$. Note that it is essential to assume that the marginal price process $S$ is non-negative in order to avoid negative liquidity costs.
          Note that when the price process $S_{k}$ increases, then naturally also the liquidity cost increases but not the availability of assets in the LOB since the depth of the order book $q_{k}(y) = \frac{1}{2 \varepsilon_{k}}$ depends only on $\varepsilon_{k}$.  
   \end{remark}
   Proposition \ref{prop:minimizing variance} tells us how to construct an optimal strategy according to the LRM-criterion under illiquidity. Going backward in time we need to minimize at time $k$
    \begin{align}
      \label{eq:minimizing expression}
       & \condVar{ V_{k+1}(\varphi) - (X'_{k+1})^{*} \Delta S_{k+1} }           \nonumber
       \\& \quad \quad \quad \quad 
       + \alpha \condE{ \sum_{j=1}^{d} \varepsilon_{k+1}^{j} S_{k+1}^{j} (X_{k+2}^{j} - (X'_{k+1})^{j})^{2} }         
     \end{align}
    over all appropriate $X'_{k+1}$ (see Definition \ref{defi:trading strategy}) and choose $Y_{k}$ so that the cost process $C$ becomes a martingale. 
    
    Before continuing let us first introduce some notation: 
    \begin{align}
      A_{k; j}^{0} &:= \condVar{ \Delta S_{k+1}^{j} } 
      \quad & \quad
      A_{k; j}^{\varepsilon} &:= \condE{ \varepsilon_{k+1}^{j} S_{k+1}^{j} }
      \quad & \quad
      A_{k; j} &:= A_{k; j}^{0} + A_{k; j}^{\varepsilon}                        \nonumber
      \\
      b_{k; j}^{0} &:= \condCov{ V_{k+1}, \Delta S_{k+1}^{j} }
      \quad & \quad
      b_{k; j}^{\varepsilon} &:= \condE{ \varepsilon_{k+1}^{j} S_{k+1}^{j} X_{k+2}^{j} }
      \quad & \quad
      b_{k; j} &:= b_{k; j}^{0} + b_{k; j}^{\varepsilon}                         \nonumber
      \\
      D_{k; j, i} &:= \condCov{ \Delta S_{k+1}^{j}, \Delta S_{k+1}^{i} }
      & \text{ for } i \neq j &
    \end{align}
    for all $i,j = 1, \dots, d$ and $k=0,\dots,T-1$.
    \par
    Furthermore, we can rewrite the expression (\ref{eq:minimizing expression}) by defining the function $f_{k} : \mathbb{R}^{d} \times \Omega \rightarrow \mathbb{R}^{+}$ as 
    \begin{align}
      \label{eq:minimizing function}
      f_{k}(c, \omega)
        &= \sum_{j=1}^{d} | c_{j} |^{2} A_{k; j}(\omega)
          - 2 \sum_{j=1}^{d} c_{j} b_{k; j}(\omega)
          + \sum_{j \neq i}  c_{j} c_{i} D_{k; j, i}(\omega)                        \nonumber 
        \\
        &+ \condVar{ V_{k+1} }(\omega) 
          + \sum_{j=1}^{d} \condE{ 
               \varepsilon_{k+1}^{j} S_{k+1}^{j} | X_{k+2}^{j} |^{2} }(\omega). 
    \end{align}
    Fixing $\omega$ one can easily calculate the gradient of the multidimensional function $f_{k}$. We need to solve $grad(f_{k}) = 0$ to calculate the candidates of extreme points which translates into solving a linear equation system of the form
   \begin{equation}\label{c:equation}
     F_{k} \, c = b_{k}
   \end{equation}
   where $F_{k} \in \mathbb{R}^{d \times d}$ with $F_{k; i, j} = D_{k; i, j}$ for $i \neq j$, $F_{k; i, j} = A_{k; j}$ for $i = j$ and $b_{k} = ( b_{k;1}, \dots, b_{k; d} ) \in \mathbb{R}^{d}$. Let $F_{k}^{\varepsilon} = diag(A_{k; 1}^{\varepsilon}, \dots, A_{k; d}^{\varepsilon})$ and denote by $F_{k}^{0}$ the matrix $F_{k}$ with $\varepsilon_{k+1}^{j} = 0$ for all $j$, that is the covariance matrix of the marginal price process $S$. Then the symmetric matrix $F_{k}$ is the sum of two real symmetric, positive semidefinite matrices $F_{k} = F_{k}^{0} + F_{k}^{\varepsilon}$. This implies that the matrix $F_{k}$ is also positive semidefinite\footnote{In fact, $F_{k}$ is positive definite if $\varepsilon_{k+1}^{j}$ is positive for all $j=1, \dots, d$.} and therefore also the Hessian matrix which calculates as $H_{f_{k}}(c) = 2 F_{k}$. So, assuming that the covariance matrix $F_{k}^{0}$ is positive definite, this implies that $F_{k}$ is invertible and equation (\ref{c:equation}) h
 as a unique solution. Furthermore, since also the Hesse matrix is positive definite the function $c \rightarrow f_{k}(c, \omega)$ is strictly convex, which implies that $c^{*} := F_{k}^{-1} b_{k}$ is a global minimizer. Furthermore, since the matrix $F_{k}^{-1}$ and $b_{k}$ are both $\mathcal{F}_{k}$-measurable it is clear that the minimizer $c^{*}$ is also $\mathcal{F}_{k}$-measurable.
   
   %Assumptions
   \subsection{Properties of the marginal price process $S$}
   \label{sec:assumptions}
 In order to show that the optimal strategy $c^{*}$ calculated above belongs to the space $\Theta_{d}(S)$, we need slightly stronger assumptions on the matrix $F_{k}$, which can be reduced to assumptions on the covariance matrix of $S$. We will impose these assumptions now. It will turn out that they hold for independent increments as well as for independent returns. 
%\newver{(In this section we replace all constants with small letters e.g. $c, \tilde c$ etc. with constants with big letters e.g. $C, \tilde C$ etc.)}
 
   \begin{defi}
   \label{defi:bounded mean-variance tradeoff}
     We say that $S$ has \textit{bounded mean-variance tradeoff} process if for some constant $C > 0$ 
     \begin{equation}
       \frac{( \condE{ \Delta S_{k+1}^{j} } )^{2} }{ \condVar{ \Delta S_{k+1}^{j} } }
       \leq C
       \quad \Pas \quad \text{for all } j = 1, \dots, d 
     \end{equation}   
     uniformly in $k$ and $\omega$.
   \end{defi}
   \begin{defi}
     \label{defi:modified bounded mean-variance tradeoff}
     We say that $S$ has \textit{modified above bounded mean-variance tradeoff} process if for some constant $C > 0$ 
     \begin{equation}
       \frac{( \condE{ S_{k+1}^{j} } )^{2} }{ \condVar{ S_{k+1}^{j} } }
       \leq C
       \quad \Pas \quad \text{for all } j = 1, \dots, d 
     \end{equation}   
     uniformly in $k$ and $\omega$. Furthermore $S$ has \textit{modified below bounded mean-variance tradeoff} process if for some constant $\tilde C > 0$ 
     \begin{equation}
       \frac{( \condE{ S_{k+1}^{j} } )^{2} }{ \condVar{ S_{k+1}^{j} } }
       \geq \tilde C
       \quad \Pas \quad \text{for all } j = 1, \dots, d 
     \end{equation}   
     uniformly in $k$ and $\omega$. If both bounds hold then we say that $S$ has \textit{modified bounded mean-variance tradeoff}.
   \end{defi}
   \begin{remark}
     Note that for the case of $S$ being a submartingale then the property of modified above bounded mean-variance tradeoff implies that of bounded mean-variance tradeoff, since by using $(a + b)^{2} \leq 2a^{2} + 2b^{2}$ we can estimate
     \begin{align}
       (\condE{ \Delta S_{k+1}^{j} } )^{2}
       \leq 2 (\condE{ S_{k+1}^{j} } )^{2} + 2 | S_{k} |^{2}
       \leq 4 (\condE{ S_{k+1}^{j} } )^{2}
     \end{align}
     where we have also used the fact that $S_{k}^{j}$ is positive.
   \end{remark}
   \begin{defi}
     \label{defi:diagonal condition F}
     We say that $S$ satisfies the \textit{F-diagonal condition} if for some constant $C > 0$
     \begin{equation}
     \label{eq:diagonal condition F ineq 1}
       \sqrt{\condVar{ \Delta S_{k+1}^{j} } }
       + \frac{ \condE{ S_{k+1}^{j} } }{ \sqrt{\condVar{ S_{k+1}^{j} } } }
       \geq C
       \quad \Pas \quad \text{for all } j = 1, \dots, d 
     \end{equation}
     uniformly in $k$ and $\omega$ and if for some constant $\tilde C > 0$
      \begin{equation}
      \label{eq:diagonal condition F ineq 2}
        \frac{ \sqrt{\condVar{ S_{k+1}^{j} } }  }{ \condE{ S_{k+1}^{j} } }
        + \frac{ 1 }{ \sqrt{\condVar{ \Delta S_{k+1}^{j} } } }
       \geq \tilde C
       \quad \Pas \quad \text{for all } j = 1, \dots, d 
     \end{equation}
     uniformly in $k$ and $\omega$.
   \end{defi}
   \begin{remark}
     The name \textit{F-diagonal condition} in Definition \ref{defi:diagonal condition F} comes from the diagonal terms of the matrix $F$, since 
     \begin{align}
       \frac{ F_{k; j, j}^{0} }{ | F_{k; j, j} |^{2} }
       &= \left( \sqrt{\condVar{ \Delta S_{k+1}^{j} } }
       + \frac{ \condE{ \varepsilon_{k+1}^{j} S_{k+1}^{j} } }{ \sqrt{\condVar{ S_{k+1}^{j} } } } \right)^{-2}
       \nonumber  \\
       | F_{k; j, j}^{\varepsilon} |^{2} \frac{ F_{k; j, j}^{0} }{ | F_{k; j, j} |^{2} }
       &= \left( \frac{ \sqrt{\condVar{ S_{k+1}^{j} } }  }{ \condE{ \varepsilon_{k+1}^{j} S_{k+1}^{j} } }
        + \frac{ 1 }{ \sqrt{\condVar{ \Delta S_{k+1}^{j} } } } \right)^{-2}   \,.
     \end{align}
   \end{remark}
   Writing $S_{k+1}^{j} = S_{k}^{j}( 1 + \rho_{k+1}^{j} )$ for $j = 1, \dots, d$ , we denote by $\rho = (\rho_{k})_{k=0,1, \dots T}$ the $d$-dimensional return process of $S$.
   \par
   The next two Propositions \ref{prop:independent increments} and \ref{prop:independent returns} give sufficient conditions for the previous properties on the marginal price process $S$ to hold.
   \begin{prop}
     \label{prop:independent increments}
       For $S$ satisfying $\tilde C \leq \condVar{ \Delta S_{k+1}^{j} } \leq C$ for some positive constants $C, \tilde C$ and for all $j = 1, \dots, d$, then the $F$-diagonal condition holds. In particular, if $S$ has independent increments then $S$ has bounded mean-variance tradeoff and satisfies the $F$-diagonal condition.
   \end{prop}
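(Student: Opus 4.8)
The plan is to read off the $F$-diagonal condition directly from the two-sided bound on $\condVar{\Delta S_{k+1}^{j}}$, using nothing beyond the nonnegativity of $S$, and then to reduce the independent-increments case to that bound.

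First I would record the identity that makes the two conditional variances in Definition~\ref{defi:diagonal condition F} interchangeable: since $S_{k+1}^{j} = S_{k}^{j} + \Delta S_{k+1}^{j}$ with $S_{k}^{j}$ being $\mathcal{F}_{k}$-measurable, we have $\condVar{S_{k+1}^{j}} = \condVar{\Delta S_{k+1}^{j}}$, so both quantities lie in $[\tilde C, C]$ under the hypothesis, hence are strictly positive a.s.; moreover this forces $\condE{S_{k+1}^{j}} > 0$ a.s., since otherwise nonnegativity of $S$ would give $S_{k+1}^{j} = 0$ a.s., and thus $\condVar{\Delta S_{k+1}^{j}} = 0$, a contradiction. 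With all ratios thereby well defined, I would verify the two inequalities of the condition by discarding one nonnegative summand in each: in (\ref{eq:diagonal condition F ineq 1}) the term $\condE{S_{k+1}^{j}}/\sqrt{\condVar{S_{k+1}^{j}}}$ is $\geq 0$, so its left-hand side is $\geq \sqrt{\condVar{\Delta S_{k+1}^{j}}} \geq \sqrt{\tilde C}$; in (\ref{eq:diagonal condition F ineq 2}) the term $\sqrt{\condVar{S_{k+1}^{j}}}/\condE{S_{k+1}^{j}}$ is $\geq 0$, so its left-hand side is $\geq 1/\sqrt{\condVar{\Delta S_{k+1}^{j}}} \geq 1/\sqrt{C}$. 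Thus the $F$-diagonal condition holds with constants $\sqrt{\tilde C}$ and $1/\sqrt{C}$, which proves the first assertion.

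For the second (independent-increments) assertion, independence of $\Delta S_{k+1}^{j}$ from $\mathcal{F}_{k}$ collapses every conditional moment to the corresponding deterministic unconditional moment, so $\condVar{\Delta S_{k+1}^{j}} = \Var(\Delta S_{k+1}^{j})$ and $\condE{\Delta S_{k+1}^{j}} = \E(\Delta S_{k+1}^{j})$ no longer depend on $\omega$. Since $k$ ranges over the finite set $\{0,\dots,T-1\}$ and $j$ over $\{1,\dots,d\}$, I would put $\tilde C := \min_{k,j}\Var(\Delta S_{k+1}^{j})$ and $C := \max_{k,j}\Var(\Delta S_{k+1}^{j})$; assuming, as is implicit, that no increment is degenerate, we then have $0 < \tilde C \leq \condVar{\Delta S_{k+1}^{j}} \leq C < \infty$, and the first assertion yields the $F$-diagonal condition. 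Bounded mean-variance tradeoff in the sense of Definition~\ref{defi:bounded mean-variance tradeoff} is then immediate, since $(\E(\Delta S_{k+1}^{j}))^{2}/\Var(\Delta S_{k+1}^{j})$ is a finite deterministic number for each $(k,j)$ and its maximum over the finite index set serves as the required constant.

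I do not expect a genuine obstacle; the computation is routine. The only points deserving care are the well-definedness of the ratios in Definition~\ref{defi:diagonal condition F} (handled by the argument that $\condVar{\Delta S_{k+1}^{j}} \geq \tilde C > 0$ forces $\condE{S_{k+1}^{j}} > 0$), the fact that uniformity in $\omega$ is vacuous once independence has removed the $\omega$-dependence, and the exclusion of the degenerate case $\Var(\Delta S_{k+1}^{j}) = 0$ — an asset carrying no conditional price risk at a given step could anyway be dropped from the hedge there.
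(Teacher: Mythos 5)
Your proof is correct and follows the same route the paper intends: the paper's own proof is the one-line remark that the claim ``follows directly from the fact that $\tilde C \leq \condVar{\Delta S_{k+1}^{j}} \leq C$,'' and your argument is exactly that observation carried out in detail (dropping the nonnegative summand in each inequality of the $F$-diagonal condition and using the lower bound for the first and the upper bound for the second). Your additional care about the well-definedness of the ratios and the non-degeneracy of the increments is a reasonable tightening of what the paper leaves implicit.
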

   \begin{proof}
     The claim follows directly from the fact that $\tilde C \leq \condVar{ \Delta S_{k+1}^{j} } \leq C$.
   \end{proof}
   \begin{prop}
     \label{prop:independent returns}
       For $S$ having modified bounded mean-variance tradeoff then the $F$-diagonal condition holds. In particular, if $S$ has independent returns then $S$ has bounded mean-variance tradeoff and satisfies the $F$-diagonal condition.
   \end{prop}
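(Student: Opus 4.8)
The plan is to establish the two assertions in turn, deriving the second from the first.

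\textbf{Step 1: modified bounded mean-variance tradeoff implies the $F$-diagonal condition.} Assume there are constants $\tilde C, C>0$ with
\[
\tilde C \;\le\; \frac{(\condE{S_{k+1}^{j}})^{2}}{\condVar{S_{k+1}^{j}}} \;\le\; C
\]
for all $j=1,\dots,d$, uniformly in $k$ and $\omega$. All quantities here are nonnegative and the denominators strictly positive, so taking square roots gives $\sqrt{\tilde C}\le \condE{S_{k+1}^{j}}/\sqrt{\condVar{S_{k+1}^{j}}}\le\sqrt{C}$. Since $\sqrt{\condVar{\Delta S_{k+1}^{j}}}\ge 0$, the left-hand side of \eqref{eq:diagonal condition F ineq 1} is at least $\condE{S_{k+1}^{j}}/\sqrt{\condVar{S_{k+1}^{j}}}\ge\sqrt{\tilde C}$, which is \eqref{eq:diagonal condition F ineq 1} with constant $\sqrt{\tilde C}$. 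Similarly $1/\sqrt{\condVar{\Delta S_{k+1}^{j}}}\ge 0$, so the left-hand side of \eqref{eq:diagonal condition F ineq 2} is at least $\sqrt{\condVar{S_{k+1}^{j}}}/\condE{S_{k+1}^{j}}\ge 1/\sqrt{C}$, which is \eqref{eq:diagonal condition F ineq 2} with constant $1/\sqrt{C}$. Hence the $F$-diagonal condition holds.

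\textbf{Step 2: independent returns.} Here I read ``independent returns'' as: each $\rho_{k+1}^{j}$ is independent of $\mathcal F_{k}$ with a non-degenerate, square-integrable law not depending on $k$; write $\mu_{j}:=\E[\rho_{k+1}^{j}]$ and $\sigma_{j}^{2}:=\Var(\rho_{k+1}^{j})$. Using $S_{k+1}^{j}=S_{k}^{j}(1+\rho_{k+1}^{j})$ and $\Delta S_{k+1}^{j}=S_{k}^{j}\rho_{k+1}^{j}$ together with the $\mathcal F_{k}$-measurability and positivity of $S_{k}^{j}$, I compute $\condE{S_{k+1}^{j}}=S_{k}^{j}(1+\mu_{j})$, $\condVar{S_{k+1}^{j}}=(S_{k}^{j})^{2}\sigma_{j}^{2}$, $\condE{\Delta S_{k+1}^{j}}=S_{k}^{j}\mu_{j}$ and $\condVar{\Delta S_{k+1}^{j}}=(S_{k}^{j})^{2}\sigma_{j}^{2}$. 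The factor $(S_{k}^{j})^{2}$ cancels in the quotients, so
\[
\frac{(\condE{\Delta S_{k+1}^{j}})^{2}}{\condVar{\Delta S_{k+1}^{j}}}=\frac{\mu_{j}^{2}}{\sigma_{j}^{2}},
\qquad
\frac{(\condE{S_{k+1}^{j}})^{2}}{\condVar{S_{k+1}^{j}}}=\frac{(1+\mu_{j})^{2}}{\sigma_{j}^{2}},
\]
both deterministic constants, hence (being finitely many, with $\sigma_{j}^{2}>0$ and $1+\mu_{j}>0$) uniformly bounded above, and the second also bounded below away from zero. The first identity gives bounded mean-variance tradeoff; the second gives modified bounded mean-variance tradeoff, and Step 1 then yields the $F$-diagonal condition.

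The argument is elementary; the only points that need care are (i) ensuring all denominators ($\condVar{S_{k+1}^{j}}$, $\condVar{\Delta S_{k+1}^{j}}$ and $\condE{S_{k+1}^{j}}=S_{k}^{j}(1+\mu_{j})$) are strictly positive so that every quotient is well-defined --- this is where non-degeneracy of the returns and $\mu_{j}>-1$ (forced by $S\ge 0$) enter --- and (ii) pinning down the precise meaning of ``independent returns'' so that $\mu_{j},\sigma_{j}^{2}$ really are independent of $k$ and the claimed uniform bounds are legitimate. No single step is a genuine obstacle.
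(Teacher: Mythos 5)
Your proof is correct and follows essentially the same route as the paper, which simply asserts that the claim "follows directly" from the modified bounded mean-variance tradeoff; you supply the details the paper omits (dropping the nonnegative summand in each of the two $F$-diagonal inequalities and bounding the remaining ratio via the two modified tradeoff bounds, then checking that independent returns make the relevant ratios constant in $\omega$). The integrability/non-degeneracy caveats you flag at the end are real but are also left implicit in the paper itself.
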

   \begin{proof}
     The claim follows directly from the fact that $S$ has modified bounded mean-variance tradeoff.
   \end{proof}
   \begin{remark}
     Consider the $1$-dimensional Black-Scholes model of a geometric Brownian motion $W$, that is
     \begin{equation}
       S_{kh} = S_{0} \exp{(bkh + \sigma W_{kh})}
     \end{equation}
     with discretization time step $\Delta t = h$. Then the return process $\rho_{k}$ can be defined by, 
     \begin{equation}
       1 + \rho_{k} = \frac{ S_{kh} }{ S_{(k-1)h} }
     \end{equation}
     and is lognormally distributed. This is also a process of i.i.d. random variables. By Proposition \ref{prop:independent returns}, $S$ has bounded mean-variance tradeoff and satisfies the $F$-diagonal condition.
   \end{remark}
   \par
   
      \subsection{Some preliminaries}
   \label{sec:existence and recursive construction of an optimal strategy under illiquidity}
   Now let us state some useful Lemmas needed in the proof of Theorem \ref{theo:existence} in order to show that the integrability conditions are fulfilled. In what follows we will use the notation
   \begin{align}
     \alpha_{k;i,j} &:= F_{k;  j,  j}^{0}  F_{k; i, i}^{0} | F_{k; j, i}^{-1} |^{2}
     \quad \quad 
     && \alpha_{k;i,j}^{\varepsilon} := F_{k;  j,  j}^{0} | F_{k; i, i}^{\varepsilon} |^{2} | F_{k; j, i}^{-1} |^{2}
     \nonumber  \\
     \beta_{k;i,j} &:= F_{k; i, i}^{0} | F_{k; j, i}^{-1} |^{2}
     \quad \quad
     && \beta_{k;i,j}^{\varepsilon} := | F_{k; i, i}^{\varepsilon} |^{2} | F_{k; j, i}^{-1} |^{2}
   \end{align}  
   for $i,j=1, \dots, d$ and  $k=0,\dots , T$ when the inverse matrix $F_{k}^{-1}$ of $F_{k}$ exists.
   \par 
   In the following we will denote by $M_{k; i, j}$ the matrix $F_{k}$ without the $i$-th row and $j$-th column. Recall also from linear algebra that if the inverse of a symmetric matrix $F_{k}$ exists then $F_{k; j, i}^{-1} = \frac{ (-1)^{i+j} \det(M_{k; i, j}) }{ \det(F_{k}) }$ which we use in Lemma \ref{lem:the alpha, beta, gamma terms}. 
   \begin{lem}
     \label{lem:F-matrix}
     For all $d \in \mathbb{N}_{\geq 2}$:
     \begin{align}
       \det(M_{k; i, j})^{2}
       & \leq C F_{k; j, j}^{0} F_{k; i, i}^{0} \prod_{\substack{ l=1\\ l \neq i,j}}^{d} | F_{k; l, l} |^{2}
       \quad && \text{ for all } i, j = 1, \dots, d \text{ with } i \neq j
       \label{eq:lemma F-matrix ineq 1}
       \\ 
       | F_{k; j, j} |^{2} \det(M_{k; j, j})^{2}
       & \leq \tilde C \det(F_{k}^{A})^{2}
       \quad && \text{ for all } j = 1, \dots, d
       \label{eq:lemma F-matrix ineq 2}
       \\ 
       F_{k; j, j} F_{k; i, i} \det(M_{k; i, j})^{2}
       & \leq \bar C \det(F_{k}^{A})^{2}
       \quad && \text{ for all } i, j = 1, \dots, d
       \label{eq:lemma F-matrix ineq 3}
     \end{align}
     for some positive constants $C, \tilde C$ and $\bar C$ where $F_{k}^{A} := \diag(A_{k; 1}, \dots, A_{k; d})$.
   \end{lem}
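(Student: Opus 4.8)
The plan is to handle the three inequalities separately, exploiting throughout that $F_k = F_k^{0} + F_k^{\varepsilon}$ is a sum of symmetric positive semidefinite matrices: $F_k^{0}$ is the conditional covariance matrix of $\Delta S_{k+1}$ given $\mathcal F_k$, and $F_k^{\varepsilon} = \diag(A_{k;1}^{\varepsilon},\dots,A_{k;d}^{\varepsilon})$ is diagonal with nonnegative entries $A_{k;l}^{\varepsilon} = \condE{\varepsilon_{k+1}^{l} S_{k+1}^{l}}\ge 0$. Hence $F_k$, $F_k^{0}$ and all of their principal submatrices are positive semidefinite, the off-diagonal entries $F_{k;l,m} = D_{k;l,m} = \condCov{\Delta S_{k+1}^{l}, \Delta S_{k+1}^{m}}$ satisfy the conditional Cauchy--Schwarz bound $D_{k;l,m}^{2}\le A_{k;l}^{0}A_{k;m}^{0}$, and $A_{k;l}^{0}\le A_{k;l} = |F_{k;l,l}|$ because $A_{k;l}^{\varepsilon}\ge 0$.

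The bounds \eqref{eq:lemma F-matrix ineq 2} and \eqref{eq:lemma F-matrix ineq 3} with $i=j$ are immediate: the matrix $M_{k;j,j}$ is a principal submatrix of $F_k$, hence positive semidefinite, and its diagonal is $(A_{k;l})_{l\ne j}$, so Hadamard's inequality for positive semidefinite matrices gives $0\le\det(M_{k;j,j})\le\prod_{l\ne j}A_{k;l}$. Squaring and multiplying by $|F_{k;j,j}|^{2} = A_{k;j}^{2}$ yields $|F_{k;j,j}|^{2}\det(M_{k;j,j})^{2}\le\prod_{l}A_{k;l}^{2} = \det(F_k^{A})^{2}$, so $\tilde C = 1$ works (and likewise $\bar C=1$ in the diagonal case of \eqref{eq:lemma F-matrix ineq 3}).

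The main work — and the main obstacle — is \eqref{eq:lemma F-matrix ineq 1}, since for $i\ne j$ the matrix $M_{k;i,j}$ is not a principal submatrix and need not be symmetric or positive semidefinite, so Hadamard does not apply. Here I would expand by the Leibniz formula: $|\det(M_{k;i,j})|\le\sum_{\sigma}\prod_{l\ne i}|F_{k;l,\sigma(l)}|$, the sum over the $(d-1)!$ bijections $\sigma$ from $\{1,\dots,d\}\setminus\{i\}$ onto $\{1,\dots,d\}\setminus\{j\}$. Each $\sigma$ extends to a permutation $\hat\sigma$ of $\{1,\dots,d\}$ by $\hat\sigma(i)=j$; since $\hat\sigma(i)=j\ne i$, both $i$ and $j$ lie in the set $\mathrm{Move}$ of non-fixed points of $\hat\sigma$. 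For a fixed point $l$ the factor is $A_{k;l}$; for a moved index $l$ it is $|D_{k;l,\hat\sigma(l)}|\le\sqrt{A_{k;l}^{0}}\,\sqrt{A_{k;\hat\sigma(l)}^{0}}$. Collecting the square-root factors over $l\in\mathrm{Move}\setminus\{i\}$ and using that $\hat\sigma$ maps $\mathrm{Move}\setminus\{i\}$ onto $\mathrm{Move}\setminus\{j\}$, the product telescopes to $\sqrt{A_{k;i}^{0}A_{k;j}^{0}}\prod_{l\in\mathrm{Move}\setminus\{i,j\}}A_{k;l}^{0}$. Bounding $A_{k;l}^{0}\le A_{k;l}$ on the fixed points and on $\mathrm{Move}\setminus\{i,j\}$ (which together exhaust $\{1,\dots,d\}\setminus\{i,j\}$) gives, for every $\sigma$, $\prod_{l\ne i}|F_{k;l,\hat\sigma(l)}|\le\sqrt{F_{k;i,i}^{0}F_{k;j,j}^{0}}\prod_{l\ne i,j}|F_{k;l,l}|$; summing over the $(d-1)!$ permutations and squaring yields \eqref{eq:lemma F-matrix ineq 1} with $C=((d-1)!)^{2}$.

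Finally, \eqref{eq:lemma F-matrix ineq 3} for $i\ne j$ follows by combining \eqref{eq:lemma F-matrix ineq 1} with $A_{k;i}^{0}A_{k;j}^{0}\le A_{k;i}A_{k;j} = F_{k;i,i}F_{k;j,j}$: then $F_{k;j,j}F_{k;i,i}\det(M_{k;i,j})^{2}\le C\,A_{k;i}^{2}A_{k;j}^{2}\prod_{l\ne i,j}A_{k;l}^{2} = C\det(F_k^{A})^{2}$, so $\bar C = ((d-1)!)^{2}$ serves in all cases. The telescoping bookkeeping in \eqref{eq:lemma F-matrix ineq 1} is the only delicate point; the rest is a direct application of conditional Cauchy--Schwarz and Hadamard's inequality.
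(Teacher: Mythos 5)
Your proof is correct, and for the two substantive inequalities it is actually more complete than the paper's own argument. The paper verifies \eqref{eq:lemma F-matrix ineq 1} and \eqref{eq:lemma F-matrix ineq 2} explicitly only for $d=2,3$ (via cofactor expansion, $(a+b)^2\le 2a^2+2b^2$, and conditional Cauchy--Schwarz) and then asserts that the general case ``can be done using Laplace's formula and the symmetry of $F_k$ and $F_k^0$''; your Leibniz expansion with the telescoping over the moved indices of $\hat\sigma$ is precisely the bookkeeping needed to make that assertion rigorous for arbitrary $d$, and it yields the explicit constant $C=((d-1)!)^2$ (harmless, since $d$ is fixed and the lemma only requires some positive constant). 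Your treatment of \eqref{eq:lemma F-matrix ineq 2} via Hadamard's inequality for the positive semidefinite principal submatrix $M_{k;j,j}$ is cleaner than the paper's route and gives $\tilde C=1$; the deduction of \eqref{eq:lemma F-matrix ineq 3} from the first two inequalities using $F_{k;l,l}^0\le F_{k;l,l}$ and the diagonality of $F_k^A$ is the same in both arguments. The one point worth double-checking in your telescoping step --- that $j$ itself is a non-fixed point of $\hat\sigma$, so that $\sqrt{A_{k;j}^0}$ genuinely appears --- holds because $\hat\sigma(i)=j$ with $i\ne j$ forces $\hat\sigma(j)\ne j$ by injectivity; with that observed, the argument is airtight.
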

   \begin{proof}
     First note that the last inequality (\ref{eq:lemma F-matrix ineq 3}) follows from the first two. Indeed for the case $i \neq j$ and since $F_{k; j, j}^{0} \leq F_{k; j, j}$ (since $\varepsilon_{k+1}^{j}$ and $S_{k+1}^{j}$ are non-negative) for all $j$, then from inequality (\ref{eq:lemma F-matrix ineq 1}) we have 
     \begin{equation}
       \det(M_{k; i, j})^{2}
       \leq C F_{k; j, j} F_{k; i, i} \prod_{\substack{ l=1\\ l \neq i,j}}^{d} | F_{k; l, l} |^{2}   \,.
     \end{equation}
   Since the matrix $F_{k}^{A}$ is a diagonal matrix then it is clear that now inequality (\ref{eq:lemma F-matrix ineq 3}) follows for $i \neq j$. The case $i=j$ follows directly from inequality (\ref{eq:lemma F-matrix ineq 2}). 
   \par
   For showing the inequalities (\ref{eq:lemma F-matrix ineq 1}) and (\ref{eq:lemma F-matrix ineq 2}) for $d=2$ is trivial. We will show for the case $d=3$ the inequality (\ref{eq:lemma F-matrix ineq 1}). Inequality (\ref{eq:lemma F-matrix ineq 2}) follows then analogously. Let w.l.o.g. $i=1$. For $j=2$ we have 
   \begin{align}
     \det(M_{k; 1, 2})^{2}
     = (D_{k; 1, 2} A_{k; 3} - D_{k; 2, 3} D_{k; 1, 3})^{2} 
     \leq 2 | D_{k; 1, 2} |^{2} | A_{k; 3} |^{2} + 2 | D_{k; 2, 3} |^{2} | D_{k; 1, 3} |^{2}
   \end{align}
   where we have used the inequality $(a + b)^{2} \leq 2 a^{2} + 2 b^{2}$. Now, applying the conditional Cauchy-Schwarz inequality we get, 
   \begin{align}
     \det(M_{k; 1, 2})^{2}
     \leq 2 A_{k; 1}^{0} A_{k; 2}^{0} | A_{k; 3} |^{2} + 2 A_{k; 2}^{0} A_{k; 3}^{0} A_{k; 1}^{0} A_{k; 3}^{0} 
     \leq 4 A_{k; 1}^{0} A_{k; 2}^{0} | A_{k; 3} |^{2}   \,.
   \end{align}
   The case $j=3$ follows analogously and so inequality (\ref{eq:lemma F-matrix ineq 1}) holds.
   \par
   A generalization of the proof for an arbitrary $d$ can be done using the Laplace's formula and the symmetry of the matrices $F_{k}$ and $F_{k}^{0}$.
   \end{proof}
   The next Definition of the \textit{$F$-property} is crucial, not only for extending the LRM-criterion of \cite{schweizer:1988} to the illiquid case (i.e. $\varepsilon \neq 0$) but also (especially) for the extension to the multidimensional case. In the $1$-dimensional case the $F$-property translates to $\condVar{ \Delta S_{k+1} } + \condE{ \varepsilon_{k+1}S_{k+1} } \geq 0$ for a one-dimensional price process $S$ which is always fulfilled.\footnote{Recall the assumption that the price process $S$ and the process $\varepsilon$ are both non-negative.} Also if we are dealing with independent components, i.e., $S^{i}$ and $S^{j}$ are independent for $i \neq j$, then it reduces to $\det(F_{k}^{A}) \geq 0$ which also always holds since the matrix $F_{k}^{A}$ is positive semi-definite. So the next property is essentially linked to the covariance matrix of the multidimensional price process $S$. We will see later on in Section \ref{sec:the reduction of the F-property to the covariance matrix} that this property can be reduced to a property on the covariance matrix of $S$. In what follows, $C$ denotes a generic positive constant that might change from line to line.
   \begin{defi}
     \label{defi:the F-property}
     We say that the process $S$ has the $F$-property if there exists some $\delta \in (0, 1)$ such that
     \begin{equation}
       \det(F_{k}) - (1 - \delta) \det(F_{k}^{A}) 
       \geq 0
     \end{equation}
     for all $k = 0, 1, \dots, T$ where $F_{k}^{A} := \diag(A_{k; 1}, \dots, A_{k; d})$.
   \end{defi}
    \begin{lem}
     \label{lem:alpha beta terms boundedness} 
     Assume that $S$ has the $F$-property and satisfies the $F$-diagonal condition. Then the terms $\alpha_{k;i,j}$, $\beta_{k;i,j}$, $\alpha_{k;i,j}^{\varepsilon}$ and $\beta_{k;i,j}^{\varepsilon}$ are uniformly bounded in $k$ and $\omega$ for all $i,j = 1, \dots, d$. 
   \end{lem}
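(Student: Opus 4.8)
Here is how I would approach the proof of Lemma~\ref{lem:alpha beta terms boundedness}.

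The plan is to split the analysis of $|F_{k;j,i}^{-1}|^{2}$ into its numerator and denominator. Under the hypotheses $F_{k}$ is invertible (the $F$-diagonal condition forces every $F_{k;l,l}>0$, hence $\det(F_{k}^{A})>0$, and then the $F$-property gives $\det(F_{k})\geq(1-\delta)\det(F_{k}^{A})>0$), so the cofactor formula recalled before Lemma~\ref{lem:F-matrix} applies and $|F_{k;j,i}^{-1}|^{2}=\det(M_{k;i,j})^{2}/\det(F_{k})^{2}$. The $F$-property then controls the denominator from below: $\det(F_{k})^{2}\geq(1-\delta)^{2}\det(F_{k}^{A})^{2}=(1-\delta)^{2}\prod_{l=1}^{d}|F_{k;l,l}|^{2}$. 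Substituting this into the definitions of $\alpha_{k;i,j},\beta_{k;i,j},\alpha_{k;i,j}^{\varepsilon},\beta_{k;i,j}^{\varepsilon}$, each of the four quantities is bounded by $(1-\delta)^{-2}$ times a diagonal prefactor times $\det(M_{k;i,j})^{2}/\prod_{l=1}^{d}|F_{k;l,l}|^{2}$.

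Next I would bound the numerator $\det(M_{k;i,j})^{2}$ via Lemma~\ref{lem:F-matrix}, distinguishing the off-diagonal case $i\neq j$, where (\ref{eq:lemma F-matrix ineq 1}) gives $\det(M_{k;i,j})^{2}\leq C\,F_{k;j,j}^{0}F_{k;i,i}^{0}\prod_{l\neq i,j}|F_{k;l,l}|^{2}$, from the diagonal case $i=j$, where (\ref{eq:lemma F-matrix ineq 2}) gives $|F_{k;j,j}|^{2}\det(M_{k;j,j})^{2}\leq\tilde C\det(F_{k}^{A})^{2}$. In every case the surviving product $\prod_{l\neq i,j}|F_{k;l,l}|^{2}$ (respectively the whole $\det(F_{k}^{A})^{2}$) cancels against the denominator, leaving, up to the fixed constant $C/(1-\delta)^{2}$, a product of at most two diagonal ratios, each of exactly one of the four shapes $(F_{k;l,l}^{0}/F_{k;l,l})^{2}$, $(F_{k;l,l}^{\varepsilon}/F_{k;l,l})^{2}$, $F_{k;l,l}^{0}/|F_{k;l,l}|^{2}$, $F_{k;l,l}^{0}|F_{k;l,l}^{\varepsilon}|^{2}/|F_{k;l,l}|^{2}$. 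For instance $\alpha_{k;i,j}$ with $i\neq j$ leaves $(F_{k;i,i}^{0}/F_{k;i,i})^{2}(F_{k;j,j}^{0}/F_{k;j,j})^{2}$, while $\beta_{k;i,j}^{\varepsilon}$ with $i\neq j$ leaves $\bigl(F_{k;i,i}^{0}|F_{k;i,i}^{\varepsilon}|^{2}/|F_{k;i,i}|^{2}\bigr)\bigl(F_{k;j,j}^{0}/|F_{k;j,j}|^{2}\bigr)$, and the diagonal cases collapse to a single such factor.

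It then remains to bound each of the four ratio-shapes uniformly in $k,\omega$. Since $F_{k}=F_{k}^{0}+F_{k}^{\varepsilon}$ with both summands having non-negative diagonal (as $\varepsilon$ and $S$ are non-negative), we have $0\leq F_{k;l,l}^{0}\leq F_{k;l,l}$ and $0\leq F_{k;l,l}^{\varepsilon}\leq F_{k;l,l}$, so the first two shapes are at most $1$. The remaining two are precisely the expressions computed in the remark after Definition~\ref{defi:diagonal condition F}: using $\condVar{S_{k+1}^{l}}=\condVar{\Delta S_{k+1}^{l}}$ (because $S_{k}^{l}$ is $\mathcal{F}_{k}$-measurable), one has $F_{k;l,l}^{0}/|F_{k;l,l}|^{2}=\bigl(\sqrt{\condVar{\Delta S_{k+1}^{l}}}+\condE{\varepsilon_{k+1}^{l}S_{k+1}^{l}}/\sqrt{\condVar{S_{k+1}^{l}}}\bigr)^{-2}$ and $F_{k;l,l}^{0}|F_{k;l,l}^{\varepsilon}|^{2}/|F_{k;l,l}|^{2}=\bigl(\sqrt{\condVar{S_{k+1}^{l}}}/\condE{\varepsilon_{k+1}^{l}S_{k+1}^{l}}+1/\sqrt{\condVar{\Delta S_{k+1}^{l}}}\bigr)^{-2}$. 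Because $\varepsilon$ is deterministic, positive and defined on the finite grid $0,\dots,T$, it is bounded away from $0$ and $\infty$ uniformly in $k$, so inequalities (\ref{eq:diagonal condition F ineq 1}) and (\ref{eq:diagonal condition F ineq 2}) of the $F$-diagonal condition bound the bracketed expressions away from $0$, hence bound these two ratio-shapes from above. Collecting the finitely many constants produced along the way yields the asserted uniform bounds.

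The \emph{main obstacle} is purely organizational rather than analytic: one must carry out the cancellation correctly in the diagonal case $i=j$ (where (\ref{eq:lemma F-matrix ineq 2}), not (\ref{eq:lemma F-matrix ineq 1}), is the right estimate) and keep track, for each of $\alpha_{k;i,j},\beta_{k;i,j},\alpha_{k;i,j}^{\varepsilon},\beta_{k;i,j}^{\varepsilon}$, of which of the four diagonal ratio-shapes actually survives, so that the two genuinely non-trivial shapes are matched precisely against the two halves of the $F$-diagonal condition. Beyond this bookkeeping there is nothing deeper to do.
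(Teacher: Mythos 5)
Your proposal is correct and follows essentially the same route as the paper's proof: the cofactor formula for $F_{k;j,i}^{-1}$, Lemma~\ref{lem:F-matrix} to bound $\det(M_{k;i,j})^{2}$ (with the $i=j$ versus $i\neq j$ case split), the $F$-property to control $\det(F_{k})^{2}$ from below by $(1-\delta)^{2}\det(F_{k}^{A})^{2}$, and the two halves of the $F$-diagonal condition (plus boundedness of the deterministic $\varepsilon$ on the finite grid) to handle the surviving diagonal ratios. Your systematic reduction to four ratio-shapes is just a cleaner bookkeeping of the same cancellations the paper performs term by term.
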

   \begin{proof}
      For the first term $\alpha_{k;i,j}$ we have
      \begin{align}
        \alpha_{k;i,j}
        = F_{k;  j,  j}^{0}  F_{k; i, i}^{0} \frac{ \det(M_{k; i, j})^{2} }{ \det(F_{k})^{2} }
        \leq C \frac{ \det(F_{k}^{A})^{2} }{ \det(F_{k})^{2} }
        \leq C \frac{ 1 }{ (1 - \delta)^{2} }
      \end{align}
      by using first the inequality (\ref{eq:lemma F-matrix ineq 3}) from Lemma \ref{lem:F-matrix} and then the $F$-property. For the second term $\beta_{k;i,j}$ we can estimate for the case $i = j$
      \begin{align}
        \beta_{k;i,j}
        = F_{k; i, i}^{0} \frac{ \det(M_{k; i, i})^{2} }{ \det(F_{k})^{2} }
        \leq C \frac{ F_{k; i, i}^{0} }{ | F_{k; i, i} |^{2} }  \frac{ \det(F_{k}^{A})^{2} }{ \det(F_{k})^{2} }
        \leq C \frac{ 1 }{ (1 - \delta)^{2} }
      \end{align}
      using inequality (\ref{eq:lemma F-matrix ineq 2}) from Lemma \ref{lem:F-matrix} and then the $F$-property and inequality (\ref{eq:diagonal condition F ineq 1}). For the case $i \neq j$ and using inequality (\ref{eq:lemma F-matrix ineq 1}) from Lemma \ref{lem:F-matrix}  we have
      \begin{align}
         \det(F_{k})^{2} \beta_{k;i,j}
        = F_{k; i, i}^{0} \det(M_{k; i, j})^{2} 
        \leq C F_{k; j, j}^{0} | F_{k; i, i}^{0} |^{2} \prod_{\substack{ l=1\\ l \neq i,j}}^{d} | F_{k; l, l} |^{2}
        \leq C \frac{ F_{k; j, j}^{0} }{ | F_{k; j, j} |^{2} } \det(F_{k}^{A})^{2}
      \end{align} 
      and from the $F$-property and inequality (\ref{eq:diagonal condition F ineq 1}), $\beta_{k;i,j}$ is uniformly bounded. Furthermore and by the same arguments as for the term $\beta_{k;i,j}$ we have for the case $i=j$
      \begin{align}
        \alpha_{k;i,j}^{\varepsilon}
        = | F_{k; i, i}^{\varepsilon} |^{2} F_{k; i, i}^{0} \frac{ \det(M_{k; i, i})^{2} }{ \det(F_{k})^{2} }
        \leq C | F_{k; i, i}^{\varepsilon} |^{2} \frac{ F_{k; i, i}^{0} }{ | F_{k; i, i} |^{2} }  \frac{ \det(F_{k}^{A})^{2} }{ \det(F_{k})^{2} }
        \leq C \frac{ 1 }{ (1 - \delta)^{2} }
      \end{align}
      using the $F$-property and inequality (\ref{eq:diagonal condition F ineq 2}). For $i \neq j$ we can estimate
      \begin{align}
        \det(F_{k})^{2} \alpha_{k;i,j}^{\varepsilon}
        = | F_{k; i, i}^{\varepsilon} |^{2} F_{k; j, j}^{0} \det(M_{k; i, j})^{2} 
        &\leq C  | F_{k; i, i}^{\varepsilon} |^{2} F_{k; i, i}^{0} | F_{k; j, j}^{0} |^{2} \prod_{\substack{ l=1\\ l \neq i,j}}^{d} | F_{k; l, l} |^{2}
        \nonumber  \\ 
        &\leq C | F_{k; i, i}^{\varepsilon} |^{2} \frac{ F_{k; i, i}^{0} }{ | F_{k; i, i} |^{2} } \det(F_{k}^{A})^{2}
      \end{align}
      and from the $F$-property and inequality (\ref{eq:diagonal condition F ineq 2}), $\alpha_{k;i,j}^{\varepsilon}$ is also uniformly bounded. For the last term $\beta_{k;i,j}^{\varepsilon}$ we have for $i = j$  
      \begin{align}
        \beta_{k;i,j}^{\varepsilon}
        = | F_{k; i, i}^{\varepsilon} |^{2} \frac{ \det(M_{k; i, i})^{2} }{ \det(F_{k})^{2} }
        \leq C \frac{ | F_{k; i, i}^{\varepsilon} |^{2} }{ | F_{k; i, i} |^{2} }  \frac{ \det(F_{k}^{A})^{2} }{ \det(F_{k})^{2} }
        \leq C \frac{ 1 }{ (1 - \delta)^{2} }
      \end{align}
      by the $F$-property. Moreover for $i \neq j$
      \begin{align}
        \det(F_{k})^{2} \beta_{k;i,j}^{\varepsilon}
        = | F_{k; i, i}^{\varepsilon} |^{2} \det(M_{k; i, j})^{2} 
        &\leq C  | F_{k; i, i}^{\varepsilon} |^{2} F_{k; i, i}^{0} F_{k; j, j}^{0}  \prod_{\substack{ l=1\\ l \neq i,j}}^{d} | F_{k; l, l} |^{2}
        \nonumber   \\ 
        &= C | F_{k; i, i}^{\varepsilon} |^{2} \frac{ F_{k; i, i}^{0} }{ | F_{k; i, i} |^{2} } \frac{ F_{k; j, j}^{0} }{ | F_{k; j, j} |^{2} } \det(F_{k}^{A})^{2}
      \end{align}
      where from the $F$-property and the $F$-diagonal condition the last term $\beta_{k;i,j}^{\varepsilon}$ is uniformly bounded. We also made use of the fact that the process $\varepsilon$ is deterministic and that we have a finite number of hedging times.
   \end{proof}
   \begin{lem}
     \label{lem:the alpha, beta, gamma terms}
     Assume that $F_{k}^{-1}$ exists for $k \in \{ 0, 1, \dots, T \}$ and $S$ has bounded mean-variance tradeoff. Let $(X,Y)$ be any trading strategy. Then there exists some constant $C > 0$ such that
     \begin{eqnarray}
       && \E[ ( (F_{k}^{-1} b_{k})_{j} \Delta S_{k+1}^{j} )^{2} ] \label{strat:bound:1} \nonumber\\
       &\leq &C \E[  \condVar{ V_{k+1} } \sum_{i=1}^{d} \alpha_{k;i,j} 
        + \sum_{i=1}^{d}( c(\varepsilon_{k+1}) \alpha_{k;i,j} + \alpha_{k;i,j}^{\varepsilon}) \condE{ | X_{k+2}^{i} |^{2} }  ]
       \\
      && \E[ ( (F_{k}^{-1} b_{k})_{j} )^{2} ] \nonumber \\
       &\leq & C \E[  \condVar{ V_{k+1} } \sum_{i=1}^{d} \beta_{k;i,j}
        + \sum_{i=1}^{d}( c(\varepsilon_{k+1}) \beta_{k;i,j} + \beta_{k;i,j}^{\varepsilon}) \condE{ | X_{k+2}^{i} |^{2} }  ] \label{strat:bound:2}
     \end{eqnarray}
     for all $j = 1, \dots, d$ where $(F_{k}^{-1} b_{k})_{j}$ is the $j$-th component of the vector $(F_{k}^{-1} b_{k})$. The term $c(\varepsilon_{k+1})$ denotes a positive constant depending on the process $\varepsilon$ at time $k+1$ such that for $\varepsilon_{k+1} \to 0$, $c(\varepsilon_{k+1})$ converges to zero.
   \end{lem}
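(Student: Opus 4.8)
The plan is to derive a pointwise (\,$\Pas$\,) bound on $((F_{k}^{-1}b_{k})_{j})^{2}$ in terms of $\condVar{ V_{k+1} }$ and the $\condE{ |X_{k+2}^{i}|^{2} }$, weighted exactly by the $\beta$-coefficients, and then (i) take expectations to get \eqref{strat:bound:2}, and (ii) multiply by the conditional second moment of $\Delta S_{k+1}^{j}$, which the bounded mean-variance tradeoff controls by $F_{k;j,j}^{0}$, to get \eqref{strat:bound:1}. Throughout, the first observation is that every entry of $F_{k}$ and of $b_{k}$ is $\mathcal F_{k}$-measurable (they are conditional (co)variances and conditional expectations), hence so is $F_{k}^{-1}b_{k}$; this lets us pull such factors out of $\condE{\cdot}$ freely.

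\textbf{Decomposition of $(F_{k}^{-1}b_{k})_{j}$.} Write $(F_{k}^{-1}b_{k})_{j}=\sum_{i=1}^{d}F_{k;j,i}^{-1}(b_{k;i}^{0}+b_{k;i}^{\varepsilon})$ and apply $\bigl(\sum_{l=1}^{n}a_{l}\bigr)^{2}\le n\sum_{l=1}^{n}a_{l}^{2}$ twice (once over $i$, once over the two summands of $b_{k;i}$) to obtain
\[
 \bigl((F_{k}^{-1}b_{k})_{j}\bigr)^{2}\le 2d\sum_{i=1}^{d}|F_{k;j,i}^{-1}|^{2}\bigl((b_{k;i}^{0})^{2}+(b_{k;i}^{\varepsilon})^{2}\bigr)\,.
\]
For the frictionless part, conditional Cauchy--Schwarz gives $(b_{k;i}^{0})^{2}=(\condCov{ V_{k+1},\Delta S_{k+1}^{i} })^{2}\le\condVar{ V_{k+1} }\,\condVar{ \Delta S_{k+1}^{i} }=\condVar{ V_{k+1} }\,F_{k;i,i}^{0}$. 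For the liquidity part, using $\varepsilon_{k+1}^{i}S_{k+1}^{i}\ge 0$ and conditional Cauchy--Schwarz, $(b_{k;i}^{\varepsilon})^{2}\le\condE{ (\varepsilon_{k+1}^{i}S_{k+1}^{i})^{2} }\,\condE{ |X_{k+2}^{i}|^{2} }$.

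\textbf{The key splitting.} Here one uses that $\varepsilon$ is deterministic and $S_{k}^{i}$ is $\mathcal F_{k}$-measurable, so $\condVar{ S_{k+1}^{i} }=\condVar{ \Delta S_{k+1}^{i} }=F_{k;i,i}^{0}$ and $(\varepsilon_{k+1}^{i})^{2}(\condE{ S_{k+1}^{i} })^{2}=|F_{k;i,i}^{\varepsilon}|^{2}$, whence
\[
 \condE{ (\varepsilon_{k+1}^{i}S_{k+1}^{i})^{2} }=(\varepsilon_{k+1}^{i})^{2}\bigl(\condVar{ S_{k+1}^{i} }+(\condE{ S_{k+1}^{i} })^{2}\bigr)\le c(\varepsilon_{k+1})\,F_{k;i,i}^{0}+|F_{k;i,i}^{\varepsilon}|^{2},
\]
where $c(\varepsilon_{k+1}):=\max_{1\le i\le d}(\varepsilon_{k+1}^{i})^{2}\to 0$ as $\varepsilon_{k+1}\to 0$. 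Substituting the two bounds into the displayed inequality and recognizing $\beta_{k;i,j}=F_{k;i,i}^{0}|F_{k;j,i}^{-1}|^{2}$ and $\beta_{k;i,j}^{\varepsilon}=|F_{k;i,i}^{\varepsilon}|^{2}|F_{k;j,i}^{-1}|^{2}$ yields the $\Pas$ estimate
\[
 \bigl((F_{k}^{-1}b_{k})_{j}\bigr)^{2}\le C\sum_{i=1}^{d}\Bigl(\condVar{ V_{k+1} }\,\beta_{k;i,j}+\bigl(c(\varepsilon_{k+1})\beta_{k;i,j}+\beta_{k;i,j}^{\varepsilon}\bigr)\condE{ |X_{k+2}^{i}|^{2} }\Bigr);
\]
taking $\E[\cdot]$ proves \eqref{strat:bound:2}. (If some $\condE{ |X_{k+2}^{i}|^{2} }=\infty$ the bound is trivially true, so nothing beyond the trading-strategy integrability is needed.)

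\textbf{Reinserting $\Delta S_{k+1}^{j}$.} For \eqref{strat:bound:1}, conditioning on $\mathcal F_{k}$ and using $\mathcal F_{k}$-measurability of $(F_{k}^{-1}b_{k})_{j}$ gives $\E[((F_{k}^{-1}b_{k})_{j}\Delta S_{k+1}^{j})^{2}]=\E[((F_{k}^{-1}b_{k})_{j})^{2}\,\condE{ (\Delta S_{k+1}^{j})^{2} }]$, and by Definition \ref{defi:bounded mean-variance tradeoff}, $\condE{ (\Delta S_{k+1}^{j})^{2} }=\condVar{ \Delta S_{k+1}^{j} }+(\condE{ \Delta S_{k+1}^{j} })^{2}\le(1+C)F_{k;j,j}^{0}$. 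Multiplying the pointwise bound for $((F_{k}^{-1}b_{k})_{j})^{2}$ by $(1+C)F_{k;j,j}^{0}$ turns each $\beta$ into the corresponding $\alpha$ (since $\alpha_{k;i,j}=F_{k;j,j}^{0}\beta_{k;i,j}$ and $\alpha_{k;i,j}^{\varepsilon}=F_{k;j,j}^{0}\beta_{k;i,j}^{\varepsilon}$); taking $\E[\cdot]$ gives \eqref{strat:bound:1}. The whole argument is bookkeeping; the only step demanding care is the key splitting of $\condE{ (\varepsilon_{k+1}^{i}S_{k+1}^{i})^{2} }$, which is precisely what produces the two-term coefficient $c(\varepsilon_{k+1})\alpha_{k;i,j}+\alpha_{k;i,j}^{\varepsilon}$ and crucially exploits the determinism of $\varepsilon$ together with $\condVar{ S_{k+1}^{i} }=\condVar{ \Delta S_{k+1}^{i} }$.
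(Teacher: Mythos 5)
Your proof is correct and follows essentially the same route as the paper's: decomposing $(F_k^{-1}b_k)_j$ over $i$ and over $b^0+b^\varepsilon$, applying conditional Cauchy--Schwarz to $b^0_i$ and $b^\varepsilon_i$, splitting $\condE{(\varepsilon^i_{k+1}S^i_{k+1})^2}$ into $c(\varepsilon_{k+1})F^0_{k;i,i}+|F^\varepsilon_{k;i,i}|^2$, and controlling $\condE{|\Delta S^j_{k+1}|^2}$ by $F^0_{k;j,j}$ via bounded mean-variance tradeoff. The only (immaterial) difference is one of ordering: you prove the pointwise $\beta$-bound first and then multiply by the conditional second moment of $\Delta S^j_{k+1}$, whereas the paper starts from the $\alpha$-inequality via the tower property and notes the other follows analogously.
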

   \begin{proof}
   First note that from the definition of the variance and using bounded mean-variance tradeoff, it follows directly that 
   \begin{equation}
   \label{eq:1 lem:the alpha, beta, gamma terms}
       \condE{ | \Delta S_{k+1}^{j} |^{2} }
       = \condVar{ \Delta S_{k+1}^{j} } + ( \condE{ \Delta S_{k+1}^{j} } )^{2}
       \leq C A_{k; j}^{0}     \,.
     \end{equation}
  Furthermore, denoting $F = F_{k}$ and $b = b_{k}$ we have from the tower property and using inequality (\ref{eq:1 lem:the alpha, beta, gamma terms})
   \begin{align}
     \E[ ( (F^{-1} b)_{j} \Delta S_{k+1}^{j} )^{2} ]
     &= \E[ ( (F^{-1} (b^{0} + b^{\varepsilon}))_{j} )^{2} \condE{ | \Delta S_{k+1}^{j} |^{2} } ]
     \nonumber  \\
     &\leq 2C \E[ \sum_{i=1}^{d} | F_{j, i}^{-1} |^{2} (| b_{i}^{0} |^{2} + | b_{i}^{\varepsilon} |^{2}) F_{j, j}^{0}  ]
   \end{align}
   Moreover, using the conditional Cauchy-Schwarz-Inequality for the term $b_{i}^{0}$ and the conditional inequality $(\E[ XY | \mathcal{G} ])^{2} \leq \E [X^{2} | \mathcal{G} ] \E [Y^{2} | \mathcal{G} ]$ on the term $b_{i}^{\varepsilon}$ together with the definition of the variance yields
   \begin{align}
     &\E[ ( (F^{-1} b)_{j} \Delta S_{k+1}^{j} )^{2} ]
     \nonumber  \\
     &\leq C \E[ \sum_{i=1}^{d} | F_{j, i}^{-1} |^{2} (  \condVar{ V_{k+1} } F_{i, i}^{0} +  \condE{ | \varepsilon_{k+1}^{i} S_{k+1}^{i} |^{2} } \condE{ | X_{k+2}^{i} |^{2} } ) F_{j, j}^{0}  ]
     \nonumber  \\
     &=  C \E[ \sum_{i=1}^{d} | F_{j, i}^{-1} |^{2} (  \condVar{ V_{k+1} } F_{i, i}^{0} +  | \varepsilon_{k+1}^{i} |^{2} F_{i, i}^{0} \condE{ | X_{k+2}^{i} |^{2} } + | F_{i, i}^{\varepsilon} |^{2} \condE{ | X_{k+2}^{i} |^{2} } ) F_{j, j}^{0}  ]    \,.
   \end{align}
   The other inequality follows analogously. 
   \end{proof}
   \begin{remark}
For the Existence of a LRM-strategy under illiquidity we will use Lemma \ref{lem:alpha beta terms boundedness} together with Lemma \ref{lem:the alpha, beta, gamma terms}. For the optimal strategy $\hat X$ (under the LRM-criterion under illiquidity) we will need to show that $\hat X_{k+1}^{j} \Delta S_{k+1}^{j} \in \mathbb{L}_{T}^{2, 1}$ and $\hat X_{k+1}^{j} \in \mathbb{L}_{T}^{2, 1}$. The first integrability property shows that the strategy $\hat X$ belongs to $ \hat\Theta_{d}(S)$, the space of all $\mathbb{R}^{d}$-valued predictable strategies $X = (X_{k})_{k=1,2, \dots, T+1}$ so that $ X_{k}^{*} \Delta S_{k} \in \mathbb{L}_{T}^{2, 1}$ for $k=1, 2, \dots, T$. The second one is needed to show the first one. Nevertheless, both integrability properties are needed in order to show that the liquidity costs of the optimal strategy are integrable. 
\par 
In the infinite liquidity case, that is $\varepsilon = 0$, since the terms $c(\varepsilon_{k+1}) \alpha_{k;i,j}$ and $\alpha_{k;i,j}^{\varepsilon}$ vanish,  we do not need the second inequality of Lemma \ref{lem:the alpha, beta, gamma terms}. This implies that in the multidimensional case without liquidity costs, one needs to show only that $\hat X \in \hat \Theta_{d}(S)$ by using bounded mean-variance tradeoff and the $F$-property.
\par
Also, in the $1$-dimensional case ($d=1$) we have 
\begin{equation}
\alpha_{k;1,1} = \frac{ | A_{k; 1}^{0} |^{2} }{ | A_{k; 1} |^{2} } 
\quad , \quad
\beta_{k;1,1} = \frac{A_{k;1}^{0}}{| A_{k;1} |^{2}} 
\quad , \quad
\alpha_{k;1,1}^{\varepsilon} = A_{k; 1}^{\varepsilon}  \frac{ A_{k; 1}^{0} }{ | A_{k; 1} |^{2} } 
\quad , \quad
\beta_{k;1,1}^{\varepsilon} =  \frac{ | A_{k; 1}^{\varepsilon} |^{2} }{ | A_{k; 1} |^{2} } 
\end{equation}
where the terms $\alpha_{k;1,1}$, $\beta_{k;1,1}^{\varepsilon}$ are bounded by $1$ and the terms $\beta_{k;1,1}$, $\alpha_{k;1,1}^{\varepsilon}$ are uniformly bounded by the $F$-diagonal property. Moreover for $\varepsilon = 0$ one would only need to show the first inequality of Lemma \ref{lem:the alpha, beta, gamma terms} which reduces to 
\begin{equation}
  \E[ ( (F_{k}^{-1} b_{k})_{1} \Delta S_{k+1}^{1} )^{2} ]
       \leq C \E[ | V_{k+1} |^{2} ]
\end{equation}
as in the classical $1$-dimensional case in \cite{schweizer:1988}. Recall that in this case only the assumption of bounded mean-variance tradeoff is essential. 
   \end{remark}
  We continue with the main Theorem where we show the existence of a local risk-minimizing strategy under illiquidity and under some mild conditions on the marginal price process $S$.

   %The Main Theorem
  \subsection{Existence and recursive construction of an optimal strategy}
  \label{sec:the main theorem}

  Using the assumptions imposed in the previous Section \ref{sec:assumptions} we are able to prove the existence of a local risk-minimizing strategy under illiquidity and additionally to give an explicit representation by means of a backward induction argument.
  \begin{theo}[\textbf{Existence result}]
    \label{theo:existence}
    Assume that $S$ has the $F$-property, bounded mean-variance tradeoff and satisfies the $F$-diagonal condition. Let further the covariance matrix $F_{k}^{0}$ be positive definite at all times $k = 0, 1, \dots, T-1$. Then for any contingent claim $H = \bar X_{T+1}^{*} S_{T} + \bar Y_{T} \in \mathbb{L}_{T}^{2, 1}$ with $\bar X_{T+1}^{*} S_{T} \in \mathbb{L}_{T}^{2, 1}$ and $\bar X_{T+1} \in \mathbb{L}_{T}^{2, d}$, there exists a local risk-minimizing strategy $\hat \varphi = ( \hat X, \hat Y )$ under illiquidity with $\hat X_{T+1} = \bar X_{T+1}$ and $\hat Y_{T} = \bar Y_{T}$. Furthermore, the strategy has the representation
    \begin{align}
      \hat X_{k+1} 
      &= F_{k}^{-1} b_{k}
      \quad \Pas \text{ for } k = 0, \dots, T-1
      \label{eq:LRM-strategy X}
      \\
      \hat Y_{k}
      &= \condE{ \hat W_{k} } - \hat X_{k+1}^{*} S_{k}
      \quad \Pas \text{ for } k = 0, 1, \dots, T-1
      \label{eq:LRM-strategy Y}
    \end{align}
    where $\hat W_{k} = H - \sum_{m=k+1}^{T} \hat X_{m}^{*} \Delta S_{m}$.
  \end{theo}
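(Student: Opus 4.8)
The plan is to build $\hat\varphi=(\hat X,\hat Y)$ recursively, going backward from $k=T$ to $k=0$, and then to invoke Proposition~\ref{prop:minimizing variance}. For the base step one sets $\hat X_{T+1}:=\bar X_{T+1}$, $\hat Y_T:=\bar Y_T$, so that $V_T(\hat\varphi)=\hat X_{T+1}^{*}S_T+\hat Y_T=H\in\mathbb{L}_{T}^{2,1}$ and $\hat X_{T+1}\in\mathbb{L}_{T}^{2,d}$ by hypothesis. The inductive step, carried out at time $k$ assuming $\hat X_m,\hat Y_{m-1}$ already built for $m\geq k+2$ with $\hat X_m\in\mathbb{L}_{T}^{2,d}$, $\hat X_m^{*}\Delta S_m\in\mathbb{L}_{T}^{2,1}$, the liquidity-cost increments in $\mathbb{L}_{T}^{1,1}$ and $\hat W_{k+1}\in\mathbb{L}_{T}^{2,1}$, is: put $\hat X_{k+1}:=F_k^{-1}b_k$. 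This is well posed because $F_k^0$ positive definite and $F_k^{\varepsilon}$ positive semidefinite force $F_k=F_k^0+F_k^{\varepsilon}$ to be positive definite, hence invertible, and $\hat X_{k+1}$ is $\mathcal F_k$-measurable since $F_k^{-1}$ and $b_k$ are; moreover, as already shown in Section~\ref{sec:linear:supply}, $c\mapsto f_k(c,\omega)$ is pathwise strictly convex with Hessian $2F_k$, so $F_k^{-1}b_k$ is its unique global minimizer, hence --- once admissibility is verified --- the unique minimizer in part~\ref{prop:2} of Proposition~\ref{prop:minimizing variance}. One then sets $\hat Y_k:=\condE{\hat W_k}-\hat X_{k+1}^{*}S_k$, with $\hat W_k:=H-\sum_{m=k+1}^{T}\hat X_m^{*}\Delta S_m=\hat W_{k+1}-\hat X_{k+1}^{*}\Delta S_{k+1}$; a short computation from $V_k(\hat\varphi)=\hat X_{k+1}^{*}S_k+\hat Y_k$ and the definition of the classical cost process shows this is precisely the choice enforcing $C_k(\hat\varphi)=\condE{C_{k+1}(\hat\varphi)}$, and that $V_k(\hat\varphi)=\condE{\hat W_k}$.

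The technical core --- and the step I expect to be the main obstacle --- is showing that this $\hat X_{k+1}$ lies in $\Theta_d(S)$ and that the recursion preserves square-integrability, so that $\hat\varphi$ is a genuine trading strategy in the sense of Definition~\ref{defi:trading strategy}. From $|\hat X_{k+1}^{*}\Delta S_{k+1}|^2\leq d\sum_{j=1}^{d}\big((F_k^{-1}b_k)_j\,\Delta S_{k+1}^{j}\big)^2$, the first estimate of Lemma~\ref{lem:the alpha, beta, gamma terms} bounds $\E[|\hat X_{k+1}^{*}\Delta S_{k+1}|^2]$ by a constant times $\E\big[\condVar{V_{k+1}}\sum_{i,j}\alpha_{k;i,j}+\sum_{i,j}(c(\varepsilon_{k+1})\alpha_{k;i,j}+\alpha_{k;i,j}^{\varepsilon})\condE{|\hat X_{k+2}^{i}|^2}\big]$; here the coefficients $\alpha_{k;i,j},\alpha_{k;i,j}^{\varepsilon}$ are uniformly bounded in $(k,\omega)$ by Lemma~\ref{lem:alpha beta terms boundedness} --- which is exactly where the $F$-property and the $F$-diagonal condition enter, through Lemma~\ref{lem:F-matrix} --- while $\E[\condVar{V_{k+1}}]\leq\E[V_{k+1}^2]<\infty$ and $\E[|\hat X_{k+2}^{i}|^2]<\infty$ by the induction hypothesis, and bounded mean-variance tradeoff is what lets Lemma~\ref{lem:the alpha, beta, gamma terms} replace $\condE{|\Delta S_{k+1}^{j}|^2}$ by a multiple of $A_{k;j}^0$. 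This gives $\hat X_{k+1}^{*}\Delta S_{k+1}\in\mathbb{L}_{T}^{2,1}$, hence $\hat X\in\hat\Theta_d(S)$; repeating with the second estimate of Lemma~\ref{lem:the alpha, beta, gamma terms} and the coefficients $\beta_{k;i,j},\beta_{k;i,j}^{\varepsilon}$ yields $\hat X_{k+1}^{j}\in\mathbb{L}_{T}^{2,1}$, so $\hat X_{k+1}\in\mathbb{L}_{T}^{2,d}$. Consequently $\hat W_k=\hat W_{k+1}-\hat X_{k+1}^{*}\Delta S_{k+1}\in\mathbb{L}_{T}^{2,1}$ and $V_k(\hat\varphi)=\condE{\hat W_k}\in\mathbb{L}_{T}^{2,1}$ by $L^2$-contractivity of conditional expectation. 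The remaining --- and most delicate --- bookkeeping is the liquidity-cost condition: using the two $\mathbb{L}^2$-bounds just obtained together with the lower bounds built into the $F$-diagonal condition and the $F$-property (which control the cofactor ratios appearing in $\hat X_{k+1}=F_k^{-1}b_k$) and the linear form $S_k^{j}(x^{j})-S_k^{j}(0)=x^{j}\varepsilon_k^{j}S_k^{j}$, one shows $\sum_j\varepsilon_k^{j}S_k^{j}(\Delta\hat X_{k+1}^{j})^2\in\mathbb{L}_{T}^{1,1}$; this is precisely the place where both integrability properties flagged in the remark preceding the theorem are needed jointly, so that $\hat X_{k+1}$ is admissible in part~\ref{prop:2} of Proposition~\ref{prop:minimizing variance}.

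It then only remains to assemble the conclusion. By construction $\hat X_{T+1}=\bar X_{T+1}$, $\hat Y_T=\bar Y_T$, so $V_T(\hat\varphi)=H$; the process $C(\hat\varphi)$ is a martingale since at every $k$ the bond position $\hat Y_k$ was chosen so that $C_k(\hat\varphi)=\condE{C_{k+1}(\hat\varphi)}$, with integrability guaranteed by the preceding paragraph; and for every $k$ the $\mathcal F_k$-measurable vector $\hat X_{k+1}=F_k^{-1}b_k$ is the unique minimizer required in part~\ref{prop:2} of Proposition~\ref{prop:minimizing variance}. Hence by Proposition~\ref{prop:minimizing variance}, $\hat\varphi$ is locally risk-minimizing under illiquidity and carries the announced representation~(\ref{eq:LRM-strategy X})--(\ref{eq:LRM-strategy Y}), which would complete the proof.
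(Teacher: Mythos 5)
Your construction and overall architecture coincide with the paper's proof: backward induction, $\hat X_{k+1}=F_k^{-1}b_k$ as the pathwise minimizer of the strictly convex $f_k$, the two estimates of Lemma~\ref{lem:the alpha, beta, gamma terms} combined with the uniform boundedness from Lemma~\ref{lem:alpha beta terms boundedness} to obtain $\hat X_{k+1}^{j}\Delta S_{k+1}^{j}\in\classltwoone$ and $\hat X_{k+1}^{j}\in\classltwoone$, and Proposition~\ref{prop:minimizing variance} to conclude. The gap is exactly at the step you defer: the integrability of the liquidity cost $\sum_{j}\varepsilon_{k+1}^{j}S_{k+1}^{j}|\hat X_{k+2}^{j}-\hat X_{k+1}^{j}|^{2}$. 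You assert this follows from ``the two $\mathbb{L}^{2}$-bounds just obtained together with the lower bounds built into the $F$-diagonal condition and the $F$-property,'' but neither $\hat X_{k+1}^{j}\in\classltwoone$ nor $\hat X_{k+1}^{j}\Delta S_{k+1}^{j}\in\classltwoone$ controls the product $S_{k+1}^{j}|\hat X_{k+1}^{j}|^{2}$ (that would require fourth moments of $\hat X_{k+1}$ or additional moments of $S$, neither of which is assumed), and the cofactor bounds of Lemmas~\ref{lem:F-matrix} and~\ref{lem:alpha beta terms boundedness} only cover the four coefficient families $\alpha_{k;i,j}$, $\alpha_{k;i,j}^{\varepsilon}$, $\beta_{k;i,j}$, $\beta_{k;i,j}^{\varepsilon}$, not the combination $F_{k;j,j}^{\varepsilon}|F_{k;j,i}^{-1}|^{2}|b_{k;i}|^{2}$ that a direct estimate of $\E[\varepsilon_{k+1}^{j}S_{k+1}^{j}|\hat X_{k+1}^{j}|^{2}]$ would produce; that combination is not uniformly bounded in general.

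The paper closes this step with a different idea that your proposal is missing: since $\hat X_{k+1}$ minimizes expression (\ref{eq:minimizing expression}), its objective value is dominated by the value at the admissible competitor $X'_{k+1}=0$; dropping the nonnegative conditional variance and taking expectations gives
\begin{equation*}
\E\Bigl[\sum_{j=1}^{d}\varepsilon_{k+1}^{j}S_{k+1}^{j}|\hat X_{k+2}^{j}-\hat X_{k+1}^{j}|^{2}\Bigr]\le\E\bigl[|\hat X_{k+2}^{*}S_{k+1}+\hat Y_{k+1}|^{2}\bigr]+\E\Bigl[\sum_{j=1}^{d}\varepsilon_{k+1}^{j}S_{k+1}^{j}|\hat X_{k+2}^{j}|^{2}\Bigr],
\end{equation*}
which is finite by the induction hypotheses. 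This forces the induction to carry the extra invariant $|\hat X_{l+2}^{j}|^{2}S_{l+1}^{j}\in\classloneone$ (the paper's item (ii)), which is not the same as your ``liquidity-cost increments in $\mathbb{L}_{T}^{1,1}$'' and must itself be propagated: the paper does this via $|\hat X_{k+1}^{j}|^{2}S_{k}^{j}=-|\hat X_{k+1}^{j}|^{2}\Delta S_{k+1}^{j}+|\hat X_{k+1}^{j}|^{2}S_{k+1}^{j}$, bounding the second term by $2|\hat X_{k+2}^{j}-\hat X_{k+1}^{j}|^{2}S_{k+1}^{j}+2|\hat X_{k+2}^{j}|^{2}S_{k+1}^{j}$ (now integrable by the displayed inequality and the hypothesis) and the first by $\E[|\hat X_{k+1}^{j}|^{2}]+\E[|\hat X_{k+1}^{j}\Delta S_{k+1}^{j}|^{2}]$ after splitting on $\{|\Delta S_{k+1}^{j}|\le1\}$ --- which is where your two $\mathbb{L}^{2}$-bounds actually enter. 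Without the comparison-with-zero argument and this extra induction item, the admissibility of $\hat X_{k+1}$ in Proposition~\ref{prop:minimizing variance}, Property~\ref{prop:2}, is not established.
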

  \begin{proof}
  The proof is a backward induction argument on $k = 0, 1, \dots, T$. First set $\hat X_{T+1} = \bar X_{T+1}$ and $\hat Y_{T} = \bar Y_{T}$. So, fix some $k \in \{ 0, 1, \dots, T-2 \}$ and assume that at times $l = k, \dots, T-2$
  \begin{enumerate}[label=(\roman*)]
    \item \label{item:theorem proof step 1} $ \hat X_{l+2}^{j} \Delta S_{l+2}^{j} \in \classltwoone \quad \text{and} \quad \hat X_{l+2}^{j} \in \classltwoone $
    \item \label{item:theorem proof step 2} $| \hat X_{l+2}^{j} |^{2} S_{l+1}^{j} \in \classloneone$
    \item \label{item:theorem proof step 3} $ \hat X_{l+2}^{*} S_{l+1} + \hat Y_{l+1}  \in \classltwoone
               \quad , \quad \hat Y_{l+1} \in \mathcal{F}_{l+1} $
  \end{enumerate}
  for all $j = 1, \dots, d$ holds. At time $k$ we want to minimize the expression (\ref{eq:minimizing expression}) over all $X_{k+1}^{'}$ and show that the following properties are fulfilled for all $j = 1, \dots, d$:
   \begin{enumerate}[label=(\roman*)]
    \item \label{item:theorem proof step 1} $ X_{k+1}^{', j} \Delta S_{k+1}^{j} \in \classltwoone \quad \text{and} \quad X_{k+1}^{', j} \in \classltwoone $
    \item \label{item:theorem proof step 2} $ | X_{k+1}^{', j} |^{2} S_{k}^{j} \in \classloneone $
    \item \label{item:theorem proof step 3} $ (X_{k+1}^{'})^{*} S_{k} + Y_{k}^{'} \in \classltwoone
               \quad , \quad Y_{k}^{'} \in \mathcal{F}_{k} $
  \end{enumerate}
  Properties \ref{item:theorem proof step 1} - \ref{item:theorem proof step 3} will then ensure that $(\hat X, \hat Y) \in \Theta_d (S)$. First we define the function $f_{k}$ as in equation (\ref{eq:minimizing function}) and note that all the terms in $f_{k}$ are integrable by induction hypothesis. Since $F_{k}$ is positive definite then there exists a unique solution to the minimization problem and an $\mathcal{F}_{k}$-measurable minimizer $\hat X_{k+1}$ can be constructed, which equals $F_{k}^{-1} b_{k}$. Furthermore define $\hat Y_{k}$ as in equation (\ref{eq:LRM-strategy Y}). Then it is clear that $\hat Y_{k}$ is $\mathcal{F}_{k}$-measurable. The fact that $\hat X_{k+1}^{*} S_{k} + \hat Y_{k} = \condE{ \hat W_{k} } \in \classltwoone$ follows from $H \in \classltwoone$, the induction hypothesis $\sum_{m=k+2}^{T} \hat X _{m}^{*} \Delta S_{m} \in \classltwoone$ and $\hat X _{k+1}^{*} \Delta S_{k+1} \in \classltwoone$, which we will show below. 
   \par
   Now let us show first that $ \hat X_{k+1}^{j} \Delta S_{k+1}^{j} \in \classltwoone$. By inequality (\ref{strat:bound:1}) of Lemma \ref{lem:the alpha, beta, gamma terms} we know that for a constant $C>0$,
  \begin{eqnarray}
    & &   \E[ ( \hat X_{k+1}^{j} \Delta S_{k+1}^{j} )^{2} ]         \nonumber       \\
       &\leq & C \E[  \condVar{ \hat X_{k+2}^{*} S_{k+1} + \hat Y_{k+1} } \sum_{i=1}^{d} \alpha_{k;i,j} 
        + \sum_{i=1}^{d}( c(\varepsilon_{k+1}) \alpha_{k;i,j} + \alpha_{k;i,j}^{\varepsilon}) \condE{ | X_{k+2}^{i} |^{2} }  ]
   \end{eqnarray}
 holds. Since by the induction hypothesis $\hat X_{k+2}^{*} S_{k+1} + \hat Y_{k+1}$ and $\hat X_{k+2}^{i}$ both in $\classltwoone$ for all $i=1, \dots, d$, then it remains to show that the terms $\alpha_{k;i,j}$, $\alpha_{k;i,j}^{\varepsilon}$ are uniformly bounded in $k$ and $\omega$. This follows from Lemma \ref{lem:alpha beta terms boundedness}. Similarly one can show that $ \hat X_{k+1}^{j} \in \classltwoone$ using inequality (\ref{strat:bound:2}) of Lemma \ref{lem:the alpha, beta, gamma terms}.
  \par
  Next we show that the liquidity costs $\condE{ \sum_{j=1}^{d} \varepsilon_{k+1}^{j} S_{k+1}^{j} | \hat X_{k+2}^{j} - \hat X_{k+1}^{j} |^{2} } $ are integrable. From the minimization problem of expression (\ref{eq:minimizing expression}) and since $\hat X_{k+1}$ is a minimizer, we know that (w.l.o.g. $\alpha = 1$):
  \begin{align}
    & \condVar{ \hat X_{k+2}^{*} S_{k+1} + \hat Y_{k+1} - (\hat X_{k+1})^{*} \Delta S_{k+1} }
       + \condE{ \sum_{j=1}^{d} \varepsilon_{k+1}^{j} S_{k+1}^{j} | \hat X_{k+2}^{j} - \hat X_{k+1}^{j} |^{2} }  
     \nonumber     \\
     & \leq \condVar{ \hat X_{k+2}^{*} S_{k+1} + \hat Y_{k+1}}
                + \condE{ \sum_{j=1}^{d} \varepsilon_{k+1}^{j} S_{k+1}^{j} | \hat X_{k+2}^{j} |^{2} }        
  \end{align}
  holds, where the right hand side corresponds to choosing $X_{k+1}=0$.
  Taking expectation on both sides and since by definition the conditional variance is non-negative, we get
  \begin{align}
    \E[ \sum_{j=1}^{d} \varepsilon_{k+1}^{j} S_{k+1}^{j} | \hat X_{k+2}^{j} - \hat X_{k+1}^{j} |^{2} ]
    \leq \E[ | \hat X_{k+2}^{*} S_{k+1} + \hat Y_{k+1} |^{2} ]
          + \E[ \sum_{j=1}^{d} \varepsilon_{k+1}^{j} S_{k+1}^{j} | \hat X_{k+2}^{j} |^{2} ]  
  \end{align}
  where we have used the fact that $\Var(X) \leq \E|X|^{2}$. Now, since by the inductive hypothesis,  $\hat X_{k+2}^{*} S_{k+1} + \hat Y_{k+1} \in \classltwoone$ and $S_{k+1}^{j} | \hat X_{k+2}^{j} |^{2} \in \classloneone$ for all $j=1, \dots, d$ then it is clear that the liquidity cost $\sum_{j=1}^{d} \varepsilon_{k+1}^{j} S_{k+1}^{j} | \hat X_{k+2}^{j} - \hat X_{k+1}^{j} |^{2}$ is in $\classloneone$. In particular $ \varepsilon_{k+1}^{j} S_{k+1}^{j} | \hat X_{k+2}^{j} - \hat X_{k+1}^{j} |^{2} \in \classloneone$ for all $j=1, \dots, d$. This holds from the fact that the deterministic process $\varepsilon$ and the marginal price process $S$ are both non-negative by assumption.
  \par 
    In order to complete the proof, it remains to show that $| \hat X_{k+1}^{j} |^{2} S_{k}^{j} \in \classloneone$. This is needed in order to complete the induction argument and be able to show that the liquidity costs in the next step are again integrable. So, from the equality 
    \begin{equation}
      | \hat X_{k+1}^{j} |^{2} S_{k}^{j}
      = - | \hat X_{k+1}^{j} |^{2} \Delta S_{k+1}^{j} + | \hat X_{k+1}^{j} |^{2} S_{k+1}^{j}
    \end{equation}
    we need to show that $| \hat X_{k+1}^{j} |^{2} \Delta S_{k+1}^{j}$ and $| \hat X_{k+1}^{j} |^{2} S_{k+1}^{j}$ are both in $\classloneone$.
    Since, as already shown, the liquidity costs are integrable for all $j=1, \dots, d$ and since by induction hypothesis $ | \hat X_{k+2}^{j} |^{2} S_{k+1}^{j} \in \classloneone$ then the inequality 
    \begin{align}
      0  
      \leq | \hat X_{k+1}^{j} |^{2} S_{k+1}^{j}
      \leq 2 | \hat X_{k+2}^{j} - \hat X_{k+1}^{j} |^{2} S_{k+1}^{j} + 2 | \hat X_{k+2}^{j} |^{2} S_{k+1}^{j}
    \end{align}
 follows. Since $\varepsilon_{k}^j >0$ this implies that $| \hat X_{k+1}^{j} |^{2} S_{k+1}^{j}$ is integrable for all $j=1, \dots, d$. %The case $\varepsilon_{k}^j =0$ can be treated seperately. 
 The term $| \hat X_{k+1}^{j} |^{2} \Delta S_{k+1}^{j}$ is also integrable by the fact that $\hat X_{k+1}^{j} \Delta S_{k+1}^{j}$ and $\hat X_{k+1}^{j}$ are both in $\classltwoone$. Indeed we have
    \begin{align}
       \E[ | \hat X_{k+1}^{j} |^{2} \Delta S_{k+1}^{j} ]  
       &\leq \E[ | \hat X_{k+1}^{j} |^{2} \mathbf{1}_{ \{ | \Delta S_{k+1}^{j} | \leq 1 \} } ]
       + \E[ | \hat X_{k+1}^{j} \Delta S_{k+1}^{j} |^{2} \mathbf{1}_{ \{ | \Delta S_{k+1}^{j} | \geq 1 \} } ]
        \nonumber    \\
       & \leq \E[ | \hat X_{k+1}^{j} |^{2} ]
       + \E[ | \hat X_{k+1}^{j} \Delta S_{k+1}^{j} |^{2} ]
     \end{align}
  and this proves and completes the induction step at time $k$.
  \par
  The base case at time $k=T$ where $\hat X_{T+1}^{*} S_{T} + \hat Y_{T} = H$ is clear by the same arguments and by the assumptions on $H$ and $\bar X_{T+1}$, $\bar Y_{T}$. Indeed, since $\hat X_{T+1}^{*} S_{T} + \hat Y_{T}$ and $\hat X_{T+1}$ are both square integrable, then from Lemma \ref{lem:the alpha, beta, gamma terms} and Lemma \ref{lem:alpha beta terms boundedness} it follows that $\hat X_{T}^{j} \Delta S_{T}^{j} \in \classltwoone$ and $\hat X_{T}^{j} \in \classltwoone$ for all $j$. Moreover, note that with the assumptions $\hat X_{T+1}^{j} S_{T}^{j} \in \classltwoone$, $\hat X_{T+1}^{j} \in \classltwoone$ one can show that $| \hat X_{T+1}^{j} |^{2} S_{T}^{j} \in \classloneone$. By the same arguments as above, this will imply the integrability of the liquidity costs. The fact that $| \hat X_{T}^{j} |^{2} S_{T-1}^{j} \in \classloneone$ can be shown by using exactly the same arguments as in the proof for the inductive step.
  \par
  Finally, by defining 
  \begin{equation}
    \hat Y_{T-1}
      = \condE{ H - \hat X_{T}^{*} \Delta S_{T} }- \hat X_{T}^{*} S_{T-1}
  \end{equation}
  then it is clear that $\hat Y_{T-1}$ is $\mathcal{F}_{T-1}$-measurable and $\hat X_{T}^{*} S_{T-1} + \hat Y_{T-1} = \condE{ H - \hat X_{T}^{*} \Delta S_{T} } $ belongs to $\classltwoone$.
  \par
  The martingale property of $C(\hat \varphi)$ follows from the construction of $\hat Y$ since at each time $k$ we have
  \begin{equation}
    \condE{ C_{T}(\hat \varphi) - C_{k}(\hat \varphi) }
    = 0
  \end{equation} 
  and so by Proposition \ref{prop:minimizing variance}, since both properties are satisfied, then the trading strategy $\hat \varphi = ( \hat X, \hat Y )$ is local risk-minimizing under illiquidity and the proof is complete.
  \end{proof}
  \begin{remark}
  \label{rem:1-dim strategy and the book values}
    In the $1$-dimensional case, the LRM-strategy $\hat \varphi = ( \hat X, \hat Y )$ under illiquidity has the representation
    \begin{align}
      \hat X_{k+1} 
      &= \frac{ \condCov{ V_{k+1}(\hat \varphi), \Delta S_{k+1} }
                   + \condE{ \varepsilon_{k+1} S_{k+1} \hat X_{k+2} } }
                 { \condVar{ \Delta S_{k+1} } + \condE{ \varepsilon_{k+1} S_{k+1} } } 
       \\
       V_{k}(\hat \varphi)
       &= \E \left [ H - \sum_{m=k+1}^{T} \hat X_{m} \Delta S_{m} \middle|\ \mathcal{F}_k \right]
    \end{align}
    For $\varepsilon_{k+1}$ tending to zero we get the classical local risk minimization strategy without accounting for illiquidity. Let us denote this by $\bar \varphi = ( \bar X, \bar Y )$. Also, one can easily note that in the case where $S$ is a martingale, then 
    $V_{k}(\hat \varphi) = \condE{ H } = V_{k}(\bar \varphi)$. That means the two book values are equal. 
    %\panos{(Interpretation=?)} This can be interpreted by the fact that with milder strategies we can get the same book values since by the LRM-criterion we adjusting the strategy at each time without being self-financing but only with the (classical) cost process being a martingale.
    \par
    One can easily check that when $\varepsilon_{k+1}$ goes to infinity, i.e. infinite liquidity costs, then 
    \begin{equation}
      \hat X_{k+1}
      \rightarrow 
      \E \left[ \frac{ S_{k+1} \cdots S_{T} \hat X_{T+1} }{ \mathbb{E}[ S_{k+1} | \mathcal{F}_{k} ] \cdots \mathbb{E}[ S_{T} | \mathcal{F}_{T-1} ] } \middle|\ \mathcal{F}_k \right].
    \end{equation}
    Consider \textit{cash settlement}, i.e. $\hat X_{T+1} = 0$ and $\hat Y_{T} = H$, where the value of the option has to be paid out in cash as it is usually the market standard. Then we clearly have $\hat X_{k+1} \rightarrow 0$ for all $k = 0, 1, \dots, T$ when $\varepsilon_{k+1} \rightarrow \infty$. From a financial point of view this makes sense since for the investor the best choice is to invest nothing to avoid infinite liquidity cost. A similar observation can be made in the $d$-dimensional case.  
  \end{remark}
%  \begin{remark}
%    Note that the assumption on the covariance matrix $F_{k}^{0}$ that it is positive definite is important since that gives us that for all $j=1, \dots, d$ we have $\condVar{ \Delta S_{k+1}^{j} } > 0$ at all times $k=0, 1, \dots, T$. If for example in the $1$-dimensional case we assume that $F_{k}$ is positive definite and it happens to have $\condVar{ \Delta S_{k+1}^{j} } = 0$ then we can still apply the existence theorem\footnote{That also means $\varepsilon_{k+1} > 0$.} and we get $\hat X_{k+1} = \condE{ \hat X_{k+2} }$ which does not depend on the liquidity parameter $\varepsilon$. Furthermore, having $S_{k+1}, \dots, S_{T}$ equal to a constant then we have the expression
%    \begin{equation*}
%      \hat X_{k+1} = \mathbb{E}[ \hat X_{T+1} | \mathcal{F}_{k} ]
%    \end{equation*}
%    which make sense only for a cash settlement option i.e.,  $\hat X_{T+1} = 0$.
%  \end{remark}

  %The reduction of the F-property to the covariance matrix
  \subsection{A sufficient condition for the $F$-property in terms of the covariance matrix $F^{0}$}
  \label{sec:the reduction of the F-property to the covariance matrix}

  Recall that the $F$-property from Definition \ref{defi:the F-property} was used in order to show the integrability properties of Proposition~\ref{prop:minimizing variance} for the local risk-minimizing strategy under illiquidity calculated backwards in time in the proof of Theorem~\ref{theo:existence}. In this section we show how this condition is related to the covariance matrix $F^{0}$. Before we continue let us recall the definition of a principal submatrix \cite[see][]{horn.johnson:2012}. 
  \par
  \textbf{Definition of a principal submatrix: } In general let $P \in \mathbb{R}^{m, n}$ be a real matrix with $m$ rows and $n$ columns, and let $\alpha \subset \{1, \dots, m  \}$, $\beta \subset \{1, \dots, n  \}$ be index sets. Denote by $P[ \alpha, \beta ]$ the (sub)matrix of entries that lie in the rows of $P$ indexed by $\alpha$ and the columns indexed by $\beta$.
 % $P$-entries with rows in $\alpha$ and with columns in $\beta$. 
  For $\alpha = \beta$ denote by $P[ \alpha ] = P[ \alpha, \alpha ]$ the (sub)matrix of entries that lie in the rows and columns of $P$ indexed by $\alpha$. 
  %$P$-entries with rows and columns in $\alpha$. 
  Then $P[ \alpha ]$ is called a \textit{principal submatrix} of $P$.\footnote{ A matrix $P \in \mathbb{R}^{n, n}$ has $\binom{n}{l}$ distinct principal submatrices of size $l \times l$. }
  \par
 The following Lemma \ref{lem:the reduced $F$-property} yields a sufficient criterion in terms of the covariance matrix $F^{0}$.
  \begin{lem}
    \label{lem:the reduced $F$-property}
    $S$ has the $F$-property if there exists some $\delta \in ( 0, 1 )$ such that
    \begin{equation}
    \label{eq:0 lem:the reduced $F$-property}
      \det(P_{k}^{0}) - (1 - \delta) \det(P_{k}^{A^{0}}) 
      \geq 0
    \end{equation}
    for all principal submatrices $P_{k}^{0}$ of $F_{k}^{0}$ and principal submatrices $P_{k}^{A^{0}}$ of $F_{k}^{A^{0}}$ where $F_{k}^{A^{0}} := \diag(A_{k; 1}^{0}, \dots, A_{k; d}^{0})$ of size $l \times l $ where $l \in \{ 2, \dots, d \}$ and for all $k = 0, 1, \dots, T$.
  \end{lem}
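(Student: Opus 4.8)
The plan is to exploit that $F_k$ and $F_k^A$ are obtained from $F_k^0$ and $F_k^{A^0}$ by adding the \emph{same} diagonal matrix $F_k^\varepsilon := \diag(A_{k;1}^\varepsilon,\dots,A_{k;d}^\varepsilon)$: indeed $F_k = F_k^0 + F_k^\varepsilon$ and $F_k^A = F_k^{A^0} + F_k^\varepsilon$, and every entry $A_{k;i}^\varepsilon = \condE{\varepsilon_{k+1}^i S_{k+1}^i}$ is nonnegative because $\varepsilon$ and $S$ are nonnegative. The algebraic engine is the multilinear expansion of the determinant of a matrix perturbed by a diagonal: for any $M \in \R^{d\times d}$ and $D = \diag(t_1,\dots,t_d)$,
\[
  \det(M + D) \;=\; \sum_{\alpha \subseteq \{1,\dots,d\}} \Big(\prod_{i\in\alpha} t_i\Big)\, \det\big(M[\alpha^c]\big),
\]
where $M[\alpha^c]$ is the principal submatrix of $M$ on the rows and columns indexed by the complement of $\alpha$, with the conventions that an empty product is $1$ and the determinant of the empty matrix is $1$. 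This is obtained by viewing $\det$ as a multilinear alternating function of the rows and collecting terms according to which rows contribute their perturbation $t_i$.

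Applying this with $(M,D) = (F_k^0, F_k^\varepsilon)$ gives $\det(F_k) = \sum_\alpha \big(\prod_{i\in\alpha} A_{k;i}^\varepsilon\big)\det\big(F_k^0[\alpha^c]\big)$; applying it with $(M,D) = (F_k^{A^0}, F_k^\varepsilon)$ (here $M$ is already diagonal, so $\det(F_k^{A^0}[\alpha^c]) = \prod_{j\in\alpha^c} A_{k;j}^0$) gives $\det(F_k^A) = \prod_{j=1}^d (A_{k;j}^0 + A_{k;j}^\varepsilon) = \sum_\alpha \big(\prod_{i\in\alpha} A_{k;i}^\varepsilon\big)\det\big(F_k^{A^0}[\alpha^c]\big)$. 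Subtracting termwise,
\[
  \det(F_k) - (1-\delta)\det(F_k^A) \;=\; \sum_{\alpha \subseteq \{1,\dots,d\}} \Big(\prod_{i\in\alpha} A_{k;i}^\varepsilon\Big)\Big[\det\big(F_k^0[\alpha^c]\big) - (1-\delta)\det\big(F_k^{A^0}[\alpha^c]\big)\Big].
\]
Since each coefficient $\prod_{i\in\alpha} A_{k;i}^\varepsilon \ge 0$, it suffices to check that every bracket is nonnegative. Put $\beta := \alpha^c$ and $l := |\beta|$: for $l=0$ the bracket equals $1-(1-\delta)=\delta>0$; for $l=1$, say $\beta=\{j\}$, it equals $\delta\, A_{k;j}^0 \ge 0$ because $A_{k;j}^0 = \condVar{\Delta S_{k+1}^j} \ge 0$; and for $l \in \{2,\dots,d\}$ the bracket is exactly the quantity assumed nonnegative in the statement of the Lemma, with $P_k^0 = F_k^0[\beta]$ and $P_k^{A^0} = F_k^{A^0}[\beta]$. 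Hence $\det(F_k) - (1-\delta)\det(F_k^A) \ge 0$ for all $k$ with the same $\delta$, i.e.\ $S$ has the $F$-property.

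There is no real obstacle here: the argument is the diagonal-perturbation determinant identity followed by a case check on the size of the surviving submatrix. The only care needed is bookkeeping — the empty-set conventions, the fact that the \emph{same} index set $\beta$ labels the submatrices of $F_k^0$ and of $F_k^{A^0}$, and the observation that the hypothesis is needed only for $l\ge 2$ since the $l\le 1$ contributions are absorbed by the slack $\delta$. If one prefers not to invoke the determinant identity, the same conclusion follows by induction on $d$ via cofactor expansion along the last row after writing $F_{k;d,d} = A_{k;d}^0 + A_{k;d}^\varepsilon$, but this route is more computational and not more illuminating.
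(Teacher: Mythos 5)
Your proof is correct. The determinant identity $\det(M+D)=\sum_{\alpha}\bigl(\prod_{i\in\alpha}t_i\bigr)\det(M[\alpha^{c}])$ is valid (iterated cofactor expansion along the rows in $\alpha$ produces no sign issues since row and column indices coincide), the decompositions $F_k=F_k^{0}+F_k^{\varepsilon}$ and $F_k^{A}=F_k^{A^{0}}+F_k^{\varepsilon}$ with $A_{k;i}^{\varepsilon}\ge 0$ are exactly as in the paper, and your case check on $|\alpha^{c}|$ correctly absorbs the sizes $0$ and $1$ into the slack $\delta$ while invoking the hypothesis only for $l\ge 2$. Conceptually this is the same strategy as the paper's proof --- reduce $\det(F_k)-(1-\delta)\det(F_k^{A})$ to a nonnegative combination of the quantities $\det(P_k^{0})-(1-\delta)\det(P_k^{A^{0}})$ over principal submatrices --- but your execution is genuinely cleaner. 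The paper performs the expansion recursively, replacing one diagonal entry $A_j=A_j^{0}+A_j^{\varepsilon}$ at a time, which forces it to carry around intermediate ``mixed'' matrices $F^{A_m^{0};A_l}$ whose diagonals contain both $A^{0}$ and $A$ entries, and to invoke an extra inequality $A_j\ge A_j^{0}$ to convert the terminal $2\times 2$ determinants into genuine principal submatrices of $F^{0}$. Your one-shot multilinear expansion lands directly on principal submatrices of $F^{0}$ with nonnegative coefficients $\prod_{i\in\alpha}A_{k;i}^{\varepsilon}$, needs no such auxiliary estimate, and handles all submatrix sizes uniformly; the only price is the (standard) diagonal-perturbation identity, which you justify adequately. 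Both proofs yield the same $\delta$ as in the hypothesis.
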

  %NEW PROOF FOR THE d-DIM CASE
  \begin{proof}
  Let $d \in \mathbb{N}_{\geq 2}$, fix $k \in \{ 0, 1, \dots, T \}$ and omitting the time $k$ denote $F = F_{k}$.
  \par
  Furthermore we denote by $F^{A_{m}^{0}; A_{l}} := F^{A_{m}^{0}; A_{l}}(A_{m}^{0}, A_{m+1}^{0}, \dots, A_{l-1}^{0}, A_{l+1}, A_{l+2}, \dots,  A_{d})$ for $m, l \in \{ 1, \dots, d \}$, $m < l$,  the $(d-m) \times (d-m)$ symmetric matrix where for $i=j$, $j \in \{ 1, \dots, l - m \}$ we set $F_{i, j}^{A_{m}^{0}; A_{l}} = A_{m + j - 1}^{0}$ and for $j \in \{ l-m, \dots, d - m - 1 \}$ we set $F_{i, j}^{A_{m}^{0}; A_{l}} = A_{m+ j +1}$ for the diagonal elements of the matrix. Otherwise for $i \neq j$ we set $F_{i, j}^{A_{m}^{0}; A_{l}} = D_{m + i - 1, m + j - 1}$ for $i, j \in \{ 1, \dots, l - m \}$ and $F_{i, j}^{A_{m}^{0}; A_{l}} = D_{m + i + 1, m + j + 1}$ for $i, j \in \{ l-m, \dots, d - m - 1 \}$. For $m = l$ we set $F^{A_{l}^{0}; A_{l}} := F^{A_{l}^{0}; A_{l}}(A_{l+1}, A_{l+2}, \dots,  A_{d})$ which is equal to $F$ without the first $l$ rows and columns. Also note that for $l = d$ we have $F^{A_{m}^{0}; A_{d}} := F^{A_{m}^{0}; A_{d}}(A_{m}^{0}, A_{m+1}^{0}, \dots,  A_{d - 1
 }^{0})$ which is equal to $F^{0}$ without the first $m - 1$ rows and columns and without the last row and the last column. 
  \par
  Since $A_{j} = A_{j}^{0} + A_{j}^{\varepsilon}$ and using the fact that the matrices $F$ and $F^{0}$ are symmetric then one can calculate that
  \begin{align}
  \label{eq:1 lem:the reduced $F$-property}
    &\det(F) - (1 - \delta) \det(F^{A})           \nonumber
    \\
    &= \det(F^{0}) - (1 - \delta) \det(F^{A^{0}})         \nonumber
    \\
    &+ A_{1}^{\varepsilon} [ \det(F^{A_{1}^{0}; A_{1}}(A_{2}, A_{3}, A_{4}, \dots,  A_{d})) 
                          - (1 - \delta) \det(\diag(A_{2}, A_{3}, A_{4}, \dots,  A_{d})) ]          \nonumber
    \\
    &+ A_{2}^{\varepsilon} [ \det(F^{A_{1}^{0}; A_{2}}(A_{1}^{0}, A_{3}, A_{4}, \dots,  A_{d})) 
                          - (1 - \delta) \det(\diag(A_{1}^{0}, A_{3}, A_{4}, \dots,  A_{d})) ]        \nonumber
    \\
    &+ A_{3}^{\varepsilon} [ \det(F^{A_{1}^{0}; A_{3}}(A_{1}^{0}, A_{2}^{0}, A_{4}, \dots,  A_{d})) 
                          - (1 - \delta) \det(\diag(A_{1}^{0}, A_{2}^{0}, A_{4}, \dots,  A_{d})) ]         \nonumber
    \\
    &+ \dots +       \nonumber
    \\
    &+ A_{d}^{\varepsilon} [ \det(F^{A_{1}^{0}; A_{d}}(A_{1}^{0}, A_{2}^{0}, A_{3}^{0}, \dots,  A_{d-1}^{0})) 
                          - (1 - \delta) \det(\diag(A_{1}^{0}, A_{2}^{0}, A_{3}^{0}, \dots,  A_{d-1}^{0})) ] \,, 
  \end{align}
  where $F^{0}$ is the $\binom{d}{d} = 1$ principal submatrix $P^{0}[ \{ 1, 2, \dots, d \} ]$ of size $d \times d$ and  $F^{A_{1}^{0}; A_{d}}(A_{1}^{0}, A_{2}^{0}, A_{3}^{0},$ \dots$,  A_{d-1}^{0}) = P^{0}[ \{ 1, 2, \dots, d - 1 \} ]$ one of the $\binom{d}{d-1} = d$ principal submatrices of $F^{0}$ of size $(d-1) \times (d-1)$. The remaining $d-1$ principal submatrices of size $(d-1) \times (d-1)$ can be calculated recursively as in equation (\ref{eq:1 lem:the reduced $F$-property}) for the $d-1$ terms in the R.H.S of the equation. For example we have 
   \begin{align}
    A_{1}^{\varepsilon} [ & \det(F^{A_{1}^{0}; A_{1}}(A_{2}, A_{3}, A_{4}, \dots,  A_{d})) 
                          - (1 - \delta) \det(\diag(A_{2}, A_{3}, A_{4}, \dots,  A_{d})) ]         
    \nonumber \\
    = 
    %A1
        A_{1}^{\varepsilon} \Big\{ 
                          &A_{2}^{\varepsilon} [ \det(F^{A_{2}^{0}; A_{2}}(A_{3}, A_{4}, A_{5}, \dots,  A_{d})) 
                          - (1 - \delta) \det(\diag(A_{3}, A_{4}, A_{5}, \dots,  A_{d})) ] 
    \nonumber \\
    + &A_{3}^{\varepsilon} [ \det(F^{A_{2}^{0}; A_{3}}(A_{2}^{0}, A_{4}, A_{5}, \dots,  A_{d})) 
                          - (1 - \delta) \det(\diag(A_{2}^{0}, A_{4}, A_{5}, \dots,  A_{d})) ]        
    \nonumber \\
    + &A_{4}^{\varepsilon} [ \det(F^{A_{2}^{0}; A_{4}}(A_{2}^{0}, A_{3}^{0}, A_{5}, \dots,  A_{d})) 
                          - (1 - \delta) \det(\diag(A_{2}^{0}, A_{3}^{0}, A_{5}, \dots,  A_{d})) ]      
    \nonumber \\
    + &\dots +     
    \nonumber \\
    + &A_{d}^{\varepsilon} [ \det(F^{A_{2}^{0}; A_{d}}(A_{2}^{0}, A_{3}^{0}, A_{4}^{0}, \dots,  A_{d-1}^{0})) 
                          - (1 - \delta) \det(\diag(A_{2}^{0}, A_{3}^{0}, A_{4}^{0}, \dots,  A_{d-1}^{0})) ] 
     \nonumber \\                     
    + &\det( P^{0}[ \{ 2, 3, \dots, d \} ] ) 
                          - (1 - \delta) \det(P^{A^{0}}[ \{ 2, 3, \dots, d \} ]) 
                 \Big\}    \,.             
  \end{align}
  Note that $F^{A_{2}^{0}; A_{d}}(A_{2}^{0}, A_{3}^{0}, A_{4}^{0}, \dots,  A_{d-1}^{0}) = P^{0}[ \{ 2, 3, \dots, d-1 \} ]$ is one of the $\binom{d}{d-2}$ principal submatrices of $F^{0}$ of size $(d-2) \times (d-2)$. The remaining $\binom{d}{d-2} - 1$ principal submatrices of size $(d-2) \times (d-2)$ can be calculated recursively in the same way as above. 
  \par
  Continuing the calculation recursively (for each of the terms) we get, 
  \begin{align}
    &\det(F) - (1 - \delta) \det(F^{A})        
    \nonumber \\
    = &\det(P^{0}[ \{ 1, 2, \dots, d \} ]) - (1 - \delta) \det(P^{A^{0}}[ \{ 1, 2, \dots, d \} ])
    \nonumber \\
        + A_{1}^{\varepsilon} \Big\{ A_{2}^{\varepsilon} \Big\{ \dots A_{d-3}^{\varepsilon} \Big\{ 
                          &A_{d-2}^{\varepsilon} [ \det(F^{A_{d-2}^{0}; A_{d-2}}(A_{d-1}, A_{d})) 
                          - (1 - \delta) \det(\diag(A_{d-1}, A_{d} )) ] 
    \nonumber \\
    + &A_{d-1}^{\varepsilon} [ \det(F^{A_{d-2}^{0}; A_{d-1}}(A_{d-2}^{0}, A_{d})) 
                          - (1 - \delta) \det(\diag(A_{d-2}^{0}, A_{d})) ]        
    \nonumber \\
    + &A_{d	}^{\varepsilon} [ \det(F^{A_{d-2}^{0}; A_{d}}(A_{d-2}^{0}, A_{d-1}^{0})) 
                          - (1 - \delta) \det(\diag(A_{d-2}^{0}, A_{d-1}^{0})) ]      
    \nonumber \\               
    + &\det( P^{0}[ \{ d-2, d-1, d \} ] ) 
                          - (1 - \delta) \det(P^{A^{0}}[ \{ d-2, d-1, d \} ]) 
                 \Big\} \dots \Big\}    
     \nonumber \\
     + & \dots               
  \end{align}
  That means, we have rewritten the term $\det(F) - (1 - \delta) \det(F^{A})$ into terms of $\binom{d}{l}$ (distinct) principal submatrices $P^{0}$ of $F^{0}$ of size $l \times l $ where $l \in \{ 3, \dots, d \}$.  
  Moreover, we are dealing with the determinants of the $2 \times 2$ matrices as follows: for example and since $A_{d} \geq A_{d}^{0}$ we have
  \begin{align}
   &\det(F^{A_{d-2}^{0}; A_{d-1}}(A_{d-2}^{0}, A_{d})) 
                          - (1 - \delta) \det(\diag(A_{d-2}^{0}, A_{d}))
     \nonumber \\
     &= \delta A_{d-2}^{0} A_{d} - | D_{d-2, d} |^{2}
     \nonumber \\
     &\geq \delta A_{d-2}^{0} A_{d}^{0} - | D_{d-2, d} |^{2}
     \nonumber \\
     &= \det( P^{0}[ \{ d-2, d \} ] ) - (1 - \delta) \det( P^{A^{0}}[ \{ d-2, d \} ] )     \,.
  \end{align} 
  The same holds analogously for the other $2 \times 2$ principal submatrices by the fact that $A_{j} \geq A_{j}^{0}$ for $j=1, \dots, d$. So, since $A_{j}^{\varepsilon} \geq 0$ for $j=1, \dots, d$ and since by assumption the inequality (\ref{eq:0 lem:the reduced $F$-property}) holds, then we can estimate
  \begin{align}
    &\det(F) - (1 - \delta) \det(F^{A})        
    \nonumber \\
    \geq &\det(P^{0}[ \{ 1, 2, \dots, d \} ]) - (1 - \delta) \det(P^{A^{0}}[ \{ 1, 2, \dots, d \} ])
    \nonumber \\
        + A_{1}^{\varepsilon} \Big\{ A_{2}^{\varepsilon} \Big\{ \dots A_{d-3}^{\varepsilon} \Big\{ 
                          &A_{d-2}^{\varepsilon} [ \det(P^{0}[ \{A_{d-1}^{0}, A_{d}^{0} \} ]) 
                          - (1 - \delta) \det(P^{A^{0}}[ \{A_{d-1}^{0}, A_{d}^{0} \} ]) ]     \allowdisplaybreaks
    \nonumber \\
    + &A_{d-1}^{\varepsilon} [ \det(P^{0}[ \{A_{d-2}^{0}, A_{d}^{0} \} ]) 
                          - (1 - \delta) \det(P^{A^{0}}[ \{A_{d-2}^{0}, A_{d}^{0} \} ]) ]       
    \nonumber \\
    + &A_{d	}^{\varepsilon} [ \det(P^{0}[ \{A_{d-2}^{0}, A_{d-1}^{0} \} ]) 
                          - (1 - \delta) \det(P^{A^{0}}[ \{A_{d-2}^{0}, A_{d-1}^{0} \} ]) ]      
    \nonumber \\               
    + &\det( P^{0}[ \{ d-2, d-1, d \} ] ) 
                          - (1 - \delta) \det(P^{A^{0}}[ \{ d-2, d-1, d \} ]) 
                 \Big\} \dots \Big\}         
     \nonumber \\
     + & \dots  
     \nonumber \\
     \geq & \, 0  .           
  \end{align}
  That means the quantity $\det(F) - (1 - \delta) \det(F^{A})$ can be estimated from below by the determinants of principal submatrices by terms as in (\ref{eq:0 lem:the reduced $F$-property}) of $F^{0}$ and so by assumption the claim follows.
  \end{proof}
  Proposition \ref{prop:the reduced $F$-property} gives us an example when the $F$-property is fulfilled. 
  \begin{prop}
   \label{prop:the reduced $F$-property}
   Assume that the covariance matrix $F_{k}^{0}$ is positive definite at all times $k = 0, 1, \dots, T$ and $S^{j}$ has independent returns for each $j = 1, \dots, d$. Then the $F$-property holds.
  \end{prop}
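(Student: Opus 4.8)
The plan is to apply Lemma~\ref{lem:the reduced $F$-property}, so that it suffices to exhibit a single $\delta \in (0,1)$ with $\det(P_k^0) - (1-\delta)\det(P_k^{A^0}) \ge 0$ for every index set $\alpha \subset \{1,\dots,d\}$ with $|\alpha| \ge 2$, where $P_k^0 = F_k^0[\alpha]$ and $P_k^{A^0} = F_k^{A^0}[\alpha]$, and for all $k = 0,1,\dots,T$. The first step is to rewrite this in terms of correlations. Since $F_k^{A^0} = \diag(A_{k;1}^0,\dots,A_{k;d}^0)$ is exactly the diagonal of the conditional covariance matrix $F_k^0$, and since positive definiteness of $F_k^0$ forces $A_{k;j}^0 > 0$ for every $j$, I would factor $F_k^0 = (F_k^{A^0})^{1/2} R_k (F_k^{A^0})^{1/2}$ with $R_k := (F_k^{A^0})^{-1/2} F_k^0 (F_k^{A^0})^{-1/2}$ the conditional correlation matrix of $\Delta S_{k+1}$. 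Restricting to $\alpha$ and taking determinants gives $\det(P_k^0) - (1-\delta)\det(P_k^{A^0}) = \det(F_k^{A^0}[\alpha])\,\big(\det(R_k[\alpha]) - (1-\delta)\big)$, and since $\det(F_k^{A^0}[\alpha]) > 0$ the required inequality is equivalent to $\det(R_k[\alpha]) \ge 1-\delta$.

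The second step uses the independent-returns hypothesis to show that each $R_k$ is a \emph{deterministic} positive definite correlation matrix. Writing $S_{k+1}^j = S_k^j(1+\rho_{k+1}^j)$ gives $\Delta S_{k+1}^j = S_k^j \rho_{k+1}^j$; since $S_k^j$ is $\mathcal{F}_k$-measurable and the return vector $\rho_{k+1}$ is independent of $\mathcal{F}_k$, one obtains $\condCov{\Delta S_{k+1}^i, \Delta S_{k+1}^j} = S_k^i S_k^j\, \cov(\rho_{k+1}^i,\rho_{k+1}^j)$. Here $A_{k;j}^0 = \condVar{\Delta S_{k+1}^j} > 0$ together with $S \ge 0$ forces $S_k^j > 0$ a.s. (if $S_k^j = 0$ on a set of positive probability, then $\Delta S_{k+1}^j$ vanishes there, contradicting $\condVar{\Delta S_{k+1}^j} > 0$), and likewise $\var(\rho_{k+1}^j) > 0$. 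Dividing the conditional covariance by the conditional standard deviations, the $S_k^i,S_k^j$ factors cancel and $(R_k)_{ij} = \mathrm{Corr}(\rho_{k+1}^i,\rho_{k+1}^j)$, which is deterministic. (I use here the natural multidimensional reading of ``independent returns'', namely that the whole vector $\rho_{k+1}$ is independent of $\mathcal{F}_k$, consistent with Proposition~\ref{prop:independent returns}.)

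The third step extracts a uniform $\delta$. For every $k$ and every $\alpha$ with $|\alpha|\ge 2$, the submatrix $R_k[\alpha]$ is again a deterministic positive definite correlation matrix, so $\det(R_k[\alpha]) \in (0,1]$ — strict positivity from positive definiteness, the upper bound from Hadamard's inequality applied to a matrix with unit diagonal. Because the time grid is finite and there are finitely many index sets, $m := \min_{k,\alpha}\det(R_k[\alpha])$ is a strictly positive real number with $m \le 1$; setting $\delta := 1 - m/2$ gives $\delta \in (0,1)$ and $\det(R_k[\alpha]) \ge m \ge 1-\delta$ for all $k,\alpha$. By the factorization above this yields $\det(P_k^0) - (1-\delta)\det(P_k^{A^0}) \ge 0$, so the hypothesis of Lemma~\ref{lem:the reduced $F$-property} is satisfied and the $F$-property holds.

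I do not expect a genuine obstacle: the argument is essentially ``covariance $=$ diagonal $\times$ correlation $\times$ diagonal, and under independent returns the correlation factor is a fixed nondegenerate matrix.'' The only two points that need care are (i) the cancellation that makes the conditional correlation deterministic, which relies on $\mathcal{F}_k$-measurability and strict positivity of the spot prices (the latter a consequence of the positive-definiteness assumption), and (ii) choosing $\delta$ independently of $k$ and $\alpha$, which is immediate once one notes that only finitely many matrices are involved.
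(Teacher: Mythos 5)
Your proposal is correct and follows essentially the same route as the paper: both invoke Lemma~\ref{lem:the reduced $F$-property}, use $\Delta S_{k+1}^{j}=S_{k}^{j}\rho_{k+1}^{j}$ with independence of $\rho_{k+1}$ from $\mathcal{F}_{k}$ to cancel the random $S_{k}^{j}$ factors and reduce each principal-minor inequality to a condition on a deterministic, positive definite matrix, and then conclude from positivity of the resulting determinants. Your packaging via the conditional correlation matrix $R_{k}=(F_{k}^{A^{0}})^{-1/2}F_{k}^{0}(F_{k}^{A^{0}})^{-1/2}$ and your explicit minimization over the finitely many $(k,\alpha)$ to obtain a single uniform $\delta$ (plus the derivation of $S_{k}^{j}>0$ from positive definiteness) are slightly more careful than the paper's write-up, but the underlying argument is the same.
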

  \begin{proof}
  %NEW PROOF FOR THE d-DIM CASE
  Let $d \in \mathbb{N}_{\geq 2}$.
  \par
  Fix $k \in \{ 0, 1, \dots, T \}$. First we introduce the notation $\bar A_{k; j}^{0} := \Var( \rho_{k+1}^{j} )$, $\bar D_{k; i, j} := \Cov(\rho_{k+1}^{i}, \rho_{k+1}^{j})$ for $i \neq j$ where $\bar F_{k; i, j}^{0} = \bar A_{k; j}^{0}$ for $i=j$, $\bar F_{k; i, j}^{0} = \bar D_{k; i, j}$ otherwise. Our aim is to make use of Lemma \ref{lem:the reduced $F$-property}. For simplicity we omit the time $k$ and denote $F = F_{k}$.
  \par
  First note that since the covariance matrix $F^{0}$ is positive definite then
  \begin{equation}
    \det(F^{0}) >  0 \text{ and } \det(F^{A^{0}}) >  0  \,.
  \end{equation}
  Now using $\Delta S_{k+1}^{j} = S_{k}^{j} \rho_{k+1}^{j}$, the fact that $\rho_{k+1}^{j}$ is independent of $\mathcal{F}_{k}$ for all $j=1, \dots, d$, the properties of the determinant and the symmetry of the covariance matrix $F^{0}$ we get  
    \begin{align}
    \det(F^{0}) &= |S^{1}_{k}|^{2} \cdots |S^{d}_{k}|^{2} \det(\bar F^{0}) 
                        > 0   
      \nonumber \\
      \det(F^{A^{0}}) &= |S^{1}_{k}|^{2} \cdots |S^{d}_{k}|^{2} \det(\bar F^{\bar A^{0}}) 
                        > 0
  \end{align}
  with the obvious notation $\bar F_{k}^{\bar A^{0}} := diag(\bar A_{k; 1}^{0}, \dots, \bar A_{k; d}^{0})$.
  Since $S_{k}^{j} > 0$, this implies 
  \begin{align}
  \label{eq:1 the reduced $F$-property}
    \det(F^{0}) - (1 - \delta) \det(F^{A^{0}}) \geq 0
    \iff
    \det(\bar F^{0}) - (1 - \delta) \det(\bar F^{A^{0}}) \geq 0
  \end{align}
  for $\delta \in (0, 1)$.
  Furthermore, since $\bar F^{0}$ and $\bar F^{\bar A^{0}}$ are deterministic matrices with $\det(\bar F^{0}) >  0$ and $\det(\bar F^{\bar A^{0}}) > 0$, then
  \begin{align}
    \det(\bar F^{0}) - (1 - \delta) \det(\bar F^{\bar A^{0}}) 
    \geq 0 
  \end{align}
  for some $\delta \in (0, 1)$. For the $1$ principal submatrix of $F^{0}$ of size $d \times d$ which is again the matrix $F^{0}$ we want to show that 
  \begin{align}
    \det(F^{0}) + (1 - \delta) \det (F^{A^{0}})
    \geq 0   \,
  \end{align}
  which for independent returns and positive marginal price process is equivalent to $\det(\bar F^{0}) + (1 - \delta) \det (\bar F^{A^{0}}) \geq 0$ as shown in the equivalence relation (\ref{eq:1 the reduced $F$-property}). So it remains to show that for the all (distinct) $\binom{d}{l}$ principal submatrices $P^{0}$ of $F^{0}$ of size $l \times l $ where $l \in \{ 2, \dots, d - 1 \}$ we have that $\det(P^{0}) + (1 - \delta) \det (P^{A^{0}}) \geq 0$ for some $\delta \in (0, 1)$. Now using again the fact that $F_{k}^{0}$ is positive definite then we know that each principal submatrix $P^{0}$ is positive definite \citep[Observation 7.1.2]{horn.johnson:2012}. That means
  \begin{equation}
    \det(P^{0}) > 0 \text{ and } \det(P^{A^{0}}) > 0  \,.
  \end{equation} 
 Since all principal submatrices $P^{0}$ of $F^{0}$ are covariance matrices, then by the same argumentation (and obvious notation) as above we get $\det(\bar P^{0}) - (1 - \delta) \det(\bar P^{\bar A^{0}}) \geq 0$ for some $\delta \in (0, 1)$ which for independent returns and $S_{k}^{j} > 0$ is equivalent to 
  \begin{align}
    \det( P^{0}) + (1 - \delta) \det ( P^{A^{0}}) \geq 0    \,.
  \end{align}
  Finally, from Lemma \ref{lem:the reduced $F$-property} the claim follows.  
   \end{proof}
    \begin{prop}
   \label{prop:the reduced $F$-property for independent increments}
    Assume that the covariance matrix $F_{k}^{0}$ at all times $k = 0, 1, \dots, T$ is positive definite and $S^{j}$ has independent increments for each $j = 1, \dots, d$. Then the $F$-property holds.
  \end{prop}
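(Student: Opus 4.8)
The plan is to argue as in the proof of Proposition~\ref{prop:the reduced $F$-property}, the only difference being that under independent increments the matrix $F_k^0$ is itself deterministic, so there is no need to peel off the random prefactor $|S_k^1|^2\cdots|S_k^d|^2$ that shows up in the independent-returns case. I may assume $d \ge 2$, since for $d = 1$ the $F$-property reduces to $\condVar{ \Delta S_{k+1} } + \condE{ \varepsilon_{k+1}S_{k+1} } \geq 0$, which always holds, as noted before Definition~\ref{defi:the F-property}.

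The first step is to record that, for each $j$, the increment $\Delta S_{k+1}^j = S_{k+1}^j - S_k^j$ is independent of $\mathcal{F}_k$, exactly as in Proposition~\ref{prop:independent increments}; consequently $A_{k;j}^0 = \condVar{ \Delta S_{k+1}^j }$ and $D_{k;i,j} = \condCov{ \Delta S_{k+1}^i, \Delta S_{k+1}^j }$ coincide (a.s.) with their unconditional values and hence are deterministic, so the matrices $F_k^0$ and $F_k^{A^0} = \diag(A_{k;1}^0, \dots, A_{k;d}^0)$ are deterministic. The second step is to reduce to Lemma~\ref{lem:the reduced $F$-property}: it suffices to produce a single $\delta \in (0,1)$ with $\det(P_k^0) - (1-\delta)\det(P_k^{A^0}) \ge 0$ for every $k \in \{0,\dots,T\}$ and every principal submatrix $P_k^0$ of $F_k^0$ (with $P_k^{A^0}$ the matching diagonal submatrix of $F_k^{A^0}$) of size $l \in \{2,\dots,d\}$. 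Since $F_k^0$ is positive definite by hypothesis, each such $P_k^0$ is positive definite \citep[Observation 7.1.2]{horn.johnson:2012}, so $\det(P_k^0) > 0$, its diagonal entries are strictly positive, and hence $\det(P_k^{A^0}) = \prod_j A_{k;j}^0 > 0$; by Hadamard's inequality one moreover has $\det(P_k^0) \le \det(P_k^{A^0})$, so the ratio $r_{k,\alpha} := \det(P_k^0)/\det(P_k^{A^0})$ lies in $(0,1]$.

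For the final step I would extract the uniform $\delta$. Because the time horizon $\{0,\dots,T\}$ is finite and each $F_k^0$ has only finitely many principal submatrices, $M := \max_{k,\alpha}(1 - r_{k,\alpha})$ is attained and lies in $[0,1)$, so any $\delta \in (M,1)$ satisfies $1-\delta < r_{k,\alpha}$, i.e.\ $\det(P_k^0) - (1-\delta)\det(P_k^{A^0}) \ge 0$, simultaneously for all $k$ and all submatrices; Lemma~\ref{lem:the reduced $F$-property} then yields the $F$-property. I do not expect any genuine obstacle here: the entire content is the observation that independence of the increments makes $F_k^0$ deterministic, and the only point needing a word of justification is the uniformity of $\delta$ in $k$, which is immediate since $k$ runs over a finite set.
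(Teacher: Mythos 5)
Your proposal is correct and follows essentially the same route as the paper, whose proof of this proposition is simply the remark that it ``follows by analogous arguments'' as in Proposition~\ref{prop:the reduced $F$-property}: reduce to Lemma~\ref{lem:the reduced $F$-property}, use positive definiteness of $F_k^0$ and of all its principal submatrices to get strictly positive determinants, and exploit that under independent increments these matrices are deterministic (here even more directly than in the independent-returns case, where the factor $|S_k^1|^2\cdots|S_k^d|^2$ must first be peeled off). Your additional observations --- the Hadamard bound $\det(P_k^0)\le\det(P_k^{A^0})$ and the explicit extraction of a uniform $\delta$ over the finitely many $k$ and submatrices --- merely make precise steps the paper leaves implicit.
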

  \begin{proof}
  Follows by analogous arguments as in Proposition \ref{prop:the reduced $F$-property}.
   \end{proof}
   \begin{remark}
     Note that rewriting Lemma \ref{lem:the reduced $F$-property} when $\varepsilon = 0$ then the condition simply reduces to the covariance matrix being such
    \begin{equation}
      \det(F^{0}) - (1 - \delta) \det(F^{A^{0}})
      \geq 0
    \end{equation}
    for some $\delta \in (0, 1)$ and principal submatrices do not need to be considered. 
   \end{remark}
  \begin{remark}
  \label{rem:the 2 dim F property}
    In the $2$-dimensional case in order to ensure that $F_{k}^{0}$ is positive definite\footnote{Recall that a matrix $F$ is positive definite if and only if its leading principal minors are all positive.}, i.e. $A_{k; 1}^{0} A_{k; 2}^{0} - D_{k; 1, 2}^{2} > 0$, $A_{k; 1}^{0} > 0$, $A_{k; 2}^{0} > 0$, in the case of independent returns (or increments) we just need strict Cauchy-Schwarz inequality, which means that $S^{1}$ and $S^{2}$ must be linearly independent. Then Proposition \ref{prop:the reduced $F$-property} can be applied. 
  \end{remark}

   %Nonnegative supply curve
  \subsection{Nonnegative supply curve}
  \label{sec:nonnegative supply curve}
In this section we consider the $1$-dimensional case for simplicity. An extension to the multidimensional case is straightforward. As we already mentioned the (linear) supply curve  $S_{k}(x) = (1 + x \varepsilon_{k}) S_{k}$ can also take negative values when a negative transaction $x$ is such that $x\leq -1/\varepsilon$. So, a natural question to ask is how one could define a function $h: \mathbb{R} \to \mathbb{R}$ so that the supply curve process %%% \footnote{Note that in this case of a market sell, the process $S_{k}$ represents the best bid price at time $k$.}
  \begin{equation}
    S_{k}(x) = h(x)S_{k}
  \end{equation}
  is nonnegative. This can be done for example by the function
  \begin{equation}
    h(x)
      = ( 1 + x \varepsilon_{k} ) \mathbf{1}_{ \{ x \geq -z_{k} \} } 
      + ( 1 - z_{k} \varepsilon_{k} ) \mathbf{1}_{ \{ x < -z_{k} \} } 
  \end{equation}
 defined for some deterministic positive process $z = (z_{k})_{k=0, 1, \dots, T}$ where $0 < z_{k} \leq 1/\varepsilon_{k}$ for all $k=0, 1, \dots, T$. Then $z_{k} S_k$ represents a lower bound for the price received when selling a large quantity of shares.
   \par
  The corresponding cost process under illiquidity $ \hat C^{b}(\varphi) = (\hat C_{k}^{b}(\varphi))_{k=0,1, \dots, T}$ of a strategy $\varphi = (X, Y)$ is then
   \begin{align}
       \hat C_{k}^{b}(\varphi) 
         := V_{k} (\varphi) - \sum_{m=1}^{k} X_{m} \Delta S_{m} 
         & + \sum_{m=1}^{k} \varepsilon_{m} S_{m} | \Delta X_{m+1} |^{2} \mathbf{1}_{ \{ \Delta X_{m+1} \geq -z_{m} \} }
         \nonumber \\ &
          - \sum_{m=1}^{k} z_{m} \varepsilon_{m} S_{m} \Delta X_{m+1} \mathbf{1}_{ \{ \Delta X_{m+1} < -z_{m} \} }.
   \end{align} 
 Moreover, as in Section \ref{sec:existence and recursive construction of an optimal strategy under illiquidity} and by Proposition \ref{prop:minimizing variance} at time $k$ we want to minimize the expression (w.l.o.g. $\alpha = 1$)
    \begin{align}
       & \condVar{ V_{k+1}(\varphi) - X'_{k+1} \Delta S_{k+1} }
       \nonumber \\& \quad \quad \quad \quad 
       + \condE{ \varepsilon_{k+1} S_{k+1} | X_{k+2} - X'_{k+1} |^{2} \mathbf{1}_{ \{ X_{k+2} - X'_{k+1} \geq -z_{k+1} \} } }
       \nonumber \\& \quad \quad \quad \quad 
       - \condE{ z_{k+1} \varepsilon_{k+1} S_{k+1} (X_{k+2} - X'_{k+1}) \mathbf{1}_{ \{ X_{k+2} - X'_{k+1} < -z_{k+1} \} } }
     \end{align}
    over all appropriate $X'_{k+1}$. Rewriting the above expression, one needs to minimize the function $\hat f_{k}^{b} : \mathbb{R} \times \Omega \rightarrow \mathbb{R}^{+}$ defined by
    \begin{align}
      \hat f_{k}^{b}(c, \omega)
        &= | c |^{2} \hat A_{k}^{b}(\omega)
          - 2 c \hat b_{k}^{b}(\omega)
          + c \hat d_{k}^{b}(\omega)
        \nonumber \\
        &+ \condVar{ V_{k+1} }(\omega) 
          + \condE{ \varepsilon_{k+1} S_{k+1} | X_{k+2} |^{2} \mathbf{1}_{ \{ X_{k+2} - c \geq -z_{k+1} \} } }(\omega) 
          \nonumber \\ 
          & \quad \quad \quad \quad \quad \quad \quad 
          - \condE{ z_{k+1} \varepsilon_{k+1} S_{k+1} X_{k+2} \mathbf{1}_{ \{ X_{k+2} - c < -z_{k+1} \} } } (\omega) 
    \end{align}
  where the following notation is used, 
  \begin{align}
    \hat A_{k}^{b}
    &= \condVar{ \Delta S_{k+1} }
      + \condE{ \varepsilon_{k+1} S_{k+1} \mathbf{1}_{ \{ X_{k+2} - c \geq -z_{k+1} \} } }
    \nonumber \\
    \hat b_{k}^{b}
    &= \condCov{ V_{k+1}, \Delta S_{k+1} }
      + \condE{ \varepsilon_{k+1} S_{k+1} X_{k+2} \mathbf{1}_{ \{ X_{k+2} - c \geq -z_{k+1} \} } }
    \nonumber \\
    \hat d_{k}^{b}
    &= \condE{ z_{k+1} \varepsilon_{k+1} S_{k+1} \mathbf{1}_{ \{ X_{k+2} - c < -z_{k+1} \} } }   \,.
  \end{align}
  Furthermore, under similar arguments and assumptions as in Sections \ref{sec:existence and recursive construction of an optimal strategy under illiquidity} and \ref{sec:assumptions}, one can use the dominated convergence theorem to show that the equation $\frac{d}{dc} \hat f_{k}^{b}(c) = 0$ gives that the optimal strategy $\hat{\varphi}=(\hat{X}, \hat{Y})$ fulfills the implicit relation
  \begin{align}
    \hat X_{k+1}
    = \frac{ \condCov{ V_{k+1}, \Delta S_{k+1} }
             + \condE{ \varepsilon_{k+1} S_{k+1} \hat X_{k+2} \mathbf{1}_{ \{ \hat X_{k+2} - \hat X_{k+1} \geq -z_{k+1} \} } } 
             - \frac{1}{2} Q  
               }
              { \condVar{ \Delta S_{k+1} }
              + \condE{ \varepsilon_{k+1} S_{k+1} \mathbf{1}_{ \{ \hat X_{k+2} - \hat X_{k+1} \geq - z_{k+1} \} } }
               }
  \end{align} 
  with 
  \begin{equation}
    Q = \condE{ z_{k+1} \varepsilon_{k+1} S_{k+1} \mathbf{1}_{ \{ \hat X_{k+2} - \hat X_{k+1} < -z_{k+1} \} } }.
  \end{equation}

\section{Application to Electricity Markets}
\label{sec:applications}
  
In this section we apply the previous results to hedge an Asian-style electricity option with electricity futures that are exposed to liquidity costs. These futures might have different maturities, i.e.~certain hedge instruments might terminate before maturity of the option (final time horizon $T$) and hedging in these instruments is only possible on certain subintervals of $[0,T]$. A priori this situation is not covered by our setting in the previous sections where it is assumed that hedging is possible until $T$ in all hedge instruments. In Subsection~\ref{sec:the general case}, we thus shortly sketch how hedge instruments with different maturities can be embedded in our setting from the previous sections, before we focus our example on electricity markets in Subsection~\ref{sec:the energy market case}.

   %The General Case
   \subsection{Hedge instruments with different maturities}
   \label{sec:the general case}

\par
On our stochastic basis $(\Omega, \mathcal{F}, \mathbb{F}, \mathbb{P})$ with final time horizon $T$, consider now nonnegative price processes $S^j = (S^j_{k})_{k=0, 1, \dots, T_j}$ of $d$ available hedge instruments with maturity $T_j\le T$, $j=1, \dots, d$. That is, hedging in asset $j$ is only possible until time $T_j\le T$, $j=1, \dots, d$, where without loss of generality we assume $0 < T_{1} \le T_{2} \le \dots \le T_{d}\le T$. To fit this situation into our general setting, we introduce an associated $d$-dimensional price process  $\tilde S = (\tilde S_{k})_{k=0, 1, \dots, T}$ by artificially keeping each asset $S^j$ constant on the remaining interval $[T_j,T]$:
\begin{equation}
\label{eq:price process general applications}
  \tilde S_{k}^{j}
    = S_{k}^{j} \mathbf{1}_{[0, T_{j} )}(k) + S_{T_{j}}^{j} \mathbf{1}_{[T_{j}, T ]}(k)
\end{equation}   
for $j=1, \dots, d$ and $k \in \{ 0, 1, \dots, T \}$. Moreover, we consider a positive, deterministic $\mathbb{R}^{d}$-valued liquidity process $\varepsilon=(\varepsilon_{k})_{k=0, 1, \dots, T}$, which is extended by some $\varepsilon_{m}^{j} > 0$ on the intervals $m \in[T_{j},T] $ for all $j \in \{ 1, \dots, d \}$, i.e.~we assume positive liquidity costs during the extended price dynamics. 

It is then clear already intuitively that an investor would not trade in asset $j$ during the interval $[T_{j},T]$ since during this time frame the asset generates zero gains while incurring positive liquidity costs. Indeed, employing the fact for $k \geq T_{l}$ we have $\Delta \tilde S_{k+1}^{l} = 0$, it is straightforward to see from Proposition \ref{prop:minimizing variance}, Property \ref{prop:2}, that in this situation a LRM-strategy must be of the form $\tilde X_{m}^{l} = 0 \text{ for } m = T_{l}+1, \dots, T \,\,, l \in \{ 1, \dots, d \}$. i.e.~the hedger liquidates his position in the $j$-th asset at time $T_{j}+1$. Thus, in our extended market a LRM-strategy $\tilde X$ automatically respects the original hedge constraints beyond maturities $T_j$, $j \in \{ 1, \dots, d \}$ and is thus also a LRM-strategy in our setting with hedge instruments with different maturities. In the following we say the asset $\tilde S^j$ is \emph{active} at time $k$ if $k\le T_j$ and \emph{in
 active} at time $k$ if $k > T_j$.

The existence and computation of a LRM-strategy under a linear supply curve $\tilde S_{k}^{j}(x^{j}) = \tilde S_{k}^{j} + x^{j}\varepsilon_{k}^{j} \tilde S_{k}^{j}$ as developed in Section~\ref{sec:linear:supply} now takes the following form for hedge instruments with different maturities. Using the fact that a LRM-strategy $\tilde X$ fulfills $\tilde X_{m}^{l} = 0 \text{ for } m = T_{l}+1, \dots, T \,\,, l \in \{ 1, \dots, d \}$, the minimization at step $k \in \{ 0, 1, \dots, T-1 \}$ of the function $f_k$ in \eqref{eq:minimizing function} reduces to the minimization of the function $\tilde f_{k} : \mathbb{R}^{d-l_k} \times \Omega \rightarrow \mathbb{R}^{+}$ defined by 
    \begin{align}\label{ftilde}
      \tilde f_{k}(c, \omega)
        &= \sum_{j = l_k + 1}^{d} | c_{j} |^{2} A_{k; j}(\omega)
          - 2 \sum_{j = l_k + 1}^{d} c_{j} b_{k; j}(\omega)
          + \sum_{j \neq i, j, i \geq l_k + 1}^{d} c_{j} c_{i} D_{k; j, i}(\omega)
        \\
        &+ \condVar{ V_{k+1} }(\omega) 
          + \sum_{j = l_k + 1}^{d} \condE{
               \varepsilon_{k+1}^{j} S_{k+1}^{j} | X_{k+2}^{j} |^{2} }(\omega)   \nonumber 
    \end{align}
where the sums are only over the assets $\tilde S^j$, $j=l+1,...,d$, that are active during the $k$'th period, i.e.~$l_k:=\max\{r\in \{ 1, \dots, d \}:T_{r} < k\}$. Thus, the conditions required in Theorem~\ref{theo:existence} for existence of a LRM-strategy reduce to lower-dimensional conditions that in each period only concern the active hedge instruments. More precisely, using the notation from Section \ref{sec:linear:supply}, we define for each period $k \in \{ 0, 1, \dots, T-1 \}$ the symmetric matrix $\tilde F_{k} \in \mathbb{R}^{d-l_k \times d-l_k}$ (a principal submatrix of $F_{k}$) by $\tilde F_{k; i, j} = D_{k; i+l_k, j+l_k}$ for $i \neq j$, $\tilde F_{k; i, j} = A_{k; j+l_k}$ for $i = j$, $i, j \in \{ 1, \dots, d-l_k \}$ and $\tilde b_{k} := ( b_{k; l_k+1}, \dots, b_{k; d} )^{*} \in \mathbb{R}^{d-l_k}$. Then minimizing \eqref{ftilde} amounts to solving the linear system
  \begin{equation}
    \tilde F_{k} c = \tilde b_{k}   \,.
  \end{equation} 
in $c\in \mathbb{R}^{d-l_k}$. Note that $\tilde F_{k} = \tilde F_{k}^{0} + \tilde F_{k}^{\varepsilon}$ where $\tilde F^{\varepsilon} = diag(A_{k;l+1}^{\varepsilon}, \dots, A_{k;d}^{\varepsilon})$ and $\tilde F_{k}^{0}$ is the matrix $\tilde F_{k}$ with $\varepsilon_{k+1}^{j} = 0$ for $j=l+1, \dots, d$, that is a reduced form of the covariance matrix of the price process $\tilde S$. Following the arguments in Section \ref{sec:linear:supply}, we then get the following version of Theorem~\ref{theo:existence} on the existence of a LRM-strategy in the context of hedge instruments with different maturities:

\begin{cor}
  \label{cor:existence}
    Consider a contingent claim $H = \bar X_{T+1}^{*} S_{T} + \bar Y_{T} \in \mathbb{L}_{T}^{2, 1}$ with $\bar X_{T+1} = 0$ and a price process of the form in equation (\ref{eq:price process general applications}). Assume that for each $k$-th period, the covariance matrix $\tilde F_{k}^{0}$ is positive definite. Furthermore assume that bounded mean-variance tradeoff, the $F$-property and the $F$-diagonal condition hold for the active assets in the $k$-th period at time $k \in \{ 0, 1, \dots, T-1 \}$. Then there exists a LRM-strategy $\hat \varphi = ( \hat X, \hat Y )$ under illiquidity with $\hat X_{T+1} = 0$, $\hat Y_{T} = H$. In particular for $k \in \{ 0, 1, \dots, T-1 \}$ we have $\hat X = (\bar 0, \tilde X)$ with $\bar 0 = (0, \dots, 0) \in \mathbb{R}^{l_{k}}$ and
    \begin{equation}
      \tilde X_{k+1} 
      = \tilde F_{k}^{-1} \tilde b_{k}
      \quad \mathbb{P}-a.s.
    \end{equation} 
    in $\mathbb{R}^{d-l_{k}}$ and for $k \in \{ 0, 1, \dots, T-1 \}$ 
    \begin{equation}
      \hat Y_{k}
      = \condE{ \hat W_{k} } - \hat X_{k+1}^{*} \tilde S_{k}
      \quad \Pas
    \end{equation}
    where $\hat W_{k} = H - \sum_{m=k+1}^{T} \hat X_{m}^{*} \Delta \tilde S_{m}$.
  \end{cor}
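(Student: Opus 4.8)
The plan is to repeat the backward induction from the proof of Theorem~\ref{theo:existence}, but carried out on the reduced, active-asset system $(\tilde F_{k},\tilde b_{k})$ instead of the full system $(F_{k},b_{k})$. The starting point is the observation already made in the text preceding the corollary: for a price process $\tilde S$ of the form \eqref{eq:price process general applications} one has $\Delta\tilde S_{k+1}^{l}=0$ for all $k\ge T_{l}$, so Property~\ref{prop:2} of Proposition~\ref{prop:minimizing variance} forces any LRM-strategy to satisfy $\tilde X_{m}^{l}=0$ for $m=T_{l}+1,\dots,T$ and each $l$. As a consequence, at step $k$ the coordinate $c_{l}$ of an inactive asset enters the objective $f_{k}$ of \eqref{eq:minimizing function} only through the nonnegative term $\alpha\,\condE{\varepsilon_{k+1}^{l}\tilde S_{k+1}^{l}(X_{k+2}^{l}-c_{l})^{2}}$, which --- using $\hat X_{k+2}^{l}=0$ from the induction hypothesis together with $\varepsilon_{k+1}^{l}>0$ --- is minimized at $c_{l}=0$. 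Hence minimizing $f_{k}$ over $\mathbb{R}^{d}$ is equivalent to minimizing $\tilde f_{k}$ of \eqref{ftilde} over the active block $\mathbb{R}^{d-l_{k}}$ and setting the inactive coordinates to zero.

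Granting this reduction, I would argue exactly as in Theorem~\ref{theo:existence}. The base case $k=T$ is immediate since $\bar X_{T+1}=0$ and $\hat Y_{T}=H\in\mathbb{L}_{T}^{2,1}$. For the inductive step at time $k$: the covariance block $\tilde F_{k}^{0}$ is positive definite by assumption and $\tilde F_{k}=\tilde F_{k}^{0}+\tilde F_{k}^{\varepsilon}$ with $\tilde F_{k}^{\varepsilon}$ diagonal positive semidefinite, so $\tilde F_{k}$ is invertible, $c\mapsto\tilde f_{k}(c,\omega)$ is strictly convex, and there is a unique $\mathcal{F}_{k}$-measurable minimizer $\tilde X_{k+1}=\tilde F_{k}^{-1}\tilde b_{k}$; one then sets $\hat X_{k+1}=(\bar 0,\tilde X_{k+1})$ and defines $\hat Y_{k}=\condE{\hat W_{k}}-\hat X_{k+1}^{*}\tilde S_{k}$, which is $\mathcal{F}_{k}$-measurable and makes the classical cost process $C(\hat\varphi)$ a martingale. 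It remains to verify the integrability properties (i)--(iii) used in the proof of Theorem~\ref{theo:existence}. For the inactive coordinates they are trivial, the coordinates being identically zero. For the active coordinates, $\tilde F_{k}$ is a principal submatrix of $F_{k}$ and thus plays exactly the role of $F_{k}$ in Lemmas~\ref{lem:the alpha, beta, gamma terms} and \ref{lem:alpha beta terms boundedness}; the hypotheses of the corollary --- bounded mean-variance tradeoff, the $F$-property and the $F$-diagonal condition for the active assets in period $k$ --- are precisely what those lemmas need, so $\hat X_{k+1}^{j}\Delta\tilde S_{k+1}^{j}\in\mathbb{L}_{T}^{2,1}$ and $\hat X_{k+1}^{j}\in\mathbb{L}_{T}^{2,1}$ for active $j$, whence the liquidity-cost bounds $\varepsilon_{k+1}^{j}\tilde S_{k+1}^{j}(\hat X_{k+2}^{j}-\hat X_{k+1}^{j})^{2}\in\mathbb{L}_{T}^{1,1}$ and $|\hat X_{k+1}^{j}|^{2}\tilde S_{k}^{j}\in\mathbb{L}_{T}^{1,1}$ follow verbatim from the comparison-with-$X_{k+1}=0$ argument there.

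Once the induction is complete, $C(\hat\varphi)$ is a martingale and at every $k$ the vector $\hat X_{k+1}$ minimizes the conditional functional of Property~\ref{prop:2}, so Proposition~\ref{prop:minimizing variance} delivers that $\hat\varphi=(\hat X,\hat Y)$ is LRM under illiquidity with the stated closed form. The step I expect to require the most care is the bookkeeping at the transition times $k=T_{l}$ where an asset turns from active to inactive: one must confirm that the decoupling of $f_{k}$ into the active block plus nonnegative liquidity terms remains valid there --- in particular that $\hat X_{T_{l}+1}^{l}=0$ is enforced even though this coordinate is formally still around --- and that the principal-submatrix structure is invoked consistently, so that positive definiteness and the $F$-conditions are only ever used on coordinates whose increments have not yet been frozen. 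Everything else is a transcription of the proof of Theorem~\ref{theo:existence} to the time-varying dimension $d-l_{k}$.
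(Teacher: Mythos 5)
Your proposal is correct and follows essentially the same route as the paper, which proves the corollary exactly by the reduction you describe: inactive coordinates are forced to zero via Proposition \ref{prop:minimizing variance}\ref{prop:2} (since $\Delta\tilde S_{k+1}^{l}=0$ and $\varepsilon^{l}>0$), the minimization of $f_k$ collapses to $\tilde f_k$ on the active block, and the backward induction of Theorem \ref{theo:existence} is rerun with $\tilde F_k$ playing the role of $F_k$ in Lemmas \ref{lem:the alpha, beta, gamma terms} and \ref{lem:alpha beta terms boundedness}. Your flagged concern about the transition times $k=T_l$ is the right place to be careful, and your handling of it matches the paper's (implicit) treatment.
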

 
% \begin{theo}
  %  \label{theo:existence}
  %  Assume that $S$ has the $F$-property, bounded mean-variance tradeoff and satisfies the $F$-diagonal condition. Let further the covariance matrix $F_{k}^{0}$ be positive definite at all times $k = 0, 1, \dots, T$. Then for any contingent claim $H = \bar X_{T+1}^{*} S_{T} + \bar Y_{T} \in \mathbb{L}_{T}^{2, 1}$ with $\bar X_{T+1}^{*} S_{T} \in \mathbb{L}_{T}^{2, 1}$ and $\bar X_{T+1} \in \mathbb{L}_{T}^{2, d}$, there exists a local risk-minimizing strategy $\hat \varphi = ( \hat X, \hat Y )$ under illiquidity with $\hat X_{T+1} = \bar X_{T+1}$ and $\hat Y_{T} = \bar Y_{T}$. Furthermore, the strategy has the representation
  %  \begin{align}
    %  \hat X_{k+1} 
   %   &= F_{k}^{-1} b_{k}
    %  \quad \Pas \text{ for } k = 0, \dots, T-1
    %  \label{eq:LRM-strategy X}
    %  \\
    %  \hat Y_{k}
    %  &= \condE{ \hat W_{k} } - \hat X_{k+1}^{*} S_{k}
    %  \quad \Pas \text{ for } k = 0, 1, \dots, T-1
     % \label{eq:LRM-strategy Y}
  %  \end{align}
  %  where $\hat W_{k} = H - \sum_{m=k+1}^{T} \hat X_{m}^{*} \Delta S_{m}$.
 % \end{theo}

  %The Energy Market Case
   \subsection{LRM strategies in electricity markets}
   \label{sec:the energy market case}
In the remaining parts of the section, we now consider the example of hedging an Asian-style electricity option with electricity futures under liquidity costs by a LRM-strategy. The price processes for electricity futures we are considering are based on a continuous-time multi-factor spot price model proposed in~\cite{benth.meyerbrandis.kallsen:2007}, which we recall in Subsection~\ref{model} before we explicitly compute and simulate LRM-strategies in an example in Subsection~\ref{simulation}. 

\subsubsection{An electricity market model} \label{model} 

In~\cite{benth.meyerbrandis.kallsen:2007}, the price $E(t)$ of spot electricity at time $t\in[0,T]$ is modeled by 
  \begin{equation}\label{spotmodel}
     E(t)
     =\sum_{i=1}^{n}  \Lambda_i(t) Y_{i}(t)\,,
   \end{equation}
 where for $i=1, \dots, n$ the positive and deterministic function $\Lambda_i$ accounts for seasonality and $Y_{i}$ is the solution to an Ornstein-Uhlenbeck stochastic differential equation
   \begin{equation}
     dY_{i}(t)
     = - \lambda_{i} Y_{i}(t) dt + \sigma_{i}(t) dL_{i}(t)
     \quad , \quad 
     Y_{i}(0) = y_{i}\,,
   \end{equation}
 where $\lambda_{i} > 0$ are constants, and $\sigma_{i}(t)$ are deterministic, positive bounded functions.
   Moreover, the $L_{i}$'s are independent, increasing pure jump L\'{e}vy processes with jump measures $N_{i}(dt, dz)$ which have deterministic predictable compensators of the form \linebreak $\nu_{i}(dt, dz) = dt \nu_{i}(dz)$. Note that by the increasing nature of the $L_{i}$'s the positivity of the $Y_{i}$'s and thus also of the spot price $E$ is ensured.  We assume that the model~\eqref{spotmodel} is defined on a stochastic basis $( \Omega, \mathbb{F}, (\mathcal{F}_{t})_{0 \leq t \leq T}, \mathbb{P} )$ where the filtration $(\mathcal{F}_{t})_{0 \leq t \leq T}$ is generated by the $L_{i}$'s.
   
The available hedge instruments are electricity futures, which, by the flow character of electricity, delivers spot electricity over a delivery period $[ T_{1}^{F}, T_{2}^{F} ]$ for $T_{1}^{F}<T_{2}^{F}\le T$ rather than at a fixed point in time. That is, the pay-off of the (financially settled) futures at the end of the delivery period is 
   \begin{equation}
   \frac{1}{T_{2}^{F} - T_{1}^{F}} \int_{T_{1}^{F}}^{T_{2}^{F}} E(u) du  \,,
   \end{equation}
 and the life of the asset terminates at $T_{2}^{F}$. In order to compute the price dynamics of an electricity futures we assume for simplicity that $\mathbb{P}$ is already an equivalent martingale measure, such that the the price $F(t;T_{1}^{F}, T_{2}^{F})$ of the futures at time $t\le T_{2}^{F}$ as a traded asset is given by 
%   
%   \par
%   The processes $L_{i}$'s can be represented in terms of the jump measures as follows, 
%   \begin{equation*}
%     L_{i}(t)
%     = \int_{0}^{t} \int_{0}^{\infty} z N_{i}(ds, dz)
%   \end{equation*} 
%   and setting $\tilde N_{i}(ds, dz) = N_{i}(ds, dz) - \nu_{i}(ds, dz)$, the compensated jump process $\tilde L_{i}$ can be written as 
%   \begin{equation*}
%     \tilde L_{i}(t)
%     = \int_{0}^{t} \int_{0}^{\infty} z \tilde N_{i}(ds, dz)      \,.
%   \end{equation*} 
   \begin{equation} \label{futures}
     F(t;T_{1}^{F}, T_{2}^{F})
     = \E \left[ \frac{1}{T_{2}^{F} - T_{1}^{F}} \int_{T_{1}^{F}}^{T_{2}^{F}} E(u) du \middle| \mathcal{F}_{t} \right ]    \,.
   \end{equation}
 Using the explicit solution
      \begin{equation}
         Y_{i}(u) 
         = Y_{i}(t) \e^{-\lambda_{i}(u-t)}
          + \int_{t}^{u} \sigma_{i}(s) \e^{-\lambda_{i}(u-s)} dL_{i}(s)
       \end{equation}
 for the Ornstein-Uhlenbeck components $Y_i$, $i=1,...n$, a straightforward computation of the conditional expectation in \eqref{futures} yields the following price of futures contracts in the continuous-time spot model :
    \begin{prop}
     \label{prop:future dynamics}
     The price $F(t, T_{1}^{F}, T_{2}^{F})$ at time $t$ of an electricity futures with delivery period $[T_{1}^{F}, T_{2}^{F}]$ is given by
 \begin{align}
       F(t, T_{1}^{F}, T_{2}^{F})
       &= \sum_{i=1}^{n} Y_{i}(t) \frac{1}{T_{2}^{F} - T_{1}^{F}} 
                \int_{T_{1}^{F}}^{T_{2}^{F}} \Lambda_{i}(u) \e^{-\lambda_{i}(u-t)} du
       \nonumber \\
       &+ \frac{1}{T_{2}^{F} - T_{1}^{F}} \int_{T_{1}^{F}}^{T_{2}^{F}} \int_{t}^{u} \int_{\mathbb{R}^{+}} 
                \sigma_{i}(s) \Lambda_{i}(u) \e^{-\lambda_{i}(u-s)} z  \nu_{i}(dz) ds du
     \end{align}
   for $0 \leq t \leq T_{1}^{F}$, and
       \begin{align}
       F(t, T_{1}^{F}, T_{2}^{F})
       = \frac{1}{T_{2}^{F} - T_{1}^{F}} \int_{T_{1}^{F}}^{t} E(u) du
       &+ \sum_{i=1}^{n} Y_{i}(t) \frac{1}{T_{2}^{F} - T_{1}^{F}} 
                \int_{t}^{T_{2}^{F}} \Lambda_{i}(u) \e^{-\lambda_{i}(u-t)} du
       \nonumber \\
       &+ \frac{1}{T_{2}^{F} - T_{1}^{F}} \int_{t}^{T_{2}^{F}} \int_{t}^{u} \int_{\mathbb{R}^{+}} 
                \sigma_{i}(s) \Lambda_{i}(u) \e^{-\lambda_{i}(u-s)} z  \nu_{i}(dz) ds du
     \end{align}
   for $T_{1}^{F} \leq t \leq T_{2}^{F}$.
   \end{prop}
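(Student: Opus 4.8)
The plan is to start directly from the definition \eqref{futures} of the futures price as the conditional expectation of the average spot price over the delivery period, substitute the spot model \eqref{spotmodel} together with the explicit solution of the Ornstein--Uhlenbeck components $Y_i$, and then evaluate the resulting conditional expectation term by term.

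First I would interchange the conditional expectation with the deterministic time-average $\frac{1}{T_{2}^{F}-T_{1}^{F}}\int_{T_{1}^{F}}^{T_{2}^{F}}\cdot\,du$ and with the finite sum $\sum_{i=1}^{n}$. This is the only place where integrability really has to be checked: since the $\Lambda_{i}$ and $\sigma_{i}$ are deterministic and bounded on the finite horizon $[0,T]$, the $\lambda_{i}$ are positive, and the increasing Lévy processes $L_{i}$ are (implicitly) assumed to have finite first moment, i.e. $\int_{\R^{+}} z\,\nu_{i}(dz)<\infty$ — which is exactly what makes \eqref{futures} finite — a conditional Fubini argument legitimizes the interchange. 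After this one is reduced to computing $\E[E(u)\mid\mathcal{F}_{t}]=\sum_{i=1}^{n}\Lambda_{i}(u)\,\E[Y_{i}(u)\mid\mathcal{F}_{t}]$ for $u$ in the delivery window.

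For $\E[Y_{i}(u)\mid\mathcal{F}_{t}]$ with $u\ge t$ I would use the representation $Y_{i}(u)=Y_{i}(t)\e^{-\lambda_{i}(u-t)}+\int_{t}^{u}\sigma_{i}(s)\e^{-\lambda_{i}(u-s)}\,dL_{i}(s)$. The first summand is $\mathcal{F}_{t}$-measurable and passes through the conditioning unchanged. For the stochastic integral, using that $L_{i}$ has independent increments with respect to the filtration generated by the $L_{j}$'s and predictable compensator $\nu_{i}(dt,dz)=dt\,\nu_{i}(dz)$, the compensation formula gives the deterministic value $\int_{t}^{u}\int_{\R^{+}}\sigma_{i}(s)\e^{-\lambda_{i}(u-s)}\,z\,\nu_{i}(dz)\,ds$. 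Substituting this back, interchanging the $du$ and $ds$ integrations once more by Fubini, and collecting terms yields the asserted formula in the case $0\le t\le T_{1}^{F}$, where the entire delivery period lies in the future.

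For $T_{1}^{F}\le t\le T_{2}^{F}$ I would split $\int_{T_{1}^{F}}^{T_{2}^{F}}=\int_{T_{1}^{F}}^{t}+\int_{t}^{T_{2}^{F}}$: on $[T_{1}^{F},t]$ the spot price $E(u)$ is already $\mathcal{F}_{t}$-measurable, so conditioning leaves it untouched and produces the realized-delivery term $\frac{1}{T_{2}^{F}-T_{1}^{F}}\int_{T_{1}^{F}}^{t}E(u)\,du$, while on $[t,T_{2}^{F}]$ the computation is verbatim the one above. The only genuine obstacle is the careful justification of the conditional Fubini interchange and of the conditional expectation of the stochastic integral against $L_{i}$ (for which one also wants to note that the representation of $Y_i$ and the compensator formula are applied on the interval $[t,u]$, so the initial condition $Y_i(0)=y_i$ plays no role); once these are in place, the remainder is just algebraic rearrangement.
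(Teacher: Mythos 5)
Your proposal is correct and follows exactly the route the paper indicates: the paper's own ``proof'' is just the remark that the formula follows from a straightforward computation of the conditional expectation in \eqref{futures} using the explicit solution of the Ornstein--Uhlenbeck components, and your write-up simply supplies the details (Fubini interchange, the compensation formula for the integral against $L_i$, and the split of the delivery period at $t$ in the second case) that the authors leave implicit.
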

   
Based on this continuous-time spot and futures price model, we now construct a discrete-time electricity market model that fits into our framework by sampling the continuous-time processes at finitely many trading times $0=t_0, t_1,...,T$, i.e.~our hedge instruments $S^j$, $j=1,...d$, are given by futures price processes of the form
\begin{equation}
S^j_{k}:=F^j(t_k, T_{1}^{F^j}, T_{2}^{F^j}) \quad \text{for}\quad 0\le t_k\le T_{2}^{F^j}\le T \,.
\end{equation}
In the following we always assume that delivery period times are part of the discrete time grid, i.e.~$T_{1}^{F}, T_{2}^{F}\in \{t_0, t_1,...,T\}$. After the maturity $T_{2}^{F}$, the futures contract ceases to exist and trading is not possible anymore. During the delivery period $[T_{1}^{F},T_{2}^{F}]$, depending on the conventions of the exchange, trading is either not possible at all or very illiquid. We capture this feature by specifying high liquidity costs during $[T_{1}^{F},T_{2}^{F}]$, with the impossibility of trading as the limit case when liquidity costs tend to infinity. Before the delivery period, one typically observes on electricity markets that futures become the more liquid the shorter the remaining time to delivery period is. We capture this behavior by the following liquidity structure $\varepsilon^{j}$ for the futures $F^j$, $j=1,...d$: 
%\oldver{which is exponentially decreasing in time until the start of the delivery period and then jumps to a constant (high) level during the delivery period:}

\begin{align} \label{TimeLiquidity}
  \varepsilon^{j}_{t}
  &= a_{j} (1 - \exp(-(T_{1}^{F^{j}} - t))) + \delta_j
  \quad , \quad
  a_{j}
  = M_{j} \frac{1}{1 - \exp(-T_{1}^{F^{j}})}
  \quad \text{ for } 0 \leq t \leq T_{1}^{F^{j}}\,,
  \nonumber \\
  \varepsilon^{j}_{t}
  &= N_{j}
  \quad \text{ for } T_{1}^{F^{j}} < t \leq T_{2}^{F^{j}}\,. 
\end{align}
The liquidity structure $\varepsilon^{j}$ for a future $F^j$ thus starts from a constant $M_j>0$ at time $0$ and decreases exponentially in time until the start of the delivery period to a level $\delta_j>0$.  During the delivery period it then jumps to a constant (high) level $N_j>0$.

Further, in our simulation study we compare the time varying liquidity structure in \eqref{TimeLiquidity} with a constant liquidity structure given by
\begin{align} \label{ConstantLiquidity}
  \varepsilon^{j}_{t}
  = M_{j} 
  \quad \text{ for } 0 \leq t \leq T_{1}^{F^{j}}\,,
   \quad \quad
  \varepsilon^{j}_{t}
  = N_{j}
  \quad \text{ for } T_{1}^{F^{j}} < t \leq T_{2}^{F^{j}}\,. 
\end{align}
for $M_j>0$ and $N_j>0$.
   
\subsubsection{LRM-strategies of electricity call options}   \label{simulation}

In the electricity market model specified in Subsection~\ref{model}, we now intend to compute a LRM-strategy of a financially settled Asian call option written on an electricity future with delivery period $[T_{1}^{c},T_{2}^{c}]$ for $0<T_{1}^{c}< T_{2}^{c}\le T$, i.e.~the claim is given by $H = \bar Y_{T}$ with  
   \begin{equation}\label{calloption}
    \bar Y_{T} = \left( \frac{1}{T_{2}^{c} - T_{1}^{c}} \int_{T_{1}^{c}}^{T_{2}^{c}} E(u) du - K \right)^{+}
   \end{equation}
for some strike price $K$. In the following we will always assume that the option maturity is equal to the terminal time horizon: $T_{2}^{c} = T$. 

 \par
We will analyze and compare various specifications where the investor can hedge in two different futures $F^{1}$, $F^{2}$ with corresponding delivery periods $[ T_{1}^{F^{1}}, T_{2}^{F^{1}} ]$ and  $[ T_{1}^{F^{2}}, T_{2}^{F^{2}} ]$, respectively, where we assume $T_{2}^{F^{1}} \leq T_{2}^{F^{2}} \leq T$ and $T_{1}^{F^{1}} \neq T_{1}^{F^{2}}$.\footnote{Basically one needs that either $T_{2}^{F^{1}} \neq T_{2}^{F^{2}}$ or $T_{1}^{F^{1}} \neq T_{1}^{F^{2}}$ so that the conditional Cauchy-Schwarz inequality is strict. See Remark \ref{rem:the 2 dim F property}.} In this situation, Corollary~\ref{cor:existence} ensures the existence of a LRM-strategy under liquidity costs. Indeed,  from Proposition~\ref{prop:independent increments} it is clear that both the bounded mean-variance tradeoff and the $F$-diagonal condition hold for the active assets in each period by the fact that the futures have independent increments. Moreover, by Proposition \ref{prop:the reduced $F$-property for i
 ndependent increments} and Remark \ref{rem:the 2 dim F property}, it remains to check if the conditional Cauchy-Schwarz-Inequality is strict, i.e. if for each $k\in\{0,...,T_{2}^{F^{1}}\}$ the active hedge instruments $F^{1}$ and $F^{2}$ fulfill
   \begin{equation}
     \condCov{ \Delta F_{k+1}^{1}, \Delta F_{k+1}^{2} }^{2}
     <
     \condVar{ \Delta F_{k+1}^{1} }
     \condVar{ \Delta F_{k+1}^{2} }\,,
   \end{equation}
which ensures that the inverse matrix $\tilde F_{k}^{-1}$ exists and additionally the $F$-property holds. The CS-inequality is indeed strict since $T_{1}^{F^{1}} \neq T_{1}^{F^{2}}$ and this ensures that $\P(F_{k+1}^{1} = a F_{k+1}^{2}) < 1$ for any constant $a \in \mathbb{R}$.\footnote{That means, both futures are linearly independent with positive probability.} So by Corollary~\ref{cor:existence}, there exists a LRM-strategy $\hat \varphi = ( \hat X, \hat Y )$ under illiquidity of the form $\hat X_{T+1} = 0$, $\hat Y_{T} = H$ and  $\hat X = (\bar 0, \tilde X)$ with $\bar 0 = (0, \dots, 0) \in \mathbb{R}^{l_{k}}$ and 
    \begin{equation}
      \tilde X_{k+1} 
      = \tilde F_{k}^{-1} \tilde b_{k}
      \quad \Pas
    \end{equation} 
in $\mathbb{R}^{d-l_{k}}$ for $k \in \{ 0, \dots, T-1 \}$. Note that the matrix $\tilde F_{k}^{-1}$ is $2 \times 2$-dimensional for $k \in \{ 0, \dots, T_{2}^{F^{1}} - 1 \}$ and $1$-dimensional for $k \in \{ T_{2}^{F^{1}}, \dots, T_{2}^{F^{2}} - 1 \}$.
\\ \\
To compute the optimal strategy $\tilde X$ one needs to compute conditional expectations of the form $\E[ Y | X ]$ for square integrable random variables $X$ and $Y$. A popular method to compute such conditional expectations numerically, which we also employ in the following, is the \textit{least-squares Monte Carlo} (LSMC) method first used in finance by \cite{longstaff.schwartz:2001} for the valuation of American options. We do not go into further details of the LSMC method, but just mention that we use indicator functions constructed via the \textit{binning} method as basis functions. We refer to \cite{fries:2007} for a nice introduction to the LSMC method.
\par
In our $2$-dimensional example we need to simulate,
\begin{align}
  \tilde X_{T+1} 
  &= 0
  \nonumber \\
  \tilde X_{k+1}
  &= \frac{1}{A_{k; 2}} b_{k; 2}
       \quad \text{ for } k \in \{ T_{2}^{F^{1}}, \dots, T_{2}^{F^{2}} - 1  \}
  \nonumber \\%
  \tilde X_{k+1}
  &= (\tilde X_{k+1}^{1}, \tilde X_{k+1}^{2})
  \quad \text{ for } k \in \{ 0, \dots, T_{2}^{F^{1}} - 1 \}, \quad \text{ where }
  \nonumber \\%
  \tilde X_{k+1}^{1}
  &= \frac{1}{A_{k; 1}A_{k; 2} - | D_{k; 1, 2} |^{2} } ( A_{k; 2} b_{k; 1} - D_{k; 1, 2} b_{k; 2} )
  \nonumber \\
  \tilde X_{k+1}^{2}
  &= \frac{1}{A_{k; 1}A_{k; 2} - | D_{k; 1, 2} |^{2} } ( A_{k; 1} b_{k; 2} - D_{k; 1, 2} b_{k; 1} )\,.
\end{align} 
To implement the LSMC-method one needs to ensure that all random variables in the conditional expectations are square integrable. This is guaranteed by  Corollary~\ref{cor:LSMC integrability condition} below, which is mostly based on Lemma \ref{lem:the alpha, beta, gamma terms}. 
For Corollary \ref{cor:LSMC integrability condition}, we use the notation of Section~\ref{sec:the general case} where $\tilde S=(\tilde S^{1}, \dots, \tilde S^{d})$ is the price process of the (extended) hedge instruments.

\begin{cor}
  \label{cor:LSMC integrability condition}
  Assume that the components of the marginal price process $\tilde S$ and the contingent claim $H$ are both in $\classlfourone$ as well as $\bar X_{T+1}=0$. Under the assumptions of Corollary \ref{cor:existence} there exists a LRM-strategy $\hat \varphi = ( \hat X, \hat Y )$ under illiquidity such that for some constant $C > 0$
  \begin{align}
       \E[ ( (\tilde F_{k}^{-1} \tilde b_{k})_{j} \Delta \tilde S_{k+1}^{j} )^{4} ]
       &\leq C ( \E | V_{k+1}(\hat \varphi) |^{4}
        + \sum_{i=1}^{d}\E | \hat X_{k+2}^{i} |^{4} )
       \\
       \E[ ( (\tilde F_{k}^{-1} \tilde b_{k})_{j} )^{4} ]
       &\leq C ( \E | V_{k+1}(\hat \varphi) |^{4}
        + \sum_{i=1}^{d}\E | \hat X_{k+2}^{i} |^{4} )     
     \end{align}
    for $k \in \{ 0, 1, \dots, T - 1 \}$ where $V_{k+1}(\hat \varphi) = \condEplus{ H - \sum_{m=k+2}^{T} \hat X_{m}^{*} \Delta \tilde S_{m} }$. In particular, all random variables in the conditional expectations in the terms $A_{k, j}$, $b_{k, j}$ and $D_{k, j, i}$ are square integrable for all $j=l_{k}+1, \dots, d$ and $k= 0, 1, \dots, T-1$.
  \end{cor}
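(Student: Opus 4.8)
The estimate is the $L^{4}$-counterpart of Lemma~\ref{lem:the alpha, beta, gamma terms}, now applied to the reduced matrix $\tilde F_{k}$ and vector $\tilde b_{k}$ of Section~\ref{sec:the general case}, and the plan is simply to repeat that lemma's argument with every exponent doubled. First I would write $(\tilde F_{k}^{-1}\tilde b_{k})_{j}=\sum_{i}\tilde F_{k;j,i}^{-1}(b_{k;i}^{0}+b_{k;i}^{\varepsilon})$, so that by the power--mean inequality $(\sum_{l=1}^{N}a_{l})^{4}\le N^{3}\sum_{l}a_{l}^{4}$ (with $N$ a combinatorial constant depending only on $d$) it suffices to bound $\E[|\tilde F_{k;j,i}^{-1}|^{4}|b_{k;i}^{0}|^{4}]$ and $\E[|\tilde F_{k;j,i}^{-1}|^{4}|b_{k;i}^{\varepsilon}|^{4}]$ for each $i$, since $\tilde F_{k}$ is invertible by positive definiteness of $\tilde F_{k}^{0}$. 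Throughout one uses that the $F$-property and the $F$-diagonal condition, which by the hypotheses of Corollary~\ref{cor:existence} hold for the active assets in each period, apply verbatim to the principal submatrix $\tilde F_{k}$, so that Lemma~\ref{lem:alpha beta terms boundedness} is available for it.

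For the $b^{0}$--part, two applications of the conditional Cauchy--Schwarz inequality give $|b_{k;i}^{0}|^{4}=|\condCov{V_{k+1},\Delta\tilde S_{k+1}^{i}}|^{4}\le|\condVar{V_{k+1}}|^{2}|A_{k;i}^{0}|^{2}$; since $|\tilde F_{k;j,i}^{-1}|^{4}|A_{k;i}^{0}|^{2}=\beta_{k;i,j}^{2}$ is the square of a quantity that is uniformly bounded by Lemma~\ref{lem:alpha beta terms boundedness}, and $|\condVar{V_{k+1}}|^{2}\le|\condE{|V_{k+1}|^{2}}|^{2}\le\condE{|V_{k+1}|^{4}}$ by $\Var(\cdot)\le\E|\cdot|^{2}$ and conditional Jensen, we get $\E[|\tilde F_{k;j,i}^{-1}|^{4}|b_{k;i}^{0}|^{4}]\le C\,\E|V_{k+1}(\hat\varphi)|^{4}$ after taking expectations. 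For the $b^{\varepsilon}$--part, conditional Cauchy--Schwarz gives $|b_{k;i}^{\varepsilon}|^{4}\le(\condE{|\varepsilon_{k+1}^{i}\tilde S_{k+1}^{i}|^{2}})^{2}(\condE{|\hat X_{k+2}^{i}|^{2}})^{2}$, and the identity $\condE{|\varepsilon_{k+1}^{i}\tilde S_{k+1}^{i}|^{2}}=|\varepsilon_{k+1}^{i}|^{2}F_{k;i,i}^{0}+|F_{k;i,i}^{\varepsilon}|^{2}$ already used in Lemma~\ref{lem:the alpha, beta, gamma terms}, together with $(\condE{|\hat X_{k+2}^{i}|^{2}})^{2}\le\condE{|\hat X_{k+2}^{i}|^{4}}$, bounds $|\tilde F_{k;j,i}^{-1}|^{4}|b_{k;i}^{\varepsilon}|^{4}$ by a constant times $(|\varepsilon_{k+1}^{i}|^{4}\beta_{k;i,j}^{2}+(\beta_{k;i,j}^{\varepsilon})^{2})\condE{|\hat X_{k+2}^{i}|^{4}}$, which is again a uniformly bounded coefficient (the factor $|\varepsilon_{k+1}^{i}|^{4}$ being bounded as $\varepsilon$ is deterministic on a finite grid) times $\condE{|\hat X_{k+2}^{i}|^{4}}$; taking expectations yields $\le C\,\E|\hat X_{k+2}^{i}|^{4}$. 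Summing over $i$ gives the second inequality, and the first follows by the tower property, $\E[((\tilde F_{k}^{-1}\tilde b_{k})_{j}\Delta\tilde S_{k+1}^{j})^{4}]=\E[((\tilde F_{k}^{-1}\tilde b_{k})_{j})^{4}\,\condE{|\Delta\tilde S_{k+1}^{j}|^{4}}]$, together with a uniform bound on $\condE{|\Delta\tilde S_{k+1}^{j}|^{4}}$.

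The ``in particular'' part is then a downward induction on $k$, exactly as in the proof of Theorem~\ref{theo:existence} but with $\classltwoone$ replaced by $\classlfourone$. Starting from $\hat X_{T+1}=0\in\classlfourone$ and $V_{T}(\hat\varphi)=H\in\classlfourone$, the two inequalities just established show at step $k$ that $\hat X_{k+1}^{j}\Delta\tilde S_{k+1}^{j}\in\classlfourone$ and $\hat X_{k+1}^{j}\in\classlfourone$ for all $j$, whence $V_{k}(\hat\varphi)=\condE{H-\sum_{m=k+1}^{T}\hat X_{m}^{*}\Delta\tilde S_{m}}\in\classlfourone$ because conditional expectation is an $L^{4}$-contraction; this closes the induction. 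Square-integrability of all random variables entering $A_{k;j}$, $b_{k;j}$, $D_{k;j,i}$ then follows: $\Delta\tilde S_{k+1}^{j}$, $\tilde S_{k+1}^{j}$ and $V_{k+1}(\hat\varphi)$ lie in $\classlfourone\subset\classltwoone$, while the bilinear terms $\tilde S_{k+1}^{j}\hat X_{k+2}^{j}$, $\Delta\tilde S_{k+1}^{j}\Delta\tilde S_{k+1}^{i}$ and $V_{k+1}(\hat\varphi)\Delta\tilde S_{k+1}^{j}$ lie in $\classltwoone$ by Cauchy--Schwarz, each being a product of two $\classlfourone$ random variables (this is precisely why $L^{4}$ rather than $L^{2}$ bounds on $\hat X$ are needed).

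The one step that requires something beyond bare $L^{4}$-integrability and the backward bookkeeping is the passage from the second inequality to the first, i.e.\ the uniform control of $\condE{|\Delta\tilde S_{k+1}^{j}|^{4}}$ in $k$ and $\omega$: in general this conditional fourth moment is neither deterministic nor a.s.\ bounded. In the setting where Corollary~\ref{cor:existence} is actually used --- the electricity futures of Section~\ref{sec:the energy market case}, which by Proposition~\ref{prop:independent increments} have independent increments --- it is a deterministic quantity, hence finite by $\tilde S^{j}\in\classlfourone$ and uniformly bounded over the finite time grid, and I would flag this explicitly as the structural input of the argument.
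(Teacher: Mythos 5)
Your proposal is correct and follows essentially the same route as the paper: both re-run the argument of Lemma~\ref{lem:the alpha, beta, gamma terms} with all exponents doubled, invoke Lemma~\ref{lem:alpha beta terms boundedness} for the uniform boundedness of the $\alpha$- and $\beta$-coefficients of the active assets, pass from conditional variances and second moments to fourth moments via $\Var(\cdot)\le\E|\cdot|^2$ and conditional Jensen, and then close a backward induction in $\classlfourone$ to obtain the square-integrability of all terms entering $A_{k;j}$, $b_{k;j}$, $D_{k;j,i}$. Where you go beyond the paper is in the passage from the second displayed inequality to the first: the paper simply asserts the $L^4$ bound ``by Lemma~\ref{lem:the alpha, beta, gamma terms} together with Lemma~\ref{lem:alpha beta terms boundedness}'', but squaring the conditional estimate in that lemma only controls $((\tilde F_{k}^{-1}\tilde b_{k})_{j})^{4}\,(\condE{|\Delta\tilde S_{k+1}^{j}|^{2}})^{2}$, whereas the tower property produces $((\tilde F_{k}^{-1}\tilde b_{k})_{j})^{4}\,\condE{|\Delta\tilde S_{k+1}^{j}|^{4}}$, and by Jensen the latter dominates the former, so the needed inequality does not follow from bounded mean-variance tradeoff alone. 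You correctly flag that an additional control of the conditional fourth moment of the increments is required and note that in the independent-increments setting of Section~\ref{sec:the energy market case}, where the corollary is actually applied, this quantity is deterministic and finite on the finite grid, which repairs the argument. This is a legitimate refinement of the paper's proof rather than a flaw in yours.
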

  \begin{proof}
  The existence of a LRM-strategy $\hat \varphi = ( \hat X, \hat Y )$ under illiquidity follows directly from Corollary \ref{cor:existence}. The fact that $V_{k+1}(\hat \varphi) = \condEplus{ H - \sum_{m=k+2}^{T} \hat X_{m}^{*} \Delta \tilde S_{m} }$ follows also directly from $\hat Y_{k}$ defined as in Corollary \ref{cor:existence}.
  \par
  By Lemma \ref{lem:the alpha, beta, gamma terms} together with Lemma \ref{lem:alpha beta terms boundedness} applied for the active assets at time $k \in \{ 0, 1, \dots, T-1 \}$, we get
  \begin{align}
       \E[ ( (\tilde F_{k}^{-1} \tilde b_{k})_{j} \Delta \tilde S_{k+1}^{j} )^{4} ]
       \leq C \E[ ( \condVar{ V_{k+1}(\hat \varphi) }
        + \sum_{i=1}^{d} \condE{ | \hat X_{k+2}^{i} |^{2} } )^{2} ]   \,.
   \end{align}
   Furthermore, using $\Var [X] \leq \E[X^2]$ we can estimate, 
   \begin{align}
       \E[ ( (\tilde F_{k}^{-1} \tilde b_{k})_{j} \Delta \tilde S_{k+1}^{j} )^{4} ]
       &\leq C \E[ ( \condE{ | V_{k+1}(\hat \varphi) |^{2} }
        + \sum_{i=1}^{d} \condE{ | \hat X_{k+2}^{i} |^{2} } )^{2} ]   
      \nonumber \\
      & \leq C \E[  \condE{ | V_{k+1}(\hat \varphi) |^{4} }
        + \sum_{i=1}^{d} \condE{ | \hat X_{k+2}^{i} |^{4} }  ] 
      \nonumber \\
      & = C ( \E | V_{k+1}(\hat \varphi) |^{4} 
        + \sum_{i=1}^{d} \E | \hat X_{k+2}^{i} |^{4} )
   \end{align}
   where for the last inequality we have used the conditional Jensen Inequality and for the equality we have applied the tower property. Analogously we also get the second inequality of the claim.
   \par
    This shows that,
   \begin{align}
       \E[ ( \hat X_{k+1}^{j} \Delta \tilde S_{k+1}^{j} )^{4} ]
       &\leq C ( \E | V_{k+1}(\hat \varphi) |^{4}
        + \sum_{i=1}^{d}\E | \hat X_{k+2}^{i} |^{4} )
       \\
       \E[ ( \hat X_{k+1}^{j} )^{4} ]
       &\leq C ( \E | V_{k+1}(\hat \varphi) |^{4}
        + \sum_{i=1}^{d}\E | \hat X_{k+2}^{i} |^{4} )     
     \end{align}
   for all $k=0, 1, \dots, T-1$, $j=l_{k}+1, \dots, d$. By the definition of $V_{k+1}(\hat \varphi)$ and since by assumption $H \in \classlfourone$ and $\hat X_{T+1} = 0$, one can argue recursively that both $\hat X_{k+1}^{j} \Delta \tilde S_{k+1}^{j}$ and $\hat X_{k+1}^{j}$ are in $\classlfourone$.  
   \par
    Furthermore, we have for some $j \in \{ l_{k}+1, \dots, d \}$ at time $k \in \{ 0, 1, \dots, T-1 \}$ for the term 
   \begin{equation}
     b_{k; j}^{0}
     = \condCov{ V_{k+1}(\hat \varphi), S_{k+1}^{j} }
     = \condE{ V_{k+1} S_{k+1}^{j} } - \condE{ V_{k+1} } \condE{ S_{k+1}^{j} }
   \end{equation}
   that $V_{k+1}(\hat \varphi) \in \classltwoone$, $S_{k+1}^{j} \in \classltwoone$ and $V_{k+1}(\hat \varphi) S_{k+1}^{j} \in \classltwoone$ since  $V_{k+1}(\hat \varphi) \in \classlfourone$, $\tilde S_{k+1}^{j} \in \classlfourone$ and by the Cauchy-Schwarz inequality. For the term 
   \begin{equation}
     b_{k; j}^{\varepsilon}
     = \condE{ \varepsilon_{k+1}^{j} S_{k+1}^{j} \hat X_{k+2}^{j} }
   \end{equation}
   we have $S_{k+1}^{j} \hat X_{k+2}^{j} \in \classltwoone$ since $\tilde S_{k+1}^{j} \in \classlfourone$, $\hat X_{k+2}^{j} \in \classlfourone$ and using the Cauchy-Schwarz inequality.
   \par
   So, all random variables in the conditional expectations for the term $b_{k; j}$ are square integrable. Analogously the same holds for the terms $A_{k; j}$ and $D_{k; j, i}$.
  \end{proof}

 We now come to the specification of the electricity market model for our simulation study. To this end, we consider the spot price model \eqref{spotmodel} with two OU factors ($n=2$) $Y_{1}$ the base regime and $Y_{2}$ the spike regime with strong upward moves followed by quick reversion to normal levels and constant seasonality function $\Lambda_{1} = \Lambda_{2} =1$. We set $Y_{1}(0) = Y_{2}(0) = 0.5$, and assume constant volatilities $\sigma^{1}=0.34, \sigma^{2}=0.01$ and mean reversion rates $\lambda_{1}=0.01, \lambda_{2}=0.1$. For the driving L\'evy processes we suppose that $L_{1}$ is a Gamma process where $L_{1}(t)$ has $\Gamma(\gamma^{1} t, \alpha^{1})$-distribution and $L_{2}$ a compound Poisson process with intensity $\gamma^{2}$ and $\exp(\alpha^{2})$-distributed jumps. We set $\gamma^{1} = \gamma^{2} = \alpha^{1} = 1$,  $\alpha^{2} = 0.1$. Both OU-processes are simulated using an Euler Scheme.\footnote{Note that in order to use the Least-squares Monte Carlo metho
 d for calculating conditional expectations for the simulation, we need to simulate $2$-dim. basis functions using both Markov processes $L^{1}$ and $L^{2}$.}
 
Moreover, we set the strike price $K=1.05$ in \eqref{calloption} and $\alpha=1$ in the performance criterium \eqref{eq:quadratic linear risk process new}, which means an equal concern between the risk from market price fluctuations and the cost of liquidity costs.

 We will simulate and analyze two different settings, each with various pairs of futures with different delivery periods as available hedge instruments for the call option. In the first setting we focus on hedging the option with \emph{various combinations of futures that cover the delivery period $[T_{1}^{c},T_{2}^{c}]$ of the option}. To this end we consider three futures $F^{1}, F^{2}, F^{3}$ with delivery periods $[ T_{1}^{F^{1}}, T_{2}^{F^{1}} ], [ T_{1}^{F^{2}}, T_{2}^{F^{2}} ], [ T_{1}^{F^{3}}, T_{2}^{F^{3}} ]$, respectively, where we set $T^c_{1} = T_{1}^{F^{1}} = T_{1}^{F^{2}} = 0.0125$, $T^c_{2} = T_{2}^{F^{2}} = T_{2}^{F^{3}} = 0.1$, $T_{2}^{F^{1}} = T_{1}^{F^{3}} = 0.05$. We consider both the time varying liquidity structure \eqref{TimeLiquidity}, where we set $M_{i}=0.005$, $N_{i}=2M_{i}$, $\delta_i = 0.000001$  and the constant liquidity structure \eqref{ConstantLiquidity}, where we set  $M_{i}=N_{i}=0.01$ for $i=1,2,3$. 
 %Moreover, both in \eqref{TimeLiquidity} and \eqref{ConstantLiquidity} we put $N_{i}=2M_{i}$ and $\delta_i = 0,0001$ for $i=1,2,3$. 
We compute the criteria $ T_{0}(\varphi) $, $\tilde T_{0}(\varphi) $, $L_{0}(\varphi)$, and $C_{0}(\varphi)$ for LRM-strategies $\varphi=(X,Y)$, where $\tilde T_{0}(\varphi) = \E [(C_{T}(\varphi) - C_{0}(\varphi))^{2}]$ is the quadratic hedge criterion, $L_{0}(\varphi) = \E[ \sum_{m=1}^{T} \Delta X_{m+1}^{*} [ S_{m}(\Delta X_{m+1}) - S_{m}(0) ] ]$ the liquidity costs, $ T_{0}(\varphi) = \tilde T_{0}(\varphi)  + L_{0}(\varphi)$ our combined LRM minimization criterion \eqref{eq:quadratic linear risk process}, and $C_{0}(\varphi)  = \E[ H - \sum_{m=1}^{T} (X_{m})^{*} \Delta S_{m} ]$ the cost for a strategy $\varphi$ at time $0$. In Tables~\ref{tab:constant liquidity phenom 1} and  \ref{tab:time varying liquidity phenom 1} the results are displayed for a LRM-strategy $\varphi^{L}=(X^{L},Y^{L})$ with time varying liquidity~\eqref{TimeLiquidity} and constant liquidity~\eqref{ConstantLiquidity}, respectively. In addition, we compute the results with the classical LRM-strategy $\varphi^{C}=(X^{C},Y^{C})$ with zero liquidity costs (i.e., $\varepsilon^{i} = 0$). Recall that the quantity $T_{0}$ is minimized by $\varphi^{L}$ and $\tilde T_{0}$ is minimized by $\varphi^{C}$. For comparison, we use the same trajectories in both cases.

The first observation that can be made is that the hedging costs and the corresponding minimization criterion indeed decrease in the number of available hedge instruments. Also, the initial cost for using the strategy $\varphi^{L}$ is more than using $\varphi^{C}$ since it will cost more to generate the optimal strategy $\varphi^{L}$ under liquidity costs. 
To focus on the hedge performance with two futures that cover the delivery period of the option we consider two examples. In the first one we consider the futures $F^{1}, F^{2}$ with overlapping delivery periods while in the second one, the futures $F^{1}, F^{3}$ (see Figure \ref{fig:example1_2}) have different delivery periods. From Tables \ref{tab:constant liquidity phenom 1} and  \ref{tab:time varying liquidity phenom 1} and by comparing the quantity $T_{0}(\varphi^{L})$ we see that the case with the futures $F^{1}, F^{3}$ performs better since they incur less cost. In Table \ref{tab:time varying liquidity phenom 1} with time-varying liquidity this is due to the fact that $F^{3}$ has shorter delivery period than $F^{2}$ and can be used for hedging longer in time. In Table \ref{tab:constant liquidity phenom 1} we see that in the case with constant liquidity, despite that $F^{2}$ has a delivery period perfectly coinciding with the option $H$ it is better to hedge with the tw
 o hedge instruments $F^{1}$ and $F^{3}$. By looking at the quantity $\tilde T_{0}(\varphi^{C})$ one can observe that also for the classical LRM-strategy under the classical LRM-criterion the futures $F^{1}$ and $F^{3}$ perform better, simply due to the increased dimension of the hedge instruments.
\par
Recall that our quadratic criterion balances low liquidity costs against poor replication. This can be seen for example in Tables \ref{tab:constant liquidity phenom 1} and  \ref{tab:time varying liquidity phenom 1}. Indeed, from our example the futures $F^{1}, F^{3}$ perform better with less cost $\tilde T_{0}(\varphi^{L})$ from market fluctuations but incurring more liquidity cost $L_{0}(\varphi^{L})$ than the futures $F^{1}, F^{2}$.
%Another aspect that we can observe from the two examples in Tables \ref{tab:constant liquidity phenom 1} and  \ref{tab:time varying liquidity phenom 1} is the balance between the quantities $\tilde T_{0}(\varphi^{L})$ and $L_{0}(\varphi^{L})$ of how the quantity $T_{0}(\varphi^{L})$ is minimized by the strategy $\varphi^{L}$ and recall that our quadratic criterion balances low liquidity costs against poor replication. Indeed, the futures $F^{1}, F^{3}$ perform better with less cost $\tilde T_{0}(\varphi^{L})$ from market fluctuations but incurring more liquidity cost $L_{0}(\varphi^{L})$ than the futures $F^{1}, F^{2}$.    
\par
 Note also, that Figure~\ref{fig:example1_1} corresponding to the result for $F^{2}$ in Table \ref{tab:constant liquidity phenom 1} confirms the numerical results of \cite{agliardi.gencay:2014} and \cite{rogers.singh:2010} who find that the optimal strategy under illiquidity is less volatile than the classical one. This is perfectly intuitive since changing position drastically incurs large liquidity cost. In Figure~\ref{fig:example1_2} one can observe that before the start of the delivery periods both futures are used actively, but after entering into the delivery period of $F^{1}$ then almost only the future $F^3$ is used for hedging since $F^3$ is more liquid than $F^{1}$ and expires later.
 
 %In Figure \ref{fig:example1_2} we observe that the future $F^{3}$ is used more for hedging than $F^{1}$ since is more liquid and expires later. 

%where for both we have one short future $F^{1}$ where in the one example we use the future $F^{2}$ and for the other the future $F^{3}$ which is shorter but less correlated with the option $H$ than $F^{2}$. We see by comparing the quantity $T_{0}(\varphi)$ that for hedging the option $H$ in the delivery period is better to use two short futures rather than one long and one short. The reason is that $F^{3}$ has shorter delivery period than $F^{2}$ and can be used for hedging longer in time. Note also that hedging with $F^{1}, F^{3}$ with constant liquidity parameter is even better than hedging with $F^{1}, F^{2}$ with time varying liquidity parameter but the performance of $F^{1}, F^{2}$ becomes much better than with constant liquidity parameter since $T_{0}(\varphi^{L})$ is less. 

In a second setting, we focus on the \emph{trade-off between liquidity costs and hedging performance} appearing in various hedge constellations. To this end we consider three futures $G^{1}, G^{2}, G^{3}$ with delivery periods $[ T_{1}^{G^{1}}, T_{2}^{G^{1}} ], [ T_{1}^{G^{2}}, T_{2}^{G^{2}} ], [ T_{1}^{G^{3}}, T_{2}^{G^{3}} ]$, respectively, and set $T_{1}^{c} = T_{1}^{G^{1}} = T_{1}^{G^{2}} = T_{2}^{G^{3}} = 0.05$, $T_{2}^{c} = T_{2}^{G^{2}} = 0.1$, $T_{2}^{G^{1}} = 0.075$, $T_{1}^{G^{3}} = 0.0125$. Otherwise, the model specifications remain the same as in the first setting above. We consider two examples, with one common future $G^{2}$, which has the same delivery period as the option $H$. From Tables \ref{tab:constant liquidity phenom 2} and \ref{tab:time varying liquidity phenom 2} we can observe that $G^{1}, G^{2}$ performs better than $G^{2}, G^{3}$ according to the quantity $T_{0}(\varphi^{L})$. From $\tilde T_{0}(\varphi^{C})$ we see that this is also the case in the
  classical setting.  This is mostly due to the fact that the future $G^{1}$ expires later than $G^{3}$ and its delivery period lies within the delivery period of the option. Note that by comparing the quantity $T_{0}(\varphi^{L})$ of both examples we observe that in Table \ref{tab:time varying liquidity phenom 2} the difference between them becomes less than in Table \ref{tab:constant liquidity phenom 2}. This is due to the fact that $G^{3}$ is more liquid than $G^{1}$ in the period $[0,0.0125]$ in this case and can be used for hedging at low liquidity cost. Therefore a correct specification of the term-structure of liquidity seems important.  
In Figure \ref{fig:example2} and Figure \ref{fig:example3} we display the strategies for one trajectory in both cases. In Figure~\ref{fig:example3_2} one can actually observe that $G^3$ is the more active hedge instrument in the period $[0,0.0125]$ where it is more liquid than the future $G^2$ in the case with time dependent liquidity.

    \begin{center}
   %\captionof{table}{Your caption here}
   %\label{tab:tab1}
   \scriptsize
   \begin{tabular}{ l | l r | l r | l r | l r }
   Hedging Instruments & $T_{0}(\varphi^{L})$ & $T_{0}(\varphi^{C})$  & $\tilde T_{0}(\varphi^{L})$ & $\tilde T_{0}(\varphi^{C})$ & $L_{0}(\varphi^{L})$ & $L_{0}(\varphi^{C})$ & $C_{0}(\varphi^{L})$ & $C_{0}(\varphi^{C})$   \\ \hline
   $F^{2}$ & 2.19E-3 & 4.79E-2 & 2.03E-3 & 3.40E-4 & 1.56E-4 & 4.76E-2 & 1.09E-2 & 9.29E-3    \\ \hline
   $F^{1}, F^{2}$ & 1.86E-3 & 3.64E-2 & 1.67E-3 & 2.92E-4 & 1.88E-4 & 3.61E-2 & 1.07E-2 & 9.19E-3    \\ \hline
   $F^{1}, F^{3}$ & 1.51E-3 & 1.59E-2 & 1.31E-3 & 2.20E-4 & 2.01E-4 & 1.57E-2 & 1.05E-2 & 8.92E-3    \\ \hline
   \end{tabular}
   \captionof{table}{Simulation results with constant liquidity parameter.}
   \label{tab:constant liquidity phenom 1}
 \end{center}
 %table time varying liquidity phenom 1
\begin{center}
   \scriptsize
   \begin{tabular}{ l | l r | l r | l r | l r}
   Hedging Instruments & $T_{0}(\varphi^{L})$ & $T_{0}(\varphi^{C})$  & $\tilde T_{0}(\varphi^{L})$ & $\tilde T_{0}(\varphi^{C})$ & $L_{0}(\varphi^{L})$ & $L_{0}(\varphi^{C})$ & $C_{0}(\varphi^{L})$ & $C_{0}(\varphi^{C})$   \\ \hline
   $F^{2}$ & 1.63E-3 & 4.11E-2 & 1.49E-3 & 3.40E-4 & 1.40E-4 & 4.08E-2 & 1.05E-2 & 9.29E-3   \\ \hline
   $F^{1}, F^{2}$ & 1.56E-3 & 3.58E-2 & 1.35E-3 & 2.92E-4 & 2.10E-4 & 3.55E-2 & 1.04E-2 & 9.19E-3    \\ \hline
   $F^{1}, F^{3}$ & 7.09E-4 & 1.28E-2 & 4.50E-4 & 2.20E-4 & 2.59E-4 & 1.26E-2 & 9.66E-3 & 8.92E-3    \\ \hline
   \end{tabular}
   \captionof{table}{Simulation results with time varying liquidity parameter.}
   \label{tab:time varying liquidity phenom 1}
 \end{center}
 %table constant liquidity phenom 2
    \begin{center}
   %\captionof{table}{Your caption here}
   %\label{tab:tab1}
   \scriptsize
   \begin{tabular}{ l | l r | l r | l r | l r }
   Hedging Instruments & $T_{0}(\varphi^{L})$ & $T_{0}(\varphi^{C})$  & $\tilde T_{0}(\varphi^{L})$ & $\tilde T_{0}(\varphi^{C})$ & $L_{0}(\varphi^{L})$ & $L_{0}(\varphi^{C})$ & $C_{0}(\varphi^{L})$ & $C_{0}(\varphi^{C})$   \\ \hline
   $G^{2}$ & 3.22E-3 & 2.30E-2 & 2.99E-3 & 7.75E-4 & 2.28E-4 & 2.23E-2 & 1.60E-2 & 1.41E-2    \\ \hline
   $G^{1}, G^{2}$ & 2.33E-3 & 8.03E-3 & 2.06E-3 & 5.21E-4 & 2.68E-4 & 7.51E-3 & 1.55E-2 & 1.39E-2    \\ \hline
   $G^{2}, G^{3}$ & 2.95E-3 & 1.52E-2 & 2.69E-3 & 7.12E-4 & 2.55E-4 & 1.45E-2 & 1.58E-2 & 1.40E-2    \\ \hline
   \end{tabular}
   \captionof{table}{Simulation results with constant liquidity parameter.}
   \label{tab:constant liquidity phenom 2}
 \end{center}
 %table time varying liquidity phenom 2
\begin{center}
   \scriptsize
   \begin{tabular}{ l | l r | l r | l r | l r}
   Hedging Instruments & $T_{0}(\varphi^{L})$ & $T_{0}(\varphi^{C})$  & $\tilde T_{0}(\varphi^{L})$ & $\tilde T_{0}(\varphi^{C})$ & $L_{0}(\varphi^{L})$ & $L_{0}(\varphi^{C})$ & $C_{0}(\varphi^{L})$ & $C_{0}(\varphi^{C})$   \\ \hline
   $G^{2}$ & 1.66E-3 & 1.45E-2 & 1.49E-3 & 7.75E-4 & 1.69E-4 & 1.37E-2 & 1.50E-2 & 1.41E-2    \\ \hline
   $G^{1}, G^{2}$ & 1.32E-3 & 4.64E-3 & 1.13E-3 & 5.21E-4 & 1.92E-4 & 4.12E-3 & 1.47E-2 & 1.39E-2    \\ \hline
   $G^{2}, G^{3}$ & 1.63E-3 & 1.25E-2 & 1.39E-3 & 7.12E-4 & 2.39E-4 & 1.18E-2 & 1.49E-2 & 1.40E-2    \\ \hline
   \end{tabular}
   \captionof{table}{Simulation results with time varying liquidity parameter.}
   \label{tab:time varying liquidity phenom 2}
 \end{center}
\begin{figure}[h]
\begin{subfigure}[c]{1.0\textwidth} %0.5\textwidth
\centering
\includegraphics[width=0.7\textwidth, height=9\baselineskip]{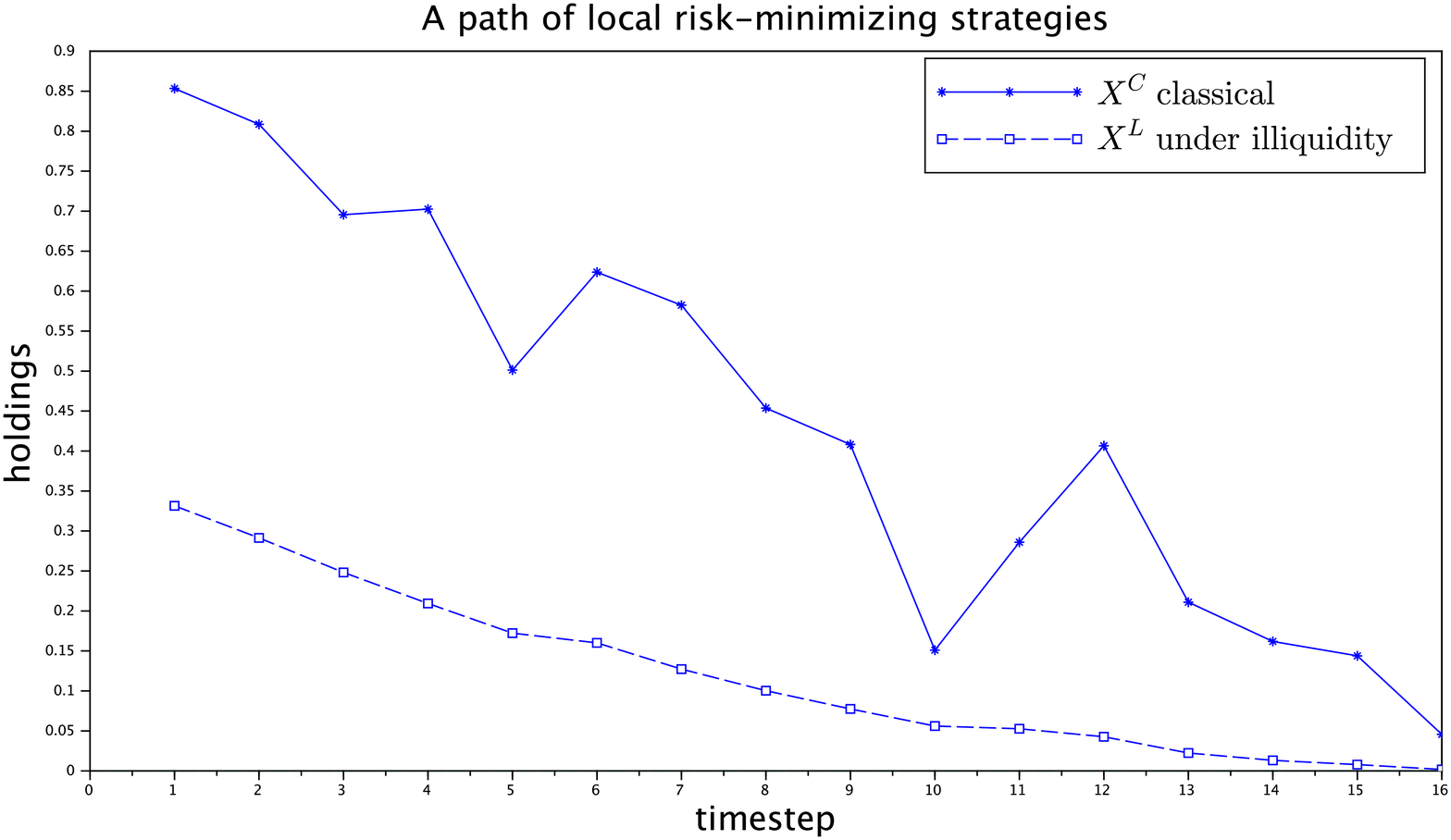}%1.1\textwidth
\subcaption{Hedging with only the Future $F^{2}$ which has the same delivery period as the claim $H$. The hedging strategy $X^{C}$ corresponds to the classical case without liquidity cost and $X^{L}$ to the case with constant liquidity structure (\ref{ConstantLiquidity}) with parameters $M_{2}=N_{2}=0.01$. Observe that the optimal LRM-strategy $X^{L}$ under illiquidity is less volatile than the classical LRM-strategy $X^{C}$.  }
\label{fig:example1_1}
\end{subfigure}
\begin{subfigure}[c]{1.0\textwidth}%0.5\textwidth
\centering
\includegraphics[width=0.7\textwidth, height=9\baselineskip]{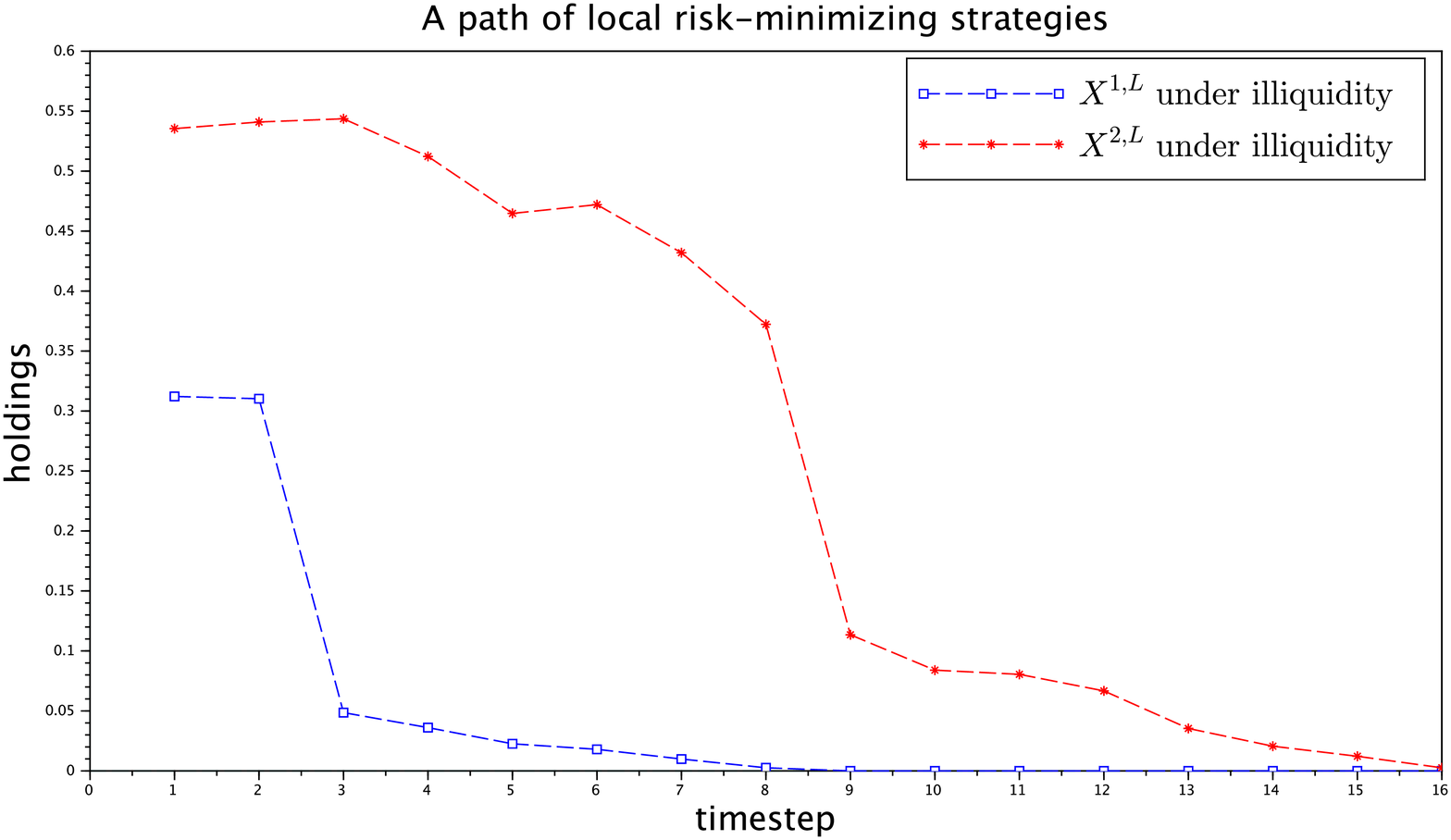}%1.1\textwidth
\subcaption{Hedging with the two futures $F^{1}$ and $F^{3}$ with consecutive delivery periods which together cover the delivery period of the claim $H$. Both futures have a time-varying liquidity structure (\ref{TimeLiquidity}) with parameters $M_{1}=M_{3}=0.005, N_{1}=2M_{1}, N_{3}=2M_{3}$. The optimal LRM-strategy $X^{1, L}$ under illiquidity corresponds to the future $F^{1}$ and $X^{2, L}$ to the future $F^{3}$ which is more liquid and expires later than $F^{1}$. $F^{3}$ is used more actively as can be observed from the plot. In the delivery period both futures become very illiquid and thus a rapid drop in the holdings can be observed. }
\label{fig:example1_2}
\end{subfigure}
\caption{Comparison of the sample path of optimal LRM-strategies under different liquidity structures and for different hedge instruments. All plots based on the same realization of the underlying.}
\label{fig:example1}
\end{figure}
 %Figure 2
\begin{figure}[h]
\begin{subfigure}[c]{1.0\textwidth}%0.5\textwidth
\centering
\includegraphics[width=0.7\textwidth, height=9\baselineskip]{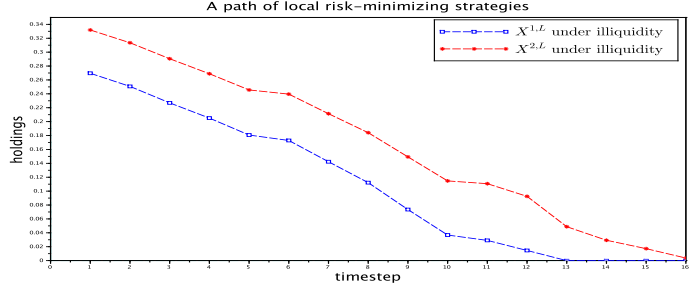}%1.1\textwidth
\subcaption{Hedging with the two Futures $G^{1}, G^{2}$ with constant liquidity structure (\ref{ConstantLiquidity}) with parameters $M_{i}=N_{i}=0.01$ for $i=1, 2$ .}
\label{fig:example2_1}
\end{subfigure}
\begin{subfigure}[c]{1.0\textwidth}%0.5\textwidth
\centering
\includegraphics[width=0.7\textwidth, height=9\baselineskip]{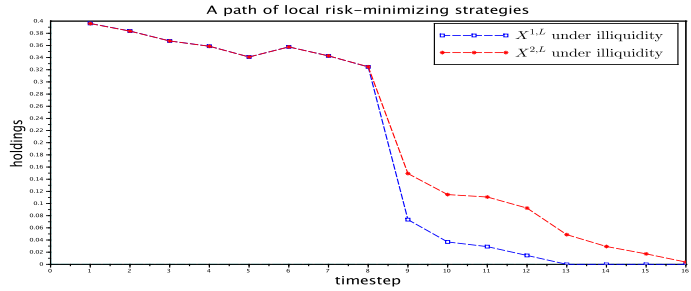}%1.1\textwidth
\subcaption{Hedging with two instruments using the Futures $G^{1}, G^{2}$ with time-varying liquidity structure (\ref{TimeLiquidity}) with parameters $M_{i}=0.005, N_{i}=2M_{i}$ for $i=1, 2$. The sudden drop in the holdings occurs when entering the delivery period where the futures are very illiquid.}
\label{fig:example2_2}
\end{subfigure}
\caption{Comparison of the sample path of optimal LRM-strategies under different liquidity structures but with the same hedge instruments. The two futures have overlapping delivery periods starting together but $G^{1}$ expires earlier. The delivery period of $G^{2}$ is the same as the one of the claim $H$. The LRM-strategies $X^{1, L}, X^{2, L}$ under illiquidity correspond to the future hedge instruments $G^{1}, G^{2}$ respectively.}
\label{fig:example2}
\end{figure}
%Figure 3
\begin{figure}[h]
\begin{subfigure}[c]{1.0\textwidth}%0.5\textwidth
\centering
\includegraphics[width=0.7\textwidth, height=9\baselineskip]{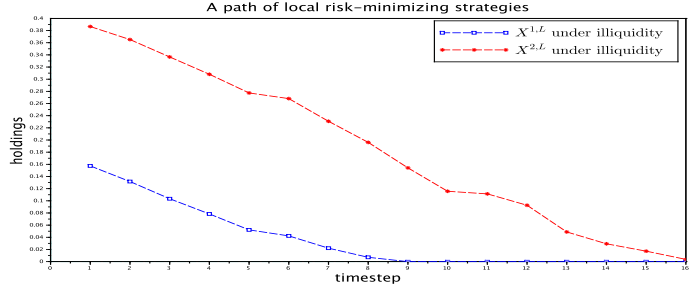}%1.1\textwidth
\subcaption{Hedging with the two Futures $G^{2}, G^{3}$ with constant liquidity structure (\ref{ConstantLiquidity}) with parameters $M_{i}=N_{i}=0.01$ for $i=2, 3$.}
\label{fig:example3_1}
\end{subfigure}
\begin{subfigure}[c]{1.0\textwidth}%0.5\textwidth
\centering
\includegraphics[width=0.7\textwidth, height=9\baselineskip]{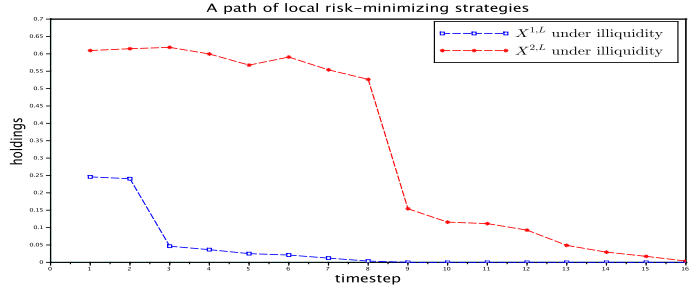}%1.1\textwidth
\subcaption{Hedging with the two Futures $G^{2}, G^{3}$ with time-varying liquidity structure (\ref{TimeLiquidity}) with parameters $M_{i}=0.005, N_{i}=2M_{i}$ for $i=2, 3$.}
\label{fig:example3_2}
\end{subfigure}
\caption{Comparison of the sample path of optimal LRM-strategies under different liquidity structures but with the same hedge instruments. The two futures have consecutive delivery periods with the one of $G^{3}$ starting earlier. The delivery period of $G^{2}$ and the claim $H$ coincide. The optimal LRM-strategy $X^{1, L}$ under illiquidity corresponds to the hedge instrument $G^{3}$ and  $X^{2, L}$ to $G^{2}$.}
\label{fig:example3}
\end{figure}

   %In the delivery period $[\tilde T_{1}, \tilde T_{2}]$ the future $F^{1}$ is more correlated with the option %$H$ than the future $F^{2}$. This can be seen for example in figure \ref%{fig:example1Table2And3DeltasIlliq} where the LOB is constant. Increasing the liquidity level for the %second future, then we see from figure \ref{fig:example1Table3DeltasIlliq} how the hedging strategy $%\Delta X^{2, L}$ decreases and $\Delta X^{1, L}$ increases since $F^{1}$ becomes cheaper than $F^%{2}$. On the other hand, one can observe from figure \ref{fig:example2Table2And3DeltasIlliq} that %despite of the correlation between $F^{1}$ and $H$ in $[\tilde T_{1}, \tilde T_{2}]$ the investor hedges %more with the second future since is much cheaper. This is due to the decreasing liquidity function $%\varepsilon^{2}$ for the future $F^{2}$ in the period $[\tilde T_{1}, \tilde T_{2}]$. In figure \ref%{fig:example2Table3DeltasIlliq} $F^{2}$ becomes more expensive and so $X^{2, L}$ becomes smoo
% ther %than before but still the investor hedges more with $F^{2}$.           

%Conclusion
\section{Conclusion}
\label{sec:conclusion}

In an arbitrage-free model framework, this paper has presented a new quadratic hedging criterion that targets at minimizing the risk against random fluctuations of the underlying stock price while simultaneously incurring low liquidity costs. It extends the quadratic local-risk minimization approach of \cite{schweizer:1988} in the spirit of \cite{rogers.singh:2010} and \cite{agliardi.gencay:2014}. It is mathematically tractable enough to allow for computable formulae. Under mild conditions, the optimization problem can be solved in closed-form. Furthermore, by embedding a multi-dimensional price process with different maturities in our setting it is possible to consider as one possible application the hedging of an Asian-style option in an electricity exchange using a variety of futures. In a simulation study we analyze hedge performance and cost under various pairs of futures with different delivery periods and liquidity levels, allowing us to investigate the tradeoff between hedge performance and liquidity cost.
%Our numerical results are in line with the findings in \cite{rogers.singh:2010} and \cite{agliardi.gencay:2014}\nils{Dann muessen wir auch sagen was die da finden. Koennen ihn aber auch raus lassen.}. \panos{vielleicht: ... where optimal strategies under illiquidity are less volatile than classical ones.}
%Also, as expected, the hedging costs decrease in the number of available hedge instruments. 
%\nils{Diesen naechsten Satz raus? Steht doch oben schon?}\panos{ja.. steht doch oben einigermassen. Also den naechsten Satz koennen wir weglassen. }Our hedging criterion balances low liquidity costs against poor replication.}

\section{Appendix}
\label{appendix}
\begin{proof}[Proof of Lemma \ref{lem:cost process martingale}]
 The arguments follow those in the proof of Lemma 1 in \cite{lamberton.pham.schweizer:1998}.
   \par Let $\varphi=(X,Y)$ be a LRM-strategy under illiquidity and fix some $k \in \{ 0, 1, \dots, T-1 \}$. Assuming that $C(\varphi)$ is not a martingale, we can choose a local perturbation $\varphi ' = (X',Y')$ of $\varphi$ at time $k$ by defining $X':=X$ and only modifying the cash holding $Y'$ at time $k$, by adding the conditional expectation of the incremental cost at time $k$ to $Y$, 
   \begin{equation}
     Y_{k}^{'} := \condE{ C_{T}(\varphi) - C_{k}(\varphi) } + Y_{k} \,.
   \end{equation} 
   This implies that $\condE{ C_{T}(\varphi') - C_{k}(\varphi') } = 0$ and 
   $\condVar{ C_{T}(\varphi') - C_{k}(\varphi') } = \condVar{C_{T}(\varphi) - C_{k}(\varphi) }$. Since $\E[X^2]= \Var [X] + (\E[X])^{2}$ for a random variable $X$, one can conclude that using the strategy $\varphi'$ the risk process becomes less, that is,
   \begin{equation}
     R_{k}(\varphi')
       \leq R_{k}(\varphi)  \,.
   \end{equation}
   Since $X':=X$, the liquidity costs of $\varphi'$ and $\varphi$ equal. This implies, 
   \begin{equation}
     T_{k}^{\alpha}(\varphi')
       \leq T_{k}^{\alpha}(\varphi)  \,.
   \end{equation}
   By the fact that $\varphi$ is a LRM-strategy under illiquidity, we must have equality on $T_{k}^{\alpha}$ which implies equality on $R_{k}$ i.e., $R_{k}(\varphi') = R_{k}(\varphi)$. So, the cost process $C(\varphi)$ must be a martingale.
\end{proof}

\begin{proof}[Proof of Lemma \ref{lem:quadratic linear risk process modified}]
   As in \cite{lamberton.pham.schweizer:1998} (see proof of Proposition 2), by using Lemma \ref{lem:cost process martingale} and the fact that 
   \begin{equation}
     \condE{ C_{T}(\varphi') - C_{k}(\varphi') }
     = \Delta C_{k+1}(\varphi'),
   \end{equation}
   which follows from the martingale property of $C(\varphi)$, one can conclude that 
   \begin{equation}
     R_{k}(\varphi ')
         =  \condE{ R_{k+1}(\varphi) }
             + \condE{ (\Delta C_{k+1}(\varphi '))^{2} }  \,.
   \end{equation}
   Furthermore since $\varphi '$ is a local perturbation of $\varphi$ at time $k$, we have 
   \begin{align}
     & \condE{ (X'_{k+2} - X'_{k+1})^{*} [ S_{k+1}(X'_{k+2} - X'_{k+1}) - S_{k+1}(0) ] }
      \nonumber \\ & \quad 
     = \condE{ (X_{k+2} - X'_{k+1})^{*} [ S_{k+1}(X_{k+2} - X'_{k+1}) - S_{k+1}(0) ] }
   \end{align}
   and the claim follows.
\end{proof}

\begin{proof}[Proof of Proposition \ref{prop:minimizing variance}]
  The proof follows the steps in the proof of Proposition 2 in \cite{lamberton.pham.schweizer:1998}. 
   \par Let us first show the $``\Leftarrow"$ direction of the proof. We want to show that $\varphi=(X,Y)$ is a LRM-strategy under illiquidity, according to Definition \ref{defi:local risk minimizing strategy under illiquidity}. So, fix some $k \in \{ 0, 1, \dots, T-1 \}$ and let $\varphi '=(X',Y')$ be a local perturbation of $\varphi$ at time $k$. 
   \par Since property \ref{prop:1} holds and $\varphi '$ a local perturbation of $\varphi$ at time $k$ then by Lemma \ref{lem:quadratic linear risk process modified} we have the equality
    \begin{align}
       T_{k}^{\alpha}(\varphi ')
         = & \condE{ R_{k+1}(\varphi) }
             + \condE{ (\Delta C_{k+1}(\varphi '))^{2} }
            \nonumber \\  
            &+ \alpha \condE{ (X_{k+2} - X'_{k+1})^{*} [ S_{k+1}(X_{k+2} - X'_{k+1}) - S_{k+1}(0) ] }
     \end{align}
     Moreover, from the definition of the conditional variance we have
     \begin{equation}
       \condE{ (\Delta C_{k+1}(\varphi '))^{2} }
       \geq \condVar{ \Delta C_{k+1}(\varphi ') }
     \end{equation}
     and so we can estimate
     \begin{align}
     T_{k}^{\alpha}(\varphi ')
         \geq \, & \condE{ R_{k+1}(\varphi) }
             + \condVar{ \Delta C_{k+1}(\varphi ') }
            \nonumber \\ 
            & + \alpha \condE{ (X_{k+2} - X'_{k+1})^{*} [ S_{k+1}(X_{k+2} - X'_{k+1}) - S_{k+1}(0) ] } \,.
   \end{align}
   Since $\varphi '$ a local perturbation of $\varphi$ at time $k$ then $X'_{k+2} = X_{k+2}$ and $Y'_{k+1} = Y_{k+1}$ and so we get
   \begin{align}
     \condVar{ \Delta C_{k+1}(\varphi ') }
     = \condVar{ C_{k+1}(\varphi ') }
     &= \condVar{ V_{k+1}(\varphi') - (X'_{k+1})^{*} \Delta S_{k+1} }
     \nonumber \\
     &= \condVar{ V_{k+1}(\varphi) - (X'_{k+1})^{*} \Delta S_{k+1} }
   \end{align}
 and we can conclude that
   \begin{align}
     T_{k}^{\alpha}(\varphi ')
         \geq \, & \condE{ R_{k+1}(\varphi) }
             + \condVar{ V_{k+1}(\varphi) - (X'_{k+1})^{*} \Delta S_{k+1} }
            \nonumber \\ 
            & + \alpha \condE{ (X_{k+2} - X'_{k+1})^{*} [ S_{k+1}(X_{k+2} - X'_{k+1}) - S_{k+1}(0) ] } \,.
   \end{align}
   Furthermore, since \ref{prop:2} holds, then 
    \begin{align}
    \label{eq:propProofEq1}
     T_{k}^{\alpha}(\varphi ')
         \geq \, & \condE{ R_{k+1}(\varphi) }
             + \condVar{ V_{k+1}(\varphi) - (X_{k+1})^{*} \Delta S_{k+1} }
            \nonumber \\ 
            & + \alpha \condE{ (X_{k+2} - X_{k+1})^{*} [ S_{k+1}(X_{k+2} - X_{k+1}) - S_{k+1}(0) ] } \,.
   \end{align}
On the other hand, we have by definition (see Equation (\ref{eq:quadratic linear risk process new}) )
   \begin{equation}
       T_{k}^{\alpha}(\varphi)
         = R_{k}(\varphi)
             + \alpha \condE{ \Delta X_{k+2}^{*} [ S_{k+1}(\Delta X_{k+2}) - S_{k+1}(0) ] }  \,.
   \end{equation}
   Since $C(\varphi)$ is a martingale, we get the representation (\ref{eq:cost process martingale}) for the risk process $R_{k}(\varphi)$. So we can conclude that
   \begin{align}
       \label{eq:propProofEq2}
       T_{k}^{\alpha}(\varphi)
         = & \condE{ R_{k+1}(\varphi) }
             + \condVar{ \Delta C_{k+1}(\varphi) }
            \nonumber \\ 
            & + \alpha \condE{ (X_{k+2} - X_{k+1})^{*} [ S_{k+1}(X_{k+2} - X_{k+1}) - S_{k+1}(0) ] }
     \end{align}
     Finally, since (\ref{eq:propProofEq1}) and (\ref{eq:propProofEq2}) hold then $T_{k}^{\alpha}(\varphi ') \geq T_{k}^{\alpha}(\varphi)$ and this shows the $``\Leftarrow"$ direction of the proof.
   \par Now,  assuming that $\varphi$ is a LRM-strategy under illiquidity i.e., $T_{k}^{\alpha}(\varphi ') \geq T_{k}^{\alpha}(\varphi)$ for any local perturbation $\varphi '$ of $\varphi$ at time $k$, we will show the $``\Rightarrow"$ direction of the proof. Property \ref{prop:1} holds from Lemma~\ref{lem:cost process martingale}. So it remains to show Property \ref{prop:2}.
   \par Since $C(\varphi)$ is a martingale and $\varphi '$ a local perturbation of $\varphi$ at time $k$, then from Lemma \ref{lem:quadratic linear risk process modified} we know that equation (\ref{eq:quadratic linear risk process modified}) holds. On the other hand, since (\ref{eq:propProofEq2}) holds (from the martingale property of $C(\varphi)$) then from the fact that $T_{k}^{\alpha}(\varphi ') \geq T_{k}^{\alpha}(\varphi)$ we have
   \begin{align}
      & \condE{ R_{k+1}(\varphi) }
             + \condE{ ( \Delta C_{k+1}(\varphi ') )^{2} }
            \nonumber \\
            & \quad + \alpha \condE{ (X_{k+2} - X'_{k+1})^{*} [ S_{k+1}(X_{k+2} - X'_{k+1}) - S_{k+1}(0) ] } 
      \nonumber \\ 
      &\geq \condE{ R_{k+1}(\varphi) }
             + \condVar{ \Delta C_{k+1}(\varphi) }
            \nonumber \\ 
            & \quad + \alpha \condE{ (X_{k+2} - X_{k+1})^{*} [ S_{k+1}(X_{k+2} - X_{k+1}) - S_{k+1}(0) ] } 
   \end{align}
   and from the definition of the conditional variance we can conclude that
   \begin{align}
      & \condVar{ \Delta C_{k+1}(\varphi ') }
              + ( \condE{ \Delta C_{k+1}(\varphi ')  } )^{2}
           \nonumber \\ 
           & \quad + \alpha \condE{ (X_{k+2} - X'_{k+1})^{*} [ S_{k+1}(X_{k+2} - X'_{k+1}) - S_{k+1}(0) ] } 
      \nonumber \\ 
      &\geq 
             \condVar{ \Delta C_{k+1}(\varphi) }
             + \alpha \condE{ (X_{k+2} - X_{k+1})^{*} [ S_{k+1}(X_{k+2} - X_{k+1}) - S_{k+1}(0) ] } 
   \end{align}
    for all $X'_{k+1}$ and $Y'_{k}$. Fixing $X'_{k+1}$ and choosing $Y'_{k}$ as in the proof of Lemma \ref{lem:cost process martingale} the inequality still holds and the liquidity costs remain unchanged. Since this choice gives us $\condE{ \Delta C_{k+1}(\varphi ') }= 0$ (as in the proof of Lemma \ref{lem:cost process martingale}) and since $\varphi '$ a local perturbation of $\varphi$ at time $k$, we get the inequality  
    \begin{align}
       & \condVar{ V_{k+1}(\varphi) - (X'_{k+1})^{*} \Delta S_{k+1} }
       \nonumber \\
       & \quad \quad + \alpha \condE{ (X_{k+2} - X'_{k+1})^{*} [ S_{k+1}(X_{k+2} - X'_{k+1}) - S_{k+1}(0) ] }
       \nonumber \\ 
       \geq & \condVar{ V_{k+1}(\varphi) - (X_{k+1})^{*} \Delta S_{k+1} }
       \nonumber \\
       & \quad \quad + \alpha \condE{ (X_{k+2} - X_{k+1})^{*} [ S_{k+1}(X_{k+2} - X_{k+1}) - S_{k+1}(0) ] } \,.
     \end{align}
This shows that Property \ref{prop:2} holds and the proof is completed.
\end{proof}

\bibliographystyle{newapa}%dinat
\bibliography{hedgingUnderIlliquidity}

\end{document}